\newtheorem{theorem}{Theorem}[section]
\newtheorem{proposition}[theorem]{Proposition}
\newtheorem{lemma}[theorem]{Lemma}
\newtheorem{definition}{Definition}[section]
\newtheorem{remark}{Remark}
\newenvironment{owndesc}%
    {\begin{description}[leftmargin = 0.2cm, labelsep = 0.2cm]}
    {\end{description}}
\newcommand{\hmu}{\hat{\mu}}
\newcommand{\leaf}{\odot}
\newcommand{\spec}{\mathfrak{s}}
\newcommand{\dup}{\mathfrak{d}}
\newcommand{\transfer}{\mathfrak{t}}
\newcommand{\child}{\mathrm{ch}}
\newcommand{\tredge}{\mathcal{E}}
\renewcommand{\hat}{\widehat}
\renewcommand{\tilde}{\widetilde}
\renewcommand{\S}{\RT }
\renewcommand{\L}{L}
\newcommand{\ats}[2]{A({#1},{#2})}
\newcommand{\hs}{\hat{S}}
\newcommand{\M}{\mathcal{M}}
\newcommand{\sT}{\sigma_{\Th}}
\newcommand{\Th}{\ensuremath{T_{\mathcal{\overline{E}}}}}
\newcommand{\RT}{\ensuremath{\mathcal{R}}}
\newcommand{\lca}{\ensuremath{\operatorname{lca}}}
\providecommand{\keywords}[1]{\textbf{\textit{Keywords: }} #1}
\title{Reconstruction of time-consistent species trees} 
\author[1]{Manuel Lafond} 
\author[2,3]{Marc Hellmuth}
\affil[1]{Department of Computer Science, Universit\'e de Sherbrooke, 2500 Boul. de l'Universit\'e, Canada \\ 	
	Email: \texttt{manuel.lafond@USherbrooke.ca}}
\affil[2]{Institute	 of Mathematics and Computer Science, University of Greifswald, Walther-
  Rathenau-Strasse 47, D-17487 Greifswald, Germany  \\ 	
	Email: \texttt{mhellmuth@mailbox.org}}
\affil[3]{
	Saarland University, Center for Bioinformatics, Building E 2.1, P.O.\ Box 151150, D-66041 Saarbr{\"u}cken, Germany
	  }
\begin{document}

\date{\ }

\maketitle

\abstract{ 
The history of gene families -- which are equivalent to event-labeled gene trees -- can to some extent be reconstructed
		from empirically estimated evolutionary event-relations containing pairs of orthologous, paralogous or xenologous
    genes. The question then arises as whether inferred event-labeled gene trees are ``biologically feasible'' which is the case
	 	if one can find a species tree with which the gene tree can be reconciled in a time-consistent way. 

		In this contribution, we consider event-labeled gene trees that contain speciation, duplication as well as horizontal gene transfer
		and we assume that the species tree is unknown. We provide a cubic-time algorithm to decide whether a ``time-consistent'' binary species 
		for a given event-labeled gene tree exists and, in the affirmative case, to construct the species tree within the same time-complexity. 		
}
\smallskip

\noindent
\keywords{tree reconciliation; 
			    gene evolution; species evolution; horizontal gene transfer;  time-consistency; polynomial-time algorithm}

\sloppy

\section{Introduction}

Genes collectively  form  the  organism's  genomes and
can be viewed as ``atomic'' units whose evolutionary history forms
a tree. The  history of species, which is also a tree, and the history
of their genes  is intimately linked, since 
the gene trees evolve along the species tree.      
A detailed evolutionary scenario, therefore, consists of a gene tree,  a species tree and
a reconciliation map $\mu$ that describes how the gene tree is embedded into the species tree. 

A reconciliation map assigns vertices of the gene tree to the vertices
or edges in the species in such a way that (partial) ancestor relations given 
by the genes are preserved by the map $\mu$. This gives rise to 
three important events that may act on the genes through evolution:
\emph{speciation}, \emph{duplication}, and \emph{horizontal gene transfer (HGT)} \cite{Gray:83,Fitch2000}. 
Inner vertices of the species tree represent speciation events. 
Hence, vertices of the gene tree that are mapped to inner vertices in the species tree
underly a speciation event and are transmitted from the parent species into the daughter species. 
If two copies from a single ancestral gene are formed and reside in the same species, 
then a duplication event happened. Contrary, if one of the copies of a gene 
``jumps'' into a different branch of the species tree, then a HGT event happened. 
Since both, speciation and duplication events,  occur in between different speciation events,
such vertices of the gene trees are usually mapped to the edges of the species tree. 
The events speciation, duplication, and HGT classify pairs of genes as orthologs,
paralogs and xenologs, respectively \cite{Fitch2000}.
Intriguingly, these relations can
be estimated directly from sequence data using a variety of algorithmic
approaches that are based on the pairwise best match criterion \cite{GSH:19,GCG+18} and hence do
not require any \emph{a priori} knowledge of the topology of either the
gene tree or the species tree, see e.g.\
\cite{Roth:08,Altenhoff:09,Lechner:14,Altenhoff:16,Altenhoff2012, Lechner:11a,CBRC:02,RSLD15,Dessimoz2008,LH:92,tao2018novel,VV:19}.
Moreover, empirical estimated event-relations 
can then be used to infer the history of event-labeled gene trees 
\cite{lafond2015orthology,dondi2017approximating,LDEM:16,DEML:16,DONDI17,HHH+13,HSW:16,GAS+:17,GHLS:17} and, in some cases, also the species trees 
\cite{HW:16book,hellmuth2017biologically,HHH+12}.
This line of research, in particular, has been very successful for the reconstruction
of event-labeled gene trees and species trees based solely on the information of 
orthologous and paralogous gene pairs \cite{HLS+15}. 

In this paper, we assume that the gene tree $T$ and and the types of
evolutionary events on $T$ are known. 
For an event-labeled gene tree to be biologically feasible
there must be a putative ``true'' history that can explain the
inferred gene tree. However, in practice it is not possible to
observe the entire evolutionary history as e.g.\ gene losses
eradicate the entire information on parts of the history.
Therefore, the problem of determining whether an event-labeled gene tree is biologically feasible is reduced to the
problem of finding a valid reconciliation map, also known
as DTL-scenario \cite{THL:11,BAK:12,Doyon2010}. 
The aim is then to find the unknown species tree $S$ and 
reconciliation map between $T$ and $S$, if one exists. 
Not all event-labeled gene trees $T$, however,  are
biologically feasible in the sense that that there exists a species tree $S$
such that $T$ can be reconciled with $S$. In the absence of HGT, biologically
feasibility can be characterized in terms of ``informative'' triplets (rooted binary
trees on three leaves) that are displayed by the gene trees \cite{HHH+12}. In the
presence of HGT such triplets give at least necessary conditions for a gene tree
being biologically feasible \cite{hellmuth2017biologically}.

A particular difficulty that occurs in the presence of HGT is that gene trees with 
HGT must be mapped to species trees only in such a way that genes do not travel back in time.
To be more precise, the ancestor ordering of the vertices in a species tree 
give rise to a relative timing information of the species within the species trees. Within this context, 
speciation and duplication events can be considered as a vertical evolution, that is, 
the genetic material is transfered ``forward in time''. In contrast HGT, literally yield   
horizontal evolution, that is, genetic material is transferred such that a gene and its 
transferred copy coexist. 
N{\o}jgaard et al.\ \cite{nojgaard2018time} introduced  an  axiomatic  framework  
for  time-consistent reconciliation maps and characterize for given event-labeled gene trees $T$  
and a \emph{given} species  tree $S$  whether  there  exists  a  time-consistent  reconciliation  map or not. 
This characterization resulted in an $O(|V|\log|V|)$-time algorithm to construct 
a  time-consistent  reconciliation  map  if  one  exists. 

However, one of the crucial open questions that were left open within this
context is as follows: \emph{
For a given event-labeled gene whose
internal vertices are labeled by speciation, duplication and HGT,
does there exists a polynomial-time algorithm
to reconstruct the \emph{unknown} species tree together with a
time-consistent reconciliation map, if one
exists?}
In this contribution, we show that the answer to this problem is affirmative
and provide an $O(n^3)$ time algorithm with $n$ being 
the number of leaves of $T$ that allows is to  verify whether there is a time-consistent species $S$ for the event-labeled gene tree and, 
in the affirmative case, to construct $S$.

This paper is organized as follows. We provide in Section \ref{sec:prelim} all necessary definitions. 
	Moreover, we review some of the important results on gene and species tree, reconciliation maps and 
	time-consistency that we need here. In Section \ref{sec:gtc}, we formally introduce the gene tree consistency (GTC) problem, that 
	is, to find a time-consistent species for a given event-labeled gene tree. As a main result, we will
	see that it suffices to start with a  fully unresolved species tree that can then be stepwisely extended	
	to a binary species tree to obtain a solution to the GTC problem, provided a solution exists. 
	In Section \ref{sec:algoGTC}, we provide an algorithm to solve the GTC problem. 
For the design of this algorithm, we will utilize an auxiliary directed graph $\ats{T}{S}$ 
	based on a given event-labeled gene tree $T$ and a given species tree $S$.
	This type of graph was established in \cite{nojgaard2018time}. 
	N{\o}jgaard et al.\ \cite{nojgaard2018time} showed that there is time-consistent map between $T$ and $S$ if and only if $\ats{T}{S}$ 
	is acyclic. Our algorithm either reconstructs a species tree $S$ based on 
	the informative triplets that are displayed by the gene trees and that makes this graph  $\ats{T}{S}$ eventually acyclic
	or that returns that no solution exists. 
	The strategy of our algorithm is to construct $\ats{T}{S}$ starting with $S$ being a fully unresolved species tree
	and stepwisely resolve this tree in a way that it ``agrees'' with the informative triplets and 
	reduces the cycles in  $\ats{T}{S}$.

\section{Preliminaries}
\label{sec:prelim}

\subsubsection*{Trees, extensions and triplets}

Unless stated otherwise, all graphs in this work are assumed to be directed without explicit mention.  For a graph $G$, the subgraph induced by $X \subseteq V(G)$ is denoted $G[X]$.
For a subset $Q \subseteq V(G)$, we write $G - Q = G[V(G) \setminus Q]$. We will write $(a,b)$ and $ab$ for the edge that link $a,b\in V(G)$ of directed, resp., undirected graphs.

All trees in this work are rooted and edges are directed away from the root.
Given a tree $T$, a vertex $v \in V(T)$ is a \emph{leaf} if $v$ has out-degree
$0$, and an \emph{internal vertex} otherwise. We write $\L(T)$ to denote the set
of leaves of $T$.  A \emph{star tree} is a tree with only one internal vertex that is 
adjacent to the leaves.

We write $x \preceq_{T} y$ if $y$ lies on the unique path from 
the root to $x$, in which case $y$ is called a descendant of $x$ and $x$ is called an ancestor of $y$. We may also write $y \succeq_{T} x$ instead of $x \preceq_{T} y$. 
We use $x \prec_T y$  for $x \preceq_{T} y$ and $x \neq y$. In the latter case, 
$y$ is a \emph{strict ancestor} of $x$.
 If $x \preceq_{T} y$ or $y \preceq_{T} x$ the vertices 
$x$ and $y$ are \emph{comparable} and, otherwise, \emph{incomparable}. 
If $(x,y)$ is an edge in $T$, and thus, $y \prec_{T} x$, then $x$ is the \emph{parent}
of $y$ and $y$ the \emph{child} of $x$. We denote with $\child(x)$ the set of all children of $x$.

For a subset $X \subseteq V(T)$, the \emph{lowest common ancestor} $\lca_{T}(X)$ is the  unique $\preceq_T$-minimal vertex that is an 
ancestor of all vertices in $X$ in $T$. For simplicity, we often write $\lca_T(x,y)$ instead of $\lca_T(\{x,y\})$.

A vertex is \emph{binary} if it has $2$ children, and $T$ is
\emph{binary} if all its internal vertices are binary. A \emph{cherry} is an
internal vertex whose children are all leaves (note that a cherry may have more
than two children). A tree $T$ is \emph{almost binary} if its only non-binary
vertices are cherries. For $v \in V(T)$, we write $T(v)$ to denote the subtree
of $T$ rooted at $v$ (i.e. the tree induced by $v$ and its descendants).

\begin{definition}[Extension]
Let $x$ be a vertex of a tree $T$ with $\child(x) = \{x_1, \ldots, x_k\}$, $k\geq 3$ and suppose that  $X' \subset \child(x)$ is a strict subset of $\child(x)$. 

Then, the \emph{$(x, X')$ extension} modifies $T$ to the tree $T_{x,X'}$ as follows:  
If $|X'| \leq 1$, then put $T_{x,X'}=T$.
Otherwise, remove the edges $(x, x')$ for each $x' \in X'$ from $T$ 
		and add a new vertex $y$ together with the edges $(x,y)$ and $(y,x')$ for all $x'\in X'$ to obtain the tree $T_{x,X'}$.
\label{def:ext}
\end{definition}

Conversely, one can obtain $T$ from $T_{x,X'}$ by contracting
the edge $(x,y)$ with $y=\lca_T(X')$. Given two trees $T$ and $T'$, we say that $T'$ is a \emph{refinement} of $T$ if 
there exists a sequence of extensions that transforms $T$ into $T'$.

The \emph{restriction $T|_X$} of a tree $T$ to some subset $X \subseteq \L(T)$
is the the minimal subtree of $T$ that connects the leaves of $X$
from which all vertices with only one child have been suppressed, cf.\ \cite[Section 6.1]{semple2003phylogenetics}.

A \emph{rooted triplet}, or \emph{triplet} for short, is a binary tree with three leaves.  We write $ab|c$ to denote the unique triplet on leaf set $\{a,b,c\}$ in which the root is $\lca(a,c) = \lca(b,c)$.  We say that a tree $T$ \emph{displays} a triplet $ab|c$ if $a,b,c \in \L(T)$ and $\lca_T(a, b) \prec \lca_T(a,c) = \lca_T(b,c)$. 
We write $rt(T)$ to denote the set of rooted triplets that $T$ displays.
Given a set of triplets $R$, we say that $T$ \emph{displays} $R$ if $R
\subseteq rt(T)$. A set of triplets $R$ is \emph{compatible}, if there is a tree that displays $R$.
We also say that $T$ \emph{agrees} with $R$ if, for every
$ab|c \in R$, $ac|b \notin rt(T)$ and $bc|a \notin rt(T)$.
Note, the term ``agree'' is more general than the term ``display'' and ``compatible'', i.e., 
if $T$ displays $R$ (and thus, $R$ is compatible), then $T$ must agree with $R$. The converse, however, is not always true. 
To see this, consider the star tree $T$ , i.e., $rt(T) = \emptyset$, and let $R=\{ab|c,bc|a\}$. 
It is easy to verify that $R$ is incompatible since 
there cannot be any tree that displays both triplets in $R$. However, the set $R$ agrees with $T$.

We will consider rooted trees
$T=(V,E)$ from which particular edges are removed. Let $\tredge_T\subseteq E$ and
consider the forest $\Th\coloneqq (V,E\setminus \tredge_T)$. We can preserve the
order $\preceq_T$ for all vertices within one connected component of $\Th$ and
define $\preceq_{\Th}$ as follows: $x\preceq_{\Th}y$ iff $x\preceq_{T}y$ and
$x,y$ are in same connected component of $\Th$. Since each connected component
$T'$ of $\Th$ is a tree, the ordering $\preceq_{\Th}$ also implies a root
$\rho_{T'}$ for each $T'$, that is, $x\preceq_{\Th} \rho_{T'}$ for all $x\in
V(T')$. If $L(\Th)$ is the leaf set of $\Th$, we define $L_{\Th}(x) = \{y\in
L(\Th) \mid y\preceq_{\Th} x\}$ as the set of leaves in $\Th$ that are reachable
from $x$. Hence, all $y\in L_{\Th}(x)$ must be contained in the same connected
component of $\Th$. We say that the forest $\Th$ displays a triplet $r$, if $r$
is displayed by one of its connected components. Moreover, $rt(\Th)$ denotes
the set of all triplets that are displayed by the forest $\Th$.  
We simplify the notation a bit and write $\sT(u):=\sigma(L_{\Th}(u))$.

\subsubsection*{Gene and species trees}

Let $\Gamma$ and $\Sigma$ be a set of genes and a set of species, respectively.  Moreover, we assume to know the gene-species association, i.e., a surjective map  $\sigma: \Gamma \rightarrow \Sigma$.
A \emph{species tree} is a tree $S$ such that $\L(S) \subseteq \Sigma$.
A \emph{gene tree} is a  tree $T$ such that $\L(T) \subseteq \Gamma$. 
  Note that $\sigma(l)$ is defined for every leaf $l \in \L(T)$.  
  We extend $\sigma$ to interval vertices of $T$, and put 
$\sigma_T(v) = \{\sigma(l) : l \in \L(T(v))\}$ for an internal vertex $v$ of $T$.  We may drop the $T$ subscript whenever there is no risk of confusion.
We emphasize that species and gene trees need not to be binary.

Given a gene tree $T$, we assume knowledge of a labeling function $t : V(T) \cup
E(T) \rightarrow \{\leaf, \spec, \dup, \transfer\} \cup \{0, 1\}$. We require
that $t(v) \in \{\leaf, \spec, \dup, \transfer\}$ for all $v \in V(T)$ and $t(e)
\in \{0,1\}$ for all $e \in E(T)$. Each symbol represents a different vertex
type: $\leaf$ are leaves, $\spec$ are speciations, $\dup$ are duplications and
$\transfer$ indicates vertices from which a horizontal gene transfer started. Edges labeled by $1$ represent horizontal transfers
and edges labeled by $0$ represent vertical descent. 
Here, we always assume that only edges $(x,  y)$ for which $t(x) = \transfer$ might be
labeled as transfer edge; $t(x,y)=1$. We let $\tredge_T
= \{e \in E(T) \colon t(e) = 1\}$ be the set of transfer edges.
For technical reasons, we
also require that $t(u) = \leaf$ if and only if $u \in \L(T)$. 

We write $(T; t, \sigma)$ to denote a gene tree $T$ labeled by $t$ having gene-species mapping $\sigma$.

In what follows we will only consider labeled gene trees $(T;t,\sigma)$ that satisfy the following three axioms:
\begin{description}
\item[(O1)] Every internal vertex $v$ has outdegree at least $2$. 
\item[(O2)] Every transfer vertex $x$ has at least one transfer edge $e=(x,v)$ labeled $t(e)=1$, and at least one non-transfer edge $f=(x,w)$ labeled $t(e)=0$; 
\item[(O3)] \emph{\textbf{(a)}} 
If $x\in V$ is a speciation vertex with children $v_1,\dots,v_k$, $k\geq 2$,
then   $\sT(v_i) \cap \sT(v_j) =\emptyset$,  $1\leq i<j\leq k$;
	
\emph{\textbf{(b)}} 
If $(x,y) \in \tredge_T$, then 
	$\sT(x)\cap \sT(y) = \emptyset$.
\end{description}

These conditions are also called ``observability-axioms'' and are 
fully discussed in \cite{hellmuth2017biologically,nojgaard2018time}. 
We repeat here shortly the arguments to justify Condition (O1)-(O3). 
Usually the considered labeled gene trees are obtained from genomic sequence data. 
Condition (O1)  ensures that every inner vertex leaves a historical 
trace in the sense that there are at least two children
that have survived. If this were not the case, we would have no evidence that vertex $v$ ever exist.
Condition (O2) ensures that for an HGT event a historical trace remains of both the transferred and the non-transferred copy.
Furthermore, there is no clear evidence for a speciation vertex $v$ 
if it does not ``separate'' lineages, which is ensured by Condition (O3.a).
Finally (O3.b) is a simple consequence of the fact that if a transfer edge $(x,y)$
in the gene tree occurred, then the species $X$ and $Y$ that contain $x$ and $y$, respectively, 
cannot be ancestors of each other, as otherwise, the species $X$ and $Y$ would not coexist
(cf.\ \cite[Prop.\ 1]{nojgaard2018time}). 

We emphasize that Lemma 1 in \cite{nojgaard2018time} states that the leaf set	
		$L_1,\dots,L_k$ of the connected components $T_1,\dots,T_k$ of $\Th$
		forms a partition of $L(T)$, which directly implies that 
	$\sT(x) \neq \emptyset$ for all $x\in V(T)$.

\subsubsection*{Reconciliation maps and speciation triplets}

A \emph{reconciliation map from $(T;t,\sigma)$ to $S$} 
is a map $\mu : V(T) \rightarrow V(S)\cup E(S)$ that 
satisfies the following constraints for all $x\in V(T)$:  

\begin{description}
\item[(M1)] \emph{Leaf Constraint.}  If $x\in \Gamma$,   then $\mu(x)=\sigma(x)$. 		\vspace{0.03in}
\item[(M2)] \emph{Event Constraint.}
	\begin{itemize}
		\item[(i)]  If $t(x)=\spec$, then  
                  $\mu(x) = \lca_S(\sT(x))$.
		\item[(ii)] If $t(x) \in \{\dup, \transfer\}$, then $\mu(x)\in E(S)$. 
		\item[(iii)] If $t(x)=\transfer$ and $(x,y)\in \tredge_T$, 
						 then $\mu(x)$ and $\mu(y)$ are incomparable in $S$. 
		\item[(iv)] If $t(x)=\spec$, then  $\mu(v)$ and $\mu(u)$ are incomparable in $S$ for all distinct $u,v\in\child(x)$.	
 	\end{itemize} \vspace{0.03in}
\item[(M3)] \emph{Ancestor Constraint.}		\\	
		Let $x,y\in V$ with $x\prec_{\Th} y$. 
		Note, the latter implies that the path connecting $x$ and $y$ in $T$
		does not contain transfer edges.
		We distinguish two cases:
	\begin{itemize}
		\item[(i)] If $t(x),t(y)\in \{\dup, \transfer\}$, then $\mu(x)\preceq_S \mu(y)$, 
		\item[(ii)] otherwise, i.e., at least one of $t(x)$ and $t(y)$ is a speciation $\spec$, $\mu(x)\prec_S\mu(y)$.
	\end{itemize}
\end{description}
We call $\mu$ the \emph{reconciliation map} from $(T;t,\sigma)$ to $S$. 
The provided definition of a reconciliation map coincides with the one as given in 
in \cite{hellmuth2017biologically,nojgaard2018time,NGD+17} and is a 
natural generalization of the maps as in \cite{HHH+12,Doyon:09}
for the case, no HGT took place. In case that the event-labeling of $T$ is unknown, but a
species tree $S$ is given, the authors in  \cite{THL:11,BAK:12} gave an axiom set, 
called DTL-scenario, to reconcile $T$ with $S$. This
reconciliation is then used to infer the event-labeling $t$ of $T$.
Our axiom set for the reconciliation map is more general, nevertheless, 
equivalent to DTL-scenarios in case the considered gene trees are binary \cite{nojgaard2018time,NGD+17}.

The question arises when for a given gene tree $(T;t,\sigma)$ a species tree $S$ together with a reconciliation map $\mu$ from $(T;t,\sigma)$
to $S$ exists. An answer to this question is provided by 

\begin{definition}
Let $(T;t,\sigma)$ be a labeled gene tree 
The set $\RT (T;t,\sigma)$ is the set
of triplets $\sigma(a)\sigma(b)|\sigma(c)$ where $\sigma(a),\sigma(b),\sigma(c)$ are pairwise distinct
and either
\begin{enumerate}
	\item  $ab|c$ is a triplet displayed by $\Th$ and
					$t(\lca_{\Th} (a, b, c)) = \spec$ or 
	\item $a, b \in L(\Th (x))$ and $c \in L(\Th (y))$ for some transfer edge $(x,y)$ or $(y,x)$ in $\tredge_T$
\end{enumerate}
\label{def:informativeTriplets}
\end{definition}
 
\begin{theorem}[\cite{hellmuth2017biologically}]
 Let $(T;t,\sigma)$ be a labeled gene tree. 
Then, there is a species tree $S$ together with a reconciliation map $\mu$ from $(T;t,\sigma)$ to $S$ if and only if  $\RT (T;t,\sigma)$
is compatible. In this case, every species tree $S$ that displays
$\RT (T;t,\sigma)$ can be reconciled with  $(T;t,\sigma)$. 

Moreover, there is a polynomial-time algorithm  that returns a species tree $S$ for 
$(T;t, \sigma)$ together with a reconciliation map $\mu$ in
	polynomial time, if one exists and otherwise, returns that
	there is no species tree for $(T;t, \sigma)$.
\label{thm:SpeciesTriplets}
\end{theorem}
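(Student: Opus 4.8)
The plan is to prove the equivalence in both directions, with the ``moreover'' statement absorbed into the sufficiency argument, and then to read off the algorithm from the constructive proof. For \emph{necessity}, I would assume a species tree $S$ and a reconciliation map $\mu$ from $(T;t,\sigma)$ to $S$ are given and show that $S$ displays every triplet of $\RT(T;t,\sigma)$; compatibility of $\RT(T;t,\sigma)$ is then immediate, $S$ itself being a witness. I would treat the two cases of Definition~\ref{def:informativeTriplets} separately. For a speciation triplet arising from $ab|c$ displayed by $\Th$ with $x=\lca_{\Th}(a,b,c)$ and $t(x)=\spec$, the condition $\lca_{\Th}(a,b)\prec_{\Th}x$ forces $a,b$ to lie below one child $u$ of $x$ and $c$ below a distinct child $v$; axiom (M2.iv) makes $\mu(u)$ and $\mu(v)$ incomparable in $S$, while the ancestor constraint (M3) places $\sigma(a),\sigma(b)$ below $\mu(u)$ and $\sigma(c)$ below $\mu(v)$, so $S$ separates $\{\sigma(a),\sigma(b)\}$ from $\sigma(c)$ and hence displays $\sigma(a)\sigma(b)|\sigma(c)$. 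For a transfer triplet with $a,b\in L(\Th(x))$ and $c\in L(\Th(y))$ for a transfer edge, axiom (M2.iii) gives that $\mu(x),\mu(y)$ are incomparable, and the identical ancestor reasoning yields the separation.

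For \emph{sufficiency} (which also establishes the ``moreover'' claim), I would take an \emph{arbitrary} species tree $S$ displaying $\RT(T;t,\sigma)$ and construct $\mu$ explicitly: put $\mu(x)=\sigma(x)$ on leaves, $\mu(x)=\lca_S(\sT(x))$ at speciation vertices, and, at duplication and transfer vertices, place $\mu(x)$ on the edge of $S$ immediately above $\lca_S(\sT(x))$, extending $S$ by a root edge if necessary so that this edge always exists. Constraints (M1), (M2.i), (M2.ii) then hold by construction, and the ancestor constraint (M3) reduces to the monotonicity $\sT(x)\subseteq\sT(y)$ whenever $x\preceq_{\Th}y$, which gives $\lca_S(\sT(x))\preceq_S\lca_S(\sT(y))$; the strict cases (M3.ii) follow because a speciation vertex $y$ has children with disjoint and nonempty species sets by (O3.a), forcing strict containment $\sT(x)\subsetneq\sT(y)$ and, via the displayed speciation triplets, a strict $\prec_S$ drop of the corresponding lca.

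The crux, and the step I expect to be the main obstacle, is verifying the incomparability axioms (M2.iii) and (M2.iv) from the mere fact that $S$ \emph{displays} the triplets of $\RT(T;t,\sigma)$. For (M2.iv), given distinct children $u,v$ of a speciation vertex, (O3.a) yields $\sT(u)\cap\sT(v)=\emptyset$, but disjointness alone does not prevent $\lca_S(\sT(u))$ and $\lca_S(\sT(v))$ from being comparable; I would rule this out by invoking the speciation triplets that separate pairs drawn from $L_{\Th}(u)$ against leaves of $L_{\Th}(v)$ \emph{and vice versa}, which together pin $\lca_S(\sT(u))$ and $\lca_S(\sT(v))$ to incomparable positions, with the degenerate case of a child contributing a single species handled directly since distinct species are incomparable leaves of $S$. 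The argument for (M2.iii) is the analogue using the transfer triplets. Finally, for the algorithm: $\RT(T;t,\sigma)$ contains $O(n^{3})$ triplets and is computed by scanning all leaf triples of $T$; feeding it to the classical \textsc{Build} algorithm \cite{semple2003phylogenetics} decides compatibility and returns a displaying tree $S$ in polynomial time when one exists, after which the explicit map above produces $\mu$, whereas if \textsc{Build} reports incompatibility, necessity guarantees that no species tree for $(T;t,\sigma)$ exists.
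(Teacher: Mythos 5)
The paper does not prove this statement---it is imported verbatim from \cite{hellmuth2017biologically}---so there is no in-paper proof to compare against; your sketch is a faithful reconstruction of the standard argument from that reference. Your construction of $\mu$ (leaves via $\sigma$, speciations via $\lca_S(\sT(\cdot))$, duplications/transfers placed on the edge above $\lca_S(\sT(\cdot))$) is exactly the canonical map the paper recalls immediately after the theorem statement, your handling of (M2.iii)/(M2.iv) via the informative triplets is the right key step, and the necessity/sufficiency/\textsc{Build} structure is sound, so this is essentially the same approach.
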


It has been shown in \cite{hellmuth2017biologically}, that if there is any reconciliation map 
		from $(T;t,\sigma)$	to $S$, then there is always 
 	 a reconciliation map $\mu$ that additionally satisfies for all $u\in V(T)$ with $t(u)\in \{\dup,\transfer\}$:
	 \[\mu(u) = (v,\lca_S(\sT(u))\in E(S)\] where $v$ denotes the unique
   parent of  $\lca_S(\sT(u))$ in $S$. 

\begin{definition}	
	We define a simplified map $\hmu_{T,S}\colon V(T) \to V(S)$ that associates a vertex $v \in V(T)$ to the lowest common ancestor of $\sT(v)$, i.e., 
	\[ \hmu_{T, S}(v) \coloneqq lca_{S}(\sT(v))
\] 

	We call $\hmu_{T, S}$ an \emph{LCA-map}. 
\label{def:hmu}
\end{definition}

\begin{remark}
Note that if $v$ is a leaf of $T$, we have 	$\hmu_{T, S}(v) = \sigma(v)$.
	Moreover, the LCA-map $\hmu_{T, S}$ always exists and is uniquely defined, although there might be no 
	reconciliation map from $(T;t,\sigma)$ to $S$. 
\label{rem:hmu}
\end{remark}

\begin{figure}[tbp]
	\begin{center}
		\includegraphics[width=.8\textwidth]{./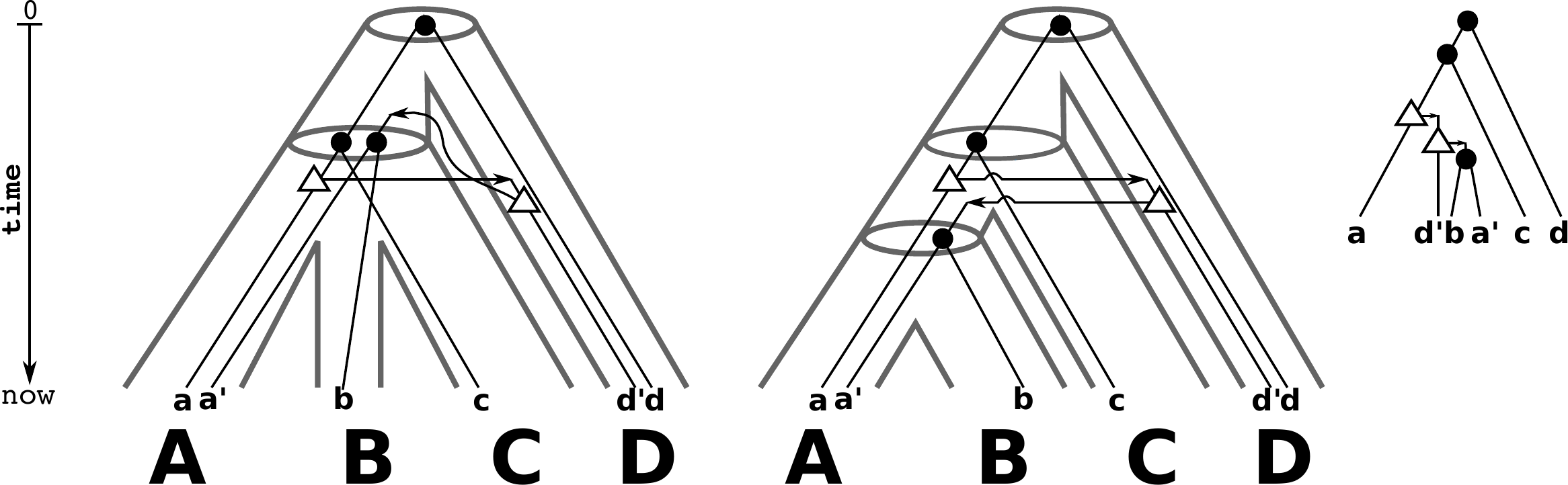}
	\end{center}
	\caption{Taken from \cite[Fig.\ 4]{hellmuth2017biologically}. 
		From the binary gene tree $(T;t,\sigma)$  (right)                                             
		we obtain the species triples $\S(T;t,\sigma) = \{AB|D,AC|D\}$.
		Note, vertices $v$ of $T$  with $t(v)=\spec$ and $t(v)=\transfer$
	  are highlighted by ``$\bullet$'' and ``$\triangle$'', respectively.
		Transfer edges are marked with an ``arrow''. 	                  
		Shown are two  (tube-like)  species trees  (left and middle) that 
		display $\S(T;t,\sigma)$. Thus, Theorem \ref{thm:SpeciesTriplets}	implies that
		for both trees a reconciliation map from $(T;t,\sigma)$ exists. 
	The respective  reconciliation maps 
		for $(T;t,\sigma)$ and the species tree are given implicitly by 
		drawing $(T;t,\sigma)$ within the species tree. 
		The left species tree $S$ is least resolved for $\S(T;t,\sigma)$. 
		The reconciliation map  from $(T;t,\sigma)$ to $S$ is unique,
		however,   not time-consistent. Thus, no time-consistent reconciliation
		between $T$ and $S$ exists at all. 
		On the other hand, for $T$ and the middle species tree (that is a refinement of $S$)
		there is a time-consistent reconciliation map. 
	}
\label{fig:least}
\end{figure}

We may write $\hmu, \hmu_{T}$ or $\hmu_{S}$ if $T$ and/or $S$ are clear from the context.

Note, however, that compatibility of  $\RT (T; t,\sigma )$ only provides a necessary condition for
a the existence of \emph{biologically feasible} reconciliation, i.e., maps that are additionally time-consistent. 
To be more precise:

\begin{definition}[Time Map]\label{def:time-map}
  The map $\tau_T\colon V(T) \to \mathbb{R}$ is a time map for the 
  rooted tree $T$ if $x\prec_T y$ implies $\tau_T(x)>\tau_T(y)$ for all 
  $x,y\in V(T)$. 
\end{definition}

\begin{definition}[Time-Consistent] \label{def:tc-mu} A reconciliation map $\mu$ from
  $(T;t,\sigma)$ to $S$ is \emph{time-consistent} if there are time maps
  $\tau_T$ for $T$ and $\tau_S$ for $S$ satisfying the
  following conditions for all $u\in V(T)$:
  \begin{description}
  \item[(B1)] If $t(u) \in \{\spec, \leaf \}$, then 
    $\tau_T(u) = \tau_S(\mu(u))$. \label{bio1}
  \item[(B2)] If $t(u)\in \{\dup,\transfer \}$ and, thus
    $\mu(u)=(x,y)\in E(S)$, \label{bio2} then
    $\tau_S(y)>\tau_T(u)>\tau_S(x)$. 
\end{description}
	If a time-consistent reconciliation map  from $(T;t,\sigma)$ to $S$ exists, 
			we also say that $S$ is a \emph{time-consistent species tree for}  $(T;t,\sigma)$.
\end{definition}

Figure \ref{fig:least} gives an example for two different species trees that 
both display $\S(T;t,\sigma)$ for which only one admits a time-consistent 
reconciliation map. Further examples can be found in 
\cite{hellmuth2017biologically,nojgaard2018time}. 
To determine whether a time-consistent map for a given gene and species tree
exists we will use an auxiliary graph as defined in \cite{nojgaard2018time}. 
We will investigate the structure of this graph in the remaining part of this section.

\subsubsection*{Auxiliary graph construction}

Let $(T;t,\sigma)$ be a labeled gene tree and $S$ be a species tree.
Let $\ats{T}{S}$ be the graph with vertex set $V(\ats{T}{S}) = V(T) \cup V(S)$, 
and edge set $E(\ats{T}{S})$ constructed from four sets as follows:
\begin{description} 
    \item[(A1)] 
    for each $(u, v) \in E(T)$, we have $(u', v') \in E(\ats{T}{S})$, where 
    
    \[
    u' = \begin{cases}
        \hmu(u) &\mbox{if $t(u) \in \{\leaf, \spec\}$} \\
        u &\mbox{otherwise}
    \end{cases}
     \quad \quad \quad
    v' = \begin{cases}
    	    \hmu(v) &\mbox{if $t(v) \in \{\leaf, \spec\}$} \\
    	    v &\mbox{otherwise}
    \end{cases}
    \]
    
    \item[(A2)]
    for each $(x, y) \in E(S)$, we have $(x, y) \in E(\ats{T}{S})$
    
    \item[(A3)]
    for each $u$ with $t(u) \in \{\dup, \transfer\}$, 
    we have $(u, \hmu(u)) \in E(\ats{T}{S})$
    
    \item[(A4)]
		  for each $(u, v) \in \tredge_{T}$, we have 
		  $(lca_S(\hmu(u), \hmu(v)), u) \in E(\ats{T}{S})$
\end{description}

We are aware of the fact that the graph $\ats{T}{S}$ heavily depends on 
the event-labeling $t$ and the species assignment $\sigma$ of the gene tree $(T;t,\sigma)$. 
However, to keep the notation simple we will write, by slight abuse of notation, $\ats{T}{S}$
instead of the more correct notation  $\ats{(T;t,\sigma)}{S}$.
The $\ats{T}{S}$ graph has four types of edges, and we shall refer to them as the A1, A2, A3-and A4-edges, respectively.  
We note for later reference that if $(x, y)$ is an A1-edge such that $x, y \in V(S)$, then we must have $y \preceq_S x$ which follows from the definition of $\hmu_{T,S}$
and the fact that $\sT(y) \subseteq \sT(x)$.

We emphasize, that the definition of  $\ats{T}{S}$ slightly differs from the 
		one provided in \cite{nojgaard2018time}. While Properties (A2), (A3) and (A4)
		are identical, (A1) was defined in terms of a reconciliation map $\mu$
		from $(T;t,\sigma)$ to $S$ in \cite{nojgaard2018time}.
		 To be more precise, in \cite{nojgaard2018time} it is stated $u' = \mu(u)$
	  and $v' = \mu(v)$ for speciation vertices or leaves $u$ and $v$ 
		instead of $u' = \hmu(u)$	  and $v' = \hmu(v)$, respectively. 
		However, Condition (M1) and (M2.i) imply that $\mu(u)=\hmu(u)$ 
		and $\mu(v)=\hmu(v)$ provided $\mu$ exists. 
		In other words, the definition of  $\ats{T}{S}$ here and in \cite{nojgaard2018time}
		are identical, in case a reconciliation map $\mu$ exists. 

		Since we do not want to restrict ourselves to the existence of a reconciliation
		map (a necessary condition is provided by Theorem \ref{thm:SpeciesTriplets})
		we generalized the definition of $\ats{T}{S}$ in	terms of  $\hmu$ instead. 
		
		For later reference, we summarize the latter observations in the following remark.

\begin{remark}
	The graph $\ats{T}{S}$ does not explicitly depend on a reconciliation map. 
	That is, even if there is no reconciliation map at all, $\ats{T}{S}$ is always
	well-defined.
	\label{rem:ats-well-defined}
\end{remark}

The next lemma deals with possible self-loops in  $\ats{T}{S}$.

\begin{lemma}
Let $(T;t,\sigma)$ be an event-labeled gene tree, $S$ be a species tree and $S^*$ be a refinement of $S$. 
Moreover, let $l$ be a leaf of $S$ (and thus, of $S^*$). 
Then $(l, l)$ is an edge of $\ats{T}{S}$ if and only if $(l, l)$ is an edge of $\ats{T}{S^*}$.

Furthermore, if there is a reconciliation map from  $(T;t,\sigma)$ to $S$
then, the graph $\ats{T}{S}$ will never contain self-loops and  
every edge $(u',v')$  in $\ats{T}{S}$ with $u',v'\in V(S)$,
is either an A1- or A2-edge and satisfies $v'\prec_S u'$.
\label{lem:self-loops}
\end{lemma}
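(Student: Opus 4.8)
The plan is to first localise the possible self-loops of $\ats{T}{S}$ by a case analysis over the four edge families, and then to treat the leaf-equivalence and the reconciliation statements separately. Since $V(T)$ and $V(S)$ are disjoint, an A3-edge $(u,\hmu(u))$ has its tail in $V(T)$, an A4-edge $(\lca_S(\hmu(u),\hmu(v)),u)$ has its head in $V(T)$, and an A2-edge is a genuine edge of $S$; none of these can be a loop. An A1-edge $(u',v')$ arising from $(u,v)\in E(T)$ can be a loop only if $u'=v'$, and by disjointness of $V(T),V(S)$ (together with $u\neq v$) this forces $u',v'\in V(S)$, i.e.\ $t(u),t(v)\in\{\leaf,\spec\}$ and $\hmu(u)=\hmu(v)$. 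Hence every self-loop is such an A1-edge.

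For the first assertion I would use that $\hmu_{T,S}(w)=\lca_S(\sT(w))$ equals a leaf $l$ precisely when $\sT(w)=\{l\}$, because the lca of two or more distinct leaves is an internal vertex of $S$. Consequently $(l,l)\in E(\ats{T}{S})$ iff there is an edge $(u,v)\in E(T)$ with $t(u),t(v)\in\{\leaf,\spec\}$ and $\sT(u)=\sT(v)=\{l\}$. The key observation is that $\sT(\cdot)=\sigma(L_{\Th}(\cdot))$ is determined by $t$, $\sigma$ and $\Th$ alone and does not refer to the species tree, while $l$ is a leaf of both $S$ and $S^*$; so this condition holds for $S$ exactly when it holds for $S^*$, which is the desired equivalence.

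For the reconciliation part, suppose a reconciliation map $\mu$ from $(T;t,\sigma)$ to $S$ exists. Any self-loop is an A1-edge as above; since $u$ has a child it is internal, so $t(u)=\spec$, and as only transfer vertices carry transfer edges the edge $(u,v)$ is vertical, giving $v\prec_{\Th}u$. Then (M1)/(M2.i) give $\mu(u)=\hmu(u)$ and $\mu(v)=\hmu(v)$, and (M3.ii), applicable because $t(u)=\spec$, yields the strict relation $\mu(v)\prec_S\mu(u)$. Thus $\hmu(v)\neq\hmu(u)$ and no self-loop can occur.

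It remains to orient the edges with both endpoints in $V(S)$. Such an edge must be A1 or A2, because A3- and A4-edges always have an endpoint in $V(T)$. An A2-edge $(x,y)$ satisfies $y\prec_S x$, being an edge of $S$. For an A1-edge with both endpoints in $V(S)$ the observation recorded just after the construction of $\ats{T}{S}$ already gives $\hmu(v)\preceq_S\hmu(u)$ (from $\sT(v)\subseteq\sT(u)$); combined with the absence of self-loops this upgrades to $\hmu(v)\prec_S\hmu(u)$. I expect the only real obstacle to be this strictness: the non-strict inequality is purely combinatorial, whereas excluding the equality $\hmu(u)=\hmu(v)$ genuinely requires the reconciliation map through (M3.ii).
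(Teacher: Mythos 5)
Your proof is correct and follows essentially the same route as the paper's: self-loops are localised to A1-edges, the leaf statement is reduced to the species-tree-independent condition $\sT(u)=\sT(v)=\{l\}$, and (M1)/(M2.i) together with (M3.ii) rule out self-loops and orient the edges inside $V(S)$. The only cosmetic difference is that you derive strictness for A1-edges from the non-strict ordering given by $\sT(v)\subseteq\sT(u)$ combined with the already-established absence of self-loops, whereas the paper applies (M3.ii) a second time; both are valid.
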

\begin{proof}
Let $(T;t,\sigma)$ be an event-labeled gene tree, $S$ be a species tree and $S^*$ be a refinement of $S$.
Note that if $(l, l)$ is a self-loop of $\ats{T}{S}$ (respectively $\ats{T}{S^*}$), then $(l,l)$ must be an A1-edge, and so  there is $(u, v) \in E(T)$ such that $\hmu_S(u) = \hmu_S(v) = l$ (respectively $\hmu_{S^*}(u) = \hmu_{S^*}(v) = l$).  Since $l$ is a leaf, $\hmu_S(u) = \hmu_S(v) = l$ if and only if $\hmu_{S^*}(u) = \hmu_{S^*}(v) = l$, and the statement follows.

For the second statement, assume that there is a reconciliation map from  $(T;t,\sigma)$ to $S$. 
To see that $\ats{T}{S}$ does not contain self-loops, observe once again that self-loops can only be provided by A1-edges. So
assume, for contradiction, that there is an edge $(u, v) \in E(T)$ such that
$t(u),t(v)\in \{\leaf, \spec\}$ and $\hmu(u)=\hmu(v)$. Since $t(u),t(v)\in
\{\leaf, \spec\}$, Property (M1) and (M2.i) imply that $\hmu(u)=\mu(u)$ and
$\hmu(v)=\mu(v)$ for every reconciliation map $\mu$ from $(T;t,\sigma)$ to $S$.
Since $v\prec_T u$, Condition (M3.ii) implies that $\hmu(v) =
\mu(v)\prec_S\mu(u)=\hmu(u)$; a contradiction. 		

Now let  $(u',v')$ be an edge in $\ats{T}{S}$ with $u',v'\in V(S)$. 
Since all A3- or A4-edges involve vertices of $T$, we can conclude that 
 $(u',v')$ must either be an A1-edge or an A2-edge. 
Clearly, if $(u',v')$ is an A2-edge, we trivially have $v'\prec_S u'$. 
Assume that $(u',v')$ is an A1-edge. 
Hence there is an edge $(u,v)\in E(T)$ such that 
$u' = \hmu(u)$ and $v' = \hmu(v)$. This implies that
$t(u),t(v)\in \{\spec,\leaf\}$. Condition (M1) and (M2.i) imply
$\hmu(u) = \mu(u)$ and $\hmu(v) = \mu(v)$. Moreover, 
since  $(u,v)\in E(T)$, we have $t(u)=\spec$.
Now, we can apply Condition (M3.ii) to conclude that 
$v' = \hmu(v)=\mu(v) \prec_S \mu(u)  = \hmu(u) = u'$.
\end{proof}

The graph $\ats{T}{S}$ will be utilized to characterize gene-species tree pairs that admit a time-consistent reconciliation map.
For a given gene tree $(T;t,\sigma)$ and a given species tree $S$ the existence of a time-consistent reconciliation map
can easily be verified as provided by the next 

\begin{theorem}[\cite{hellmuth2017biologically,nojgaard2018time}]
Let $(T;t,\sigma)$  be a labeled gene tree and $S$ be a species tree. 
Then $T$ admits a time-consistent reconciliation map with $S$ 
if and only if $S$ displays every triplet of $\RT (T; t,\sigma )$  and $\ats{T}{S}$ is acyclic.

The time-consistent reconciliation map can then be constructed in $O(|V(T)|\log(|V(S)|))$ time. 
\label{thm:timeCons-S}
\end{theorem}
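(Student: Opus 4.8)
The plan is to prove the biconditional in two directions, in both cases treating $\ats{T}{S}$ as a bookkeeping device for the timing constraints imposed by (B1) and (B2): a directed edge is to be read as ``the value of its tail does not exceed that of its head.'' The unifying observation is that a pair of time maps $\tau_T,\tau_S$ glues into a single potential $\phi$ on $V(\ats{T}{S})=V(T)\cup V(S)$ via $\phi(x)=\tau_S(x)$ for $x\in V(S)$ and $\phi(u)=\tau_T(u)$ for $u\in V(T)$, and conversely a topological numbering of an acyclic $\ats{T}{S}$ is a candidate potential from which the two maps are recovered. The key preliminary identity I would record is that, thanks to (M1), (M2.i) and (B1), the image $u'$ appearing in (A1) always satisfies $\phi(u')=\tau_T(u)$: for $t(u)\in\{\leaf,\spec\}$ we have $u'=\hmu(u)$ and $\tau_S(\hmu(u))=\tau_T(u)$, while for $t(u)\in\{\dup,\transfer\}$ we have $u'=u$. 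This makes the edge-by-edge checks below uniform.

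For the forward direction assume a time-consistent reconciliation map $\mu$ with witnessing $\tau_T,\tau_S$. That $S$ displays every triplet of $\RT(T;t,\sigma)$ is Theorem~\ref{thm:SpeciesTriplets} applied to the existing map $\mu$, since the reconciliation constraints force $S$ to respect the speciation triplets of $\RT$. For acyclicity I would verify that \emph{every} edge of $\ats{T}{S}$ is strictly $\phi$-increasing: an (A2)-edge is a parent--child edge of $S$, strict because $\tau_S$ is a time map; an (A1)-edge satisfies $\phi(u')=\tau_T(u)<\tau_T(v)=\phi(v')$ because $\tau_T$ is a time map and $v\prec_T u$; an (A3)-edge $(u,\hmu(u))$ is strict by the upper inequality of (B2); and an (A4)-edge is strict because (M2.iii) makes $\hmu(u),\hmu(v)$ incomparable, so $\lca_S(\hmu(u),\hmu(v))$ is a strict ancestor of $\hmu(u)$, hence lies at a time no later than the parent of $\hmu(u)$, which by the lower inequality of (B2) precedes $\tau_T(u)$. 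Since $\phi$ strictly increases along every edge (subsuming the self-loop case of Lemma~\ref{lem:self-loops}), no directed cycle can exist.

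For the converse I would first invoke Theorem~\ref{thm:SpeciesTriplets}: as $S$ displays $\RT(T;t,\sigma)$, a reconciliation map exists, which I normalise to the canonical one with $\mu(u)=(x_u,\hmu(u))$ for $t(u)\in\{\dup,\transfer\}$, where $x_u$ is the parent of $\hmu(u)$ in $S$. Acyclicity of $\ats{T}{S}$ gives a strictly edge-increasing potential $\phi$; I set $\tau_S:=\phi|_{V(S)}$, which is a valid time map because the (A2)-edges together with transitivity realise every relation $x\prec_S y$. I then define $\tau_T$ by (B1) on speciation and leaf vertices and assign to each $\dup/\transfer$ vertex $u$ a value inside the open interval demanded by (B2), processing $T$ from the leaves upward so that the gene-tree order $\tau_T(u)<\tau_T(v)$ for $v\in\child(u)$ is maintained. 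The (A1)-, (A3)- and (A4)-edges are precisely the inequalities certifying that the lower endpoint $\tau_S(x_u)$, together with the transfer bound from (A4), lies strictly below the upper endpoints $\tau_S(\hmu(u))$ and $\min_{v\in\child(u)}\tau_T(v)$.

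The step I expect to be the main obstacle is exactly the nonemptiness of these intervals once transfer edges are present. For a non-transfer child $v$ one has $\hmu(v)\preceq_S\hmu(u)\prec_S x_u$, so monotonicity of $\hmu$ keeps $\tau_T(v)$ above $\tau_S(x_u)$ and the local argument goes through; but a transferred child lives in an unrelated part of $S$, so a purely local bottom-up interval argument can fail, and it is here that global acyclicity of $\ats{T}{S}$ --- encoding the (A4) ``no travel back in time'' constraints --- must be used to exclude a conflicting cycle of timing demands. I would therefore state the placement as a feasibility statement about the whole constraint digraph and deduce it from the absence of directed cycles, rather than vertex by vertex. The algorithmic claim then follows since $\ats{T}{S}$ has $O(|V(T)|+|V(S)|)$ vertices and edges once the values $\hmu(v)=\lca_S(\sT(v))$ are precomputed with an $\lca$ data structure (the source of the logarithmic factor); a topological sort and the interval assignment run in linear time, giving the stated $O(|V(T)|\log|V(S)|)$ bound.
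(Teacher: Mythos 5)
The paper does not prove Theorem~\ref{thm:timeCons-S}: it is imported verbatim from \cite{hellmuth2017biologically,nojgaard2018time}, so there is no in-paper argument to compare against and your proposal has to be judged on its own. Your forward direction is essentially correct: gluing $\tau_T$ and $\tau_S$ into a single potential $\phi$ on $V(T)\cup V(S)$ and checking that every A1--A4 edge is strictly $\phi$-increasing is exactly the right certificate of acyclicity, and the four edge checks go through (for A4 one needs $\lca_S(\hmu(u),\hmu(v))\succeq_S x$ where $\mu(u)=(x,y)$, which follows from (M2.iii) together with $\hmu(u)\preceq_S y$ and $\hmu(v)\preceq_S y'$).

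The converse, however, has a genuine gap, and you have located it yourself without closing it. You fix the \emph{canonical} map $\mu(u)=(x_u,\hmu(u))$ and then try to realise (B2) for that map by a bottom-up interval assignment; you then concede that at transfer vertices the intervals may be empty and defer the rescue to ``a feasibility statement about the whole constraint digraph deduced from acyclicity.'' That deferred statement \emph{is} the hard direction of the theorem, so nothing has been proved. Moreover, the fix is not to keep the canonical map: acyclicity only guarantees that \emph{some} reconciliation map is time-consistent. The construction that works takes a topological numbering $\phi$ of $\ats{T}{S}$, sets $\tau_S=\phi|_{V(S)}$, $\tau_T(u)=\phi(u)$ for $t(u)\in\{\dup,\transfer\}$ and $\tau_T(u)=\phi(\hmu(u))$ otherwise, and then \emph{re-places} each $\dup/\transfer$ vertex $u$ on the unique edge $(x,y)$ of the root-to-$\hmu(u)$ path with $\tau_S(x)<\tau_T(u)<\tau_S(y)$; this edge exists because the A3-edge gives $\phi(u)<\phi(\hmu(u))$ and $\phi$ is injective (after planting an edge above the root), and it may lie strictly above the canonical edge. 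In this scheme (B2) holds by construction, the A1-edges make $\tau_T$ a time map, the A2-edges make $\tau_S$ one, and the role of the A4-edges is to force the re-placed images of the two endpoints of a transfer edge strictly below $\lca_S(\hmu(u),\hmu(v))$ on their respective root paths, hence incomparable, so that (M2.iii) survives the re-placement. Without this explicit re-placement your argument does not establish the converse; the complexity claim at the end is fine.
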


\section{Gene Tree Consistency}
\label{sec:gtc}

The main question of interest of this work is to determine whether a species tree $S$ exists at all for a labeled gene tree $T$.
Here, we solve a slightly more general problem: the one of refining a given almost binary species tree $S$ so that $T$ can be reconciled with it.

\vspace{3mm}
\noindent 
The \textsc{Gene Tree Consistency} (GTC) problem:

\noindent
\textbf{Given:} A labeled gene tree $(T;t,\sigma)$ and an almost binary species tree $S$. 	

\noindent
\textbf{Question:} Does there exist a refinement $S^*$ of $S$ that displays $\RT (T; t,\sigma)$ 
									 and such that $\ats{T}{S^*}$ is acyclic?

\vspace{3mm}

It is easy to see that the problem of determining the existence of a species
tree $S$ that displays $\RT (T; t,\sigma)$ 
and such that $\ats{T}{S}$ is acyclic is a special case of this problem. Indeed, it suffices
to provide a star tree $S$ as input to the GTC problem, since every species tree
is a refinement of $S$.

\begin{definition}
A species tree $S^*$ \emph{is a solution} to a given GTC instance $((T;t,\sigma), S)$ if $S^*$ displays $\RT (T; t,\sigma )$
and $\ats{T}{S^*}$ is acyclic.
\end{definition}

We first show that, as a particular case of the following lemma, one can restrict the search to binary species trees (even if $T$ is non-binary).

\begin{lemma}\label{lem:binarize}
Let $((T;t,\sigma), S)$ be a GTC instance and assume that a species tree $\hs$ is a
solution to this instance. Then any refinement $S^*$ of $\hs$ is also a solution
to $((T;t,\sigma), S)$.
\end{lemma}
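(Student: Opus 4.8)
The statement has two parts, matching the two requirements in the definition of a solution: I must show that $S^*$ displays $\RT(T;t,\sigma)$ and that $\ats{T}{S^*}$ is acyclic. The first part is the easy one, and I would dispatch it with the standard fact that refining a tree never destroys a displayed triplet, i.e.\ $rt(\hs) \subseteq rt(S^*)$ whenever $S^*$ refines $\hs$. Since $\hs$ is a solution it displays $\RT(T;t,\sigma)$, and this containment immediately gives $\RT(T;t,\sigma) \subseteq rt(\hs) \subseteq rt(S^*)$. The real content is the acyclicity of $\ats{T}{S^*}$, which I would prove by contradiction: assume $\ats{T}{S^*}$ contains a directed cycle and project it down to a directed cycle of $\ats{T}{\hs}$, contradicting the acyclicity of $\ats{T}{\hs}$.

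To set up the projection, recall that $\hs$ is obtained from $S^*$ by contracting the edges introduced by the extensions, so there is a natural surjection $\phi\colon V(S^*)\to V(\hs)$ sending each vertex of $S^*$ to the vertex of $\hs$ it collapses to; equivalently, $V(S^*)$ is partitioned into the connected \emph{expansions} $S^*_w$ ($w\in V(\hs)$), and $\phi(x)=w$ for $x\in S^*_w$. I extend $\phi$ to a map $\psi$ on $V(\ats{T}{S^*})=V(T)\cup V(S^*)$ by letting $\psi$ be the identity on $V(T)$ and $\psi=\phi$ on $V(S^*)$. The key technical lemma I would prove first is that taking lowest common ancestors commutes with this contraction: for every leaf subset $X\subseteq \L(\hs)=\L(S^*)$ one has $\phi(\lca_{S^*}(X))=\lca_{\hs}(X)$, and consequently $\phi(\hmu_{S^*}(v))=\hmu_{\hs}(v)$ for all $v\in V(T)$, since $\hmu_{T,S}(v)=\lca_S(\sT(v))$. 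This is the place where I expect to spend the most care, as it underlies the behaviour of every edge type.

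With the commutation lemma in hand, I would check edge by edge that $\psi$ carries the four edge types of $\ats{T}{S^*}$ into $\ats{T}{\hs}$, keeping in mind that the fixed labels $t$ make the case distinctions in (A1) identical for both graphs. Using $\phi(\hmu_{S^*}(v))=\hmu_{\hs}(v)$ and $\phi(\lca_{S^*}(\sT(u)\cup\sT(v)))=\lca_{\hs}(\sT(u)\cup\sT(v))$, each A1-edge maps to the corresponding A1-edge $(\hmu_{\hs}(u),\hmu_{\hs}(v))$ of $\ats{T}{\hs}$, each A3-edge $(u,\hmu_{S^*}(u))$ to the corresponding A3-edge $(u,\hmu_{\hs}(u))$, and each A4-edge $(\lca_{S^*}(\hmu_{S^*}(u),\hmu_{S^*}(v)),u)$ to the corresponding A4-edge $(\lca_{\hs}(\hmu_{\hs}(u),\hmu_{\hs}(v)),u)$; an A2-edge $(x,y)$ of $S^*$ maps either to an A2-edge $(\phi(x),\phi(y))$ of $\hs$ (when $x,y$ lie in different expansions, which forces $\phi(y)$ to be a child of $\phi(x)$) or collapses to a single point (when $x,y$ share an expansion). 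Because $V(T)$ and $V(S)$ are disjoint, A3- and A4-edges have one endpoint in $V(T)$ and cannot collapse; and an A1-edge cannot collapse either, since a collapsed A1-edge would be a self-loop of $\ats{T}{\hs}$, contradicting its acyclicity. Hence the \emph{only} edges of $\ats{T}{S^*}$ that $\psi$ collapses are A2-edges whose endpoints share a single expansion $S^*_w$.

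Finally I would run the projection argument. Applying $\psi$ to a hypothetical directed cycle of $\ats{T}{S^*}$ yields a closed walk in $\ats{T}{\hs}$ in which every non-stationary step is a genuine edge and every stationary step is a collapsed intra-expansion A2-edge. Deleting the stationary steps leaves a closed walk made of genuine edges of $\ats{T}{\hs}$; if any edge survives, that walk contains a directed cycle, contradicting the acyclicity of $\ats{T}{\hs}$. It remains to rule out the degenerate case where every step is stationary, i.e.\ the whole cycle sits inside one expansion $S^*_w$ and consists solely of A2-edges of $S^*_w$; but those are exactly the tree edges of the rooted subtree $S^*_w$ directed away from its root, which form an acyclic digraph, so no such cycle exists. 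This contradiction shows $\ats{T}{S^*}$ is acyclic, completing the proof. The main obstacle, as noted, is establishing the lca/contraction commutation lemma together with the careful bookkeeping that exactly the intra-expansion A2-edges, and nothing else, are collapsed by $\psi$.
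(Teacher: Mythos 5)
Your proof is correct, and its skeleton matches the paper's: the triplet part is dispatched by $rt(\hs)\subseteq rt(S^*)$, and acyclicity is proved by projecting a hypothetical cycle of $\ats{T}{S^*}$ down to a cycle of $\ats{T}{\hs}$ after checking, edge type by edge type, that the projection respects the four edge classes. The organizational difference is that the paper performs the contraction one extension at a time: it proves that a single $(x,X')$ extension preserves acyclicity, using the single-step analogues of your commutation lemma (its properties (P1) and (P2), where at most the one new vertex $y$ gets identified with $x$), and then concludes by induction along the sequence of extensions transforming $\hs$ into $S^*$. You instead contract all the way from $S^*$ to $\hs$ in one step, which forces you to prove the $\lca$-commutation lemma in full generality and to handle arbitrarily many collapsed edges, including the degenerate case of a cycle lying entirely inside one expansion --- a case the paper never meets, since a single extension can only produce the one collapsed pair $(x,x)$ coming from the new tree edge $(x,y)$, and your resolution of it (intra-expansion A2-edges are tree edges directed away from a root, hence acyclic) is correct. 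A second, minor difference: to exclude self-loops the paper invokes Lemma \ref{lem:self-loops}, whose proof routes through the existence of a reconciliation map (available because $\hs$ displays $\RT(T;t,\sigma)$, by Theorem \ref{thm:SpeciesTriplets}), whereas you observe directly that a collapsed A1-edge would be a self-loop of $\ats{T}{\hs}$ and hence already a cycle; this is simpler and avoids the detour. Both routes are sound: the per-extension induction keeps each step's bookkeeping trivial, while your global contraction buys a single clean statement at the cost of a slightly heavier degenerate-case analysis.
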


\begin{proof}
We may assume that $\hs$ is non-binary as otherwise we are done. Let $S^*$ be
any refinement of $\hs$. First observe that we have $\RT (T;t,\sigma)\subseteq rt(\hs) \subseteq rt(S^*)$,
and thus $S^*$ displays $\RT (T;t,\sigma)$.  

It remains to show that 
$\ats{T}{S^*}$ is
acyclic. We first prove that any single $(x,X')$ extension applied to $\hs$ preserves
acyclicity. Let $S'\coloneqq \hs_{x,X'}$ be any tree obtained from $\hs$ after applying some $(x,X')$ extension. 
As specified in Definition \ref{def:ext}, if $|X'| \leq 1$, then $S'=\hs$. In this case, $\ats{T}{S'} = \ats{T}{\hs}$ is acyclic and we are done. 
Hence, suppose that $|X|>1$. Thus, a new node $y$ was created, 
added as a child of $x$ and became the new parent of $X' \subset X$. 
We claim that $\ats{T}{S'}$ is acyclic. For
the remainder, we will write $\hmu_{\hs}$ and $\hmu_{S'}$ instead of
$\hmu_{T,\hs}$ and $\hmu_{T, S'}$ since $T$ will remain fixed. We will make use
of the following properties.

\begin{enumerate} 

\item[(P1)] for every subset $Z \subseteq \L(S)$, it holds that 

	 $\lca_{\hs}(Z) = \begin{cases}
        \lca_{S'}(Z) &\mbox{if $\lca_{S'}(Z)\neq y$}        \\
        x &\mbox{otherwise}
    \end{cases}$

\item[(P2)] For every $u\in V(T)$, it holds that

	 $\hmu_{\hs}(u) = \begin{cases}
        \hmu_{S'}(u) &\mbox{if $\hmu_{S'}(u) \neq y$ \textit{(Case P2.a)}} \\
        x &\mbox{otherwise \textit{(Case P2.b)}}
    \end{cases}$

\end{enumerate}

Property (P1) follows from the fact that $\L(S'(v)) = \L(\hs(v))$ for any $v \in
V(S') \setminus \{y\}$ and $\L(S'(y)) \subset \L(\hs(x))$. Therefore if
$\lca_{S'}(Z)  = z\neq y $, then $z$ is also a common ancestor of $Z$ in
$\hs$ and there cannot be lower common ancestor below $z$. If $z=y$,
then $x$ is a common ancestor of $Z$ in $\hs$ and there cannot be a lower common
ancestor below $x$. Property (P2) is a direct consequence of (P1) and the
definition of $\hmu_{S'}$ and $\hmu_{\hs}$.

Now, suppose for contradiction that $\ats{T}{S'}$ contains a cycle $C = (w_1,
\ldots, w_k, w_1)$. 
Note that $\RT (T;t,\sigma)\subseteq rt(\hs) \subseteq rt(S')$. 
Thus,  Theorem \ref{thm:SpeciesTriplets}
implies that there is a reconciliation map from 
from $(T;t,\sigma)$ to $S'$. 
By Lemma \ref{lem:self-loops}, $\ats{T}{S'}$ does not
contain self-loops and thus $k>1$ for $C = (w_1,\ldots, w_k, w_1)$ 

Consider the sequence of vertices $\tilde{C} = (\tilde{w}_1, \ldots,
\tilde{w}_k, \tilde{w}_1)$ of vertices of $\ats{T}{\hs}$ where we take $C$, but
replace $y$ by $x$ if it occurs. That is, we define, for each $1 \leq i \leq k$:
$$
\tilde{w}_i = \begin{cases}
w_i &\mbox{ if $w_i \neq y$ }\\
x &\mbox{ if $w_i = y$ }
\end{cases}
$$

We claim that every element in $\{(\tilde{w}_1, \tilde{w}_2), \ldots, (\tilde{w}_{k-1},
\tilde{w}_k),(\tilde{w}_k, \tilde{w}_1)\} \setminus \{(x, x)\}$ is an edge in 
$\ats{T}{\hs}$ (the pair $(x, x)$ can occur in $\tilde{C}$ if $(x, y)$ is in
$C$, but we may ignore it). 
This will imply the existence of a cycle in
$\ats{T}{\hs}$, yielding a contradiction.

We show that $(\tilde{w}_1, \tilde{w}_2) \in E(\ats{T}{\hs})$, assuming that
$(\tilde{w}_1, \tilde{w}_2) \neq (x, x)$. This is sufficient to prove our claim,
since we can choose $w_1$ as any vertex of $C$ and relabel the other vertices
accordingly.

\begin{description}
\item[\textnormal{\em Case: $(w_1, w_2)$ is an A1-edge.} ] \ \\
Since $(w_1, w_2)$ is an A1-edge, it is defined by some edge $(u, v) \in E(T)$ and
must coincide with one of the edges in  $\mathcal A = \{(u, v), (u, \hmu_{S'}(v)), (\hmu_{S'}(u), v),
(\hmu_{S'}(u), \hmu_{S'}(v))\}$. 

Suppose that $w_1,w_2\neq y$. 
Then, by construction of $\tilde w_1$ and $\tilde w_2$, we have
$\tilde w_1=w_1$ and $\tilde w_2=w_2$. 
Hence, $(\tilde{w}_1, \tilde{w}_2) = (w_1, w_2)$ is an edge in  $\mathcal A$. 
By (P2), $\hmu_{\hat S}(u)=\hmu_{S'}(u)$ and $\hmu_{\hat S}(v)=\hmu_{S'}(v)$. 
Hence,  $(\tilde{w}_1, \tilde{w}_2)$ is of one of the form
$(u, v), (u, \hmu_{\hat S}(v)), (\hmu_{\hat S}(u), v), (\hmu_{\hat S}(u), \hmu_{\hat S}(v))$. 
This implies that $(\tilde{w}_1, \tilde{w}_2)$ is an A1-edge that is contained in 
$\ats{T}{\hs}$.

If $w_1 =y$, then $y\in V(S')$ implies that $y=\hmu_{S'}(u)$. 
By construction and (P2.b), $\tilde{w}_1 = x = \hmu_{\hat S}(u)$. 
This, in particular, implies that $w_2 \notin \{x, y\}$ as otherwise, $\tilde{w}_2 = x$; contradicting
$(\tilde{w}_1, \tilde{w}_2) \neq (x, x)$.
By construction of $\tilde w_2$, we have  $\tilde{w}_2=w_2$.
Thus, $(\tilde{w}_1,\tilde{w}_2)$ is either of the form 
$(\hmu_{\hat S}(u), v)$ or $(\hmu_{\hat S}(u), \hmu_{\hat S}(v))$ depending on the label $t(v)$. 
In either case, $(\tilde{w}_1,\tilde{w}_2)$ is an A1-edge that is contained in 
$\ats{T}{\hs}$. 

If $w_2 = y$ then, by analogous arguments as in the case $w_1=y$, we have  
$\tilde{w}_2 = x = \hmu_{\hat S}(v)$ and $\tilde{w}_1=w_1$. Again, 
$(\tilde{w}_1,\tilde{w}_2)$ is an A1-edge that is contained in 
$\ats{T}{\hs}$.

In summary, $(\tilde{w}_1,\tilde{w}_2)$ is an A1-edge in 
$\ats{T}{\hs}$ whenever $(w_1, w_2)$ is an A1-edge in $\ats{T}{S'}$

\item[\textnormal{\em Case: $(w_1, w_2)$ is an A3-edge.} ] \ \\
Since $(w_1, w_2)$ is an A3-edge, we have $(w_1, w_2) = (u, \hmu_{S'}(u))$. 
Since $u \in V(T)$, it holds that $w_1=u \neq y$ and thus, $\tilde{w}_1=w_1=u$.
Now we can apply similar arguments as in the first case:
either $\hmu_{S'}(u) \neq y$ and thus, $\tilde{w}_2 = w_2 = \hmu_{S'}(u) =  \hmu_{\hs}(u)$
or $\hmu_{S'}(u) =y$ and thus, $\tilde{w}_2 = x = \hmu_{\hs}(u)$.
In both cases, $(\tilde{w}_1, \tilde{w}_2) = (u,\hmu_{\hs}(u))$ which implies that 
$(\tilde{w}_1, \tilde{w}_2)$ is an A3-edge in $\ats{T}{S'}$.

\item[\textnormal{\em Case: $(w_1, w_2)$ is an A2-edge.} ] \ \\
Since $(w_1, w_2)$ is an A2-edge, we have $(w_1, w_2)\in E(S')$ and hence, 
$w_1$ is the parent of $w_2$ in $S'$. This implies that $w_2 \neq y$ as, otherwise,
$w_1=x$ and thus, $(\tilde{w}_1, \tilde{w}_2) = (x, x)$; a contradiction. 
Thus, by construction, $\tilde{w}_2 = w_2$.
If $w_1 = y$, then $\tilde w_1 = x$ and, by construction of $S'$,
we have $(x, w_2) = (\tilde{w}_1, \tilde{w}_2) \in E(\hs)$.
In this case, $(\tilde{w}_1, \tilde{w}_2)$ is an A2-edge in $E(\ats{T}{\hs})$. 
Otherwise, if $w_1\neq y$, then $\tilde w_1=w_1$. 
Hence, $(w_1, w_2) = (\tilde{w}_1, \tilde{w}_2) \in E(\hs)$
which implies that $(\tilde{w}_1, \tilde{w}_2)$ is an A2-edge in $E(\ats{T}{\hs})$.

\item[\textnormal{\em Case: $(w_1, w_2)$ is an A4-edge.} ] \ \\
Since $(w_1, w_2)$ is an A4-edge, there is an edge $(u,v)\in \tredge_T$
such that $w_1 = \lca_{S'}(\hmu_{S'}(u),\hmu_{S'}(v))$ and $w_2=u$. Clearly,  
$w_2=y$ is not possible, since $w_1$ corresponds to a vertex in $T$.
By construction, $\tilde w_2=w_2=u$.
Note that in $\ats{T}{\hs}$, $(u, v)$ defines the A4 edge
$(\lca_{\hs}(\hmu_{\hs}(u),\hmu_{\hs}(v)), u)$.
Therefore, it remains to show that $\tilde w_1 = \lca_{\hs}(\hmu_{\hs}(u),\hmu_{\hs}(v))$.

Notice that 
\begin{align*}
w_1 &= \lca_{S'}(\hmu_{S'}(u),\hmu_{S'}(v)) \\
    &= \lca_{S'}(lca_{S'}(\sT(u)), lca_{S'}(\sT(v)) \\
    &= \lca_{S'}(\sT(u) \cup \sT(v))
\end{align*}
In a similar manner, we obtain
\begin{align*}
\lca_{\hs}(\hmu_{\hs}(u), \hmu_{\hs}(v)) = \lca_{\hs}(\sT(u) \cup \sT(v))
\end{align*}
Let $Z = \sT(u) \cup \sT(v)$. Property (P1) implies that if $w_1 \neq y$, then 
$\lca_{\hs}(\hmu_{\hs}(u),\hmu_{\hs}(v)) = \lca_{\hs}(Z) = \lca_{S'}(Z) = w_1 = \tilde{w}_1$, as desired.
If $w_1 = y$, then $\lca_{\hs}(\hmu_{\hs}(u),\hmu_{\hs}(v)) = \lca_{\hs}(Z) = x$ and $\tilde{w}_1 = x$, as desired.
\end{description}

We have therefore shown that a cycle in $\ats{T}{S'}$ implies a cycle in
$\ats{T}{\hs}$. Since $\hs$ is a solution, we deduce that $\ats{T}{S'}$ cannot
have a cycle, and it is therefore also a solution to $((T;t,\sigma), S)$. 

To finish the
proof, we need to show that $\ats{T}{S^*}$ is acyclic. This is now easy to
see since $\hs$ can be transformed into $S^*$ by a sequence of extensions. As we
showed, each extension maintains the acyclicity property, and we deduce that
$\ats{T}{S^*}$ is acyclic.
\end{proof}

This shows that we can restrict our search to binary species trees, and we may only require that it \emph{agrees} with $\RT (T;t,\sigma)$.

\begin{proposition}
An instance $((T;t,\sigma), S)$ of the GTC problem admits a solution if and only if there exists a binary refinement $S^*$ of $S$ that 
agrees with and, therefore, displays $\RT (T;t,\sigma)$ such that $\ats{T}{S^*}$ is acyclic.  
\label{prop:IFFbinRef}
\end{proposition}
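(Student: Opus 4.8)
The plan is to establish the two implications separately, treating the equivalence between ``agrees with'' and ``displays'' for binary trees as the one genuinely new ingredient; everything else follows from Lemma~\ref{lem:binarize} and the definition of a solution.

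I would first record the observation that, for a binary tree, agreeing with $\RT(T;t,\sigma)$ is the same as displaying it. Let $S^*$ be binary, and recall that the species occurring in $\RT(T;t,\sigma)$ are among the leaves of $S$; since refinements preserve the leaf set, they are leaves of $S^*$ as well. Fix a triplet $xy|z \in \RT(T;t,\sigma)$. Since $S^*$ is binary and $x,y,z \in \L(S^*)$, the restriction $S^*|_{\{x,y,z\}}$ is itself a binary triplet, so $S^*$ displays exactly one of $xy|z$, $xz|y$, $yz|x$. If $S^*$ agrees with $\RT(T;t,\sigma)$, then $xz|y \notin rt(S^*)$ and $yz|x \notin rt(S^*)$, which forces $S^*$ to display $xy|z$. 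As the triplet was arbitrary, $S^*$ displays $\RT(T;t,\sigma)$. The converse (displaying implies agreeing) is already noted in the preliminaries, so the two notions coincide on binary $S^*$.

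With this in hand, the ``if'' direction is immediate: a binary refinement $S^*$ of $S$ that agrees with $\RT(T;t,\sigma)$ displays $\RT(T;t,\sigma)$ by the observation above, and since $\ats{T}{S^*}$ is acyclic, $S^*$ satisfies exactly the definition of a solution to $((T;t,\sigma),S)$. For the ``only if'' direction, I would start from an arbitrary solution $\hs$, which is a refinement of $S$ that displays $\RT(T;t,\sigma)$ with $\ats{T}{\hs}$ acyclic. Since $\hs$ need not be binary, I would pick any binary refinement $S^*$ of $\hs$ (one always exists). By Lemma~\ref{lem:binarize}, $S^*$ is again a solution, hence displays $\RT(T;t,\sigma)$ and has $\ats{T}{S^*}$ acyclic; moreover $S^*$ is a refinement of $\hs$ and therefore of $S$, and is binary by construction. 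Finally, displaying $\RT(T;t,\sigma)$ entails agreeing with it, giving the desired binary refinement.

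The only subtle point --- and hence the crux of the argument --- is the equivalence of ``agrees with'' and ``displays'' for binary trees; it is precisely this equivalence that licenses the weaker ``agrees with'' formulation used later by the algorithm, since enforcing agreement is a more local condition than enforcing display. The remaining steps are routine bookkeeping: that a binary refinement of a refinement of $S$ is again a binary refinement of $S$, and that Lemma~\ref{lem:binarize} transports acyclicity and the display property across the final binarization.
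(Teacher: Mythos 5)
Your proposal is correct and follows essentially the same route as the paper: the forward direction is exactly an application of Lemma~\ref{lem:binarize} to a binary refinement of a solution, and the converse hinges on the same observation that a binary tree displaying one of the three resolutions of each triplet must, by agreement, display the one in $\RT(T;t,\sigma)$. The only cosmetic difference is that you isolate the ``agrees iff displays for binary trees'' equivalence as a standalone observation, which the paper instead folds inline into the converse direction.
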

\begin{proof}
Assume that $((T;t,\sigma), S)$ admits a solution $\hs$ and let $R=\RT (T;t,\sigma)$.  
By Lemma~\ref{lem:binarize}, any binary refinement $S^*$ of $\hs$ displays $R$ (and hence agrees with it) and $\ats{T}{S^*}$ is acyclic.

Conversely, suppose that there is a binary species tree $S^*$ that is a refinement of $S$ and agrees with $R$ such that $\ats{T}{S^*}$ is acyclic.
Since $\ats{T}{S^*}$ is acyclic, we only need to show that $S^*$ displays $R$.  Let $ab|c \in R$.  Because $S^*$ is binary, we must have one of $ab|c$, $ac|b$ or $bc|a$ in $rt(S^*)$.  Since $S^*$ agrees with $R$, $ab|c \in rt(S^*)$, and it follows that $R\subseteq rt(S^*)$. Hence, $S^*$ displays $R$. Taking the latter arguments together,  $S^*$ is a solution to the instance $((T;t,\sigma), S)$ of the GTC problem, which completes the proof.
\end{proof}

\section{An Algorithm for the GTC Problem}
\label{sec:algoGTC}

We need to introduce a few more concepts before describing our algorithm. 
For a sequence $Q=(v_1,\ldots,v_k)$ we denote with $\mathcal{M}(Q) = \{v_1,\ldots,v_k\}$. 

Given a graph $G$, a 
\emph{partial topological sort} of $G$ is a sequence of distinct vertices $Q = (v_1, v_2, \ldots, v_k)$ such that for each $i \in [k]$, 
vertex $v_i$ has in-degree $0$ in $G - \{v_1, \ldots, v_{i-1}\}$.
If, for any $v \in V(G)$, there is no partial topological sort $Q'$ satisfying $\M(Q') = \M(Q) \cup \{v\}$
then $Q$ is called a \emph{maximal topological sort}. 

We note that in fact, the set of vertices in a maximal topological sort of $G$ is unique, in the sense
that for two distinct maximal topological sorts $Q,Q'$ of $G$ we always have 
$\M(Q) = \M(Q')$. 

\begin{lemma}[Properties of maximal topological sort]
	Let $G = (V,E)$ be a graph and $Q$ and $Q'$ be maximal topological sorts of $G$. 
	Then, $\M(Q) = \M(Q')$. 
	In particular, $\M(Q) = V(G)$ if and only if $G$ is a directed acyclic graph. 

	If $x\in V\setminus \M(Q)$, then none of the vertices $y$ in $V$ for 
	which there is a directed path from $x$ to $y$ are contained in $\M(Q)$.

If $x\in \M(Q)$, then $x$ is not contained in any cycle of $G$.

	\label{lem:Qproperty}
\end{lemma}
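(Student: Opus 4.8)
The plan is to establish the three assertions of Lemma~\ref{lem:Qproperty} in order, relying on the standard interplay between topological sorting and the structure of directed acyclic graphs. First I would prove uniqueness of the vertex set $\M(Q)$: given two maximal topological sorts $Q$ and $Q'$, I would argue by contradiction that if $\M(Q) \neq \M(Q')$, then some vertex lies in one but not the other, and I would extract a $\preceq$-minimal such vertex. Concretely, suppose there is $v \in \M(Q') \setminus \M(Q)$ and choose $v$ so that it appears as early as possible in $Q'$. Every vertex preceding $v$ in $Q'$ then already lies in $\M(Q)$, and since $v$ has in-degree $0$ in $G$ minus those predecessors, it also has in-degree $0$ in $G - \M(Q)$. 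But then $Q$ could be extended by appending $v$, contradicting maximality of $Q$. Swapping the roles of $Q$ and $Q'$ gives $\M(Q) = \M(Q')$.

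Next I would handle the claim that $\M(Q) = V(G)$ if and only if $G$ is a DAG. The forward direction is immediate: if $\M(Q) = V(G)$, then $Q$ is a full topological ordering, and a graph admitting one is acyclic (any edge $(v_i, v_j)$ in a topological sort has $i < j$, so no directed cycle can close). For the converse, I would use that every DAG admits a topological ordering of all its vertices; such an ordering is a partial topological sort whose vertex set is all of $V(G)$, hence maximal, and by the uniqueness just proved $\M(Q) = V(G)$ for every maximal $Q$.

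For the third assertion, I would argue that if $x \notin \M(Q)$ and there is a directed path $x \to \cdots \to y$, then $y \notin \M(Q)$. The cleanest route is contrapositive or induction along the path: it suffices to show that if $(x,z) \in E$ and $x \notin \M(Q)$, then $z \notin \M(Q)$, then iterate along the path. Suppose for contradiction $z \in \M(Q)$; consider the prefix $Q' = (v_1,\dots,v_{i-1})$ of $Q$ consisting of the vertices placed before $z$, so $z$ has in-degree $0$ in $G - \M(Q')$. Since $x \notin \M(Q)$ we have $x \notin \M(Q')$, so the edge $(x,z)$ survives in $G - \M(Q')$ and contributes to the in-degree of $z$ there, contradicting that $z$ had in-degree $0$ at the moment of its placement. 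Iterating this single-edge statement along the path gives $y \notin \M(Q)$.

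Finally, for the statement that $x \in \M(Q)$ implies $x$ lies in no cycle, I would combine the previous parts. Suppose $x \in \M(Q)$ but $x$ lies on a directed cycle $C$. Pick any vertex $w$ on $C$; since $C$ is a cycle there is a directed path from $x$ to $w$ and from $w$ back to $x$, so by the third assertion (in contrapositive form) if any vertex of $C$ were outside $\M(Q)$ then $x$ would be too. Hence all of $C$ lies in $\M(Q)$, and then $G[\M(Q)]$ contains a cycle. But $Q$ is a topological sort of the induced subgraph on $\M(Q)$, and we showed in the second part that a graph admitting a full topological sort of its vertices is acyclic, contradicting the existence of the cycle $C$ inside $\M(Q)$. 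The main obstacle I anticipate is being careful about the indexing in the ``surviving edge'' arguments of the first and third parts: one must pin down exactly which prefix of $Q$ is removed when a given vertex is appended, and verify that the relevant edge is not deleted. Once that bookkeeping is set up, each step is a short contradiction.
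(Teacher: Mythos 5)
Your proof is correct and follows essentially the same route as the paper's: the same exchange argument (earliest vertex missing from one sort has all its in-neighbors already placed, so it could be appended) for $\M(Q)=\M(Q')$, and the same edge-by-edge induction for the reachability claim. The only harmless deviations are that you prove the Kahn fact rather than citing it, and you derive the no-cycle statement by combining the first two parts (all of $C$ lies in $\M(Q)$, yet $G[\M(Q)]$ admits a full topological sort and is therefore acyclic) instead of the paper's direct argument that the first cycle vertex appearing in $Q$ would require its cycle-predecessor to precede it.
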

\begin{proof}
Let $Q, Q'$ be maximal topological sorts of $G$, with $Q = (v_1, \ldots, v_k)$
and assume, for contradiction that  $\M(Q) \neq \M(Q')$.
Let $v_i$ be the first vertex in the sequence $Q$ such that $v_i \notin \M(Q')$. 
Then all the in-neighbors of $v_i$ are in the set $\{v_1, \ldots, v_{i-1}\}$.
Moreover, by assumption $\{v_1, \ldots, v_{i-1}\} \subseteq \M(Q')$, implying that $v_i$ has in-degree $0$ in $G - \M(Q')$.  Hence, we could append $v_i$ to $Q'$, contradicting its maximality.  The fact that $\M(Q) = V(G)$ if and only if $G$ a directed acyclic graph is well-known and follows from the results of Kahn~\cite{Kahn:62}.

Let $x\in V\setminus \mathcal{M}(Q)$. Moreover, let 
		$P=(x=v_1,\dots  v_k=y)$, $k\geq 2$, be a directed path from $x$ to $y$.
		Since $x\notin \mathcal{M}(Q)$, $v_2$ has in-degree  greater than $0$ in $G-\M(Q)$. 
		Therefore, $v_2\notin \M(Q)$ and, by induction,
	  $v_k=y\notin \M(Q)$.

We now show that no vertex $x\in \M(Q)$ can be contained in a cycle of $G$.
		Assume, for contradiction, that there is a cycle $C$ such that some of its vertices are part of 
		a maximal topological sort $Q = (v_1,\dots,v_k)$ of $G$. 
		Let $v_i$ be the first vertex of $C$ that appears in $Q$. 
		Hence, $v_i$ must have  in-degree $0$ in 		$G - \{v_1, \ldots, v_{i-1}\}$.
		But this implies, that the in-neighbor of $v_i$ in $C$
		must already be contained in $Q$; a contradiction.
\end{proof}

A maximal topological sort of $G$ can be found by
applying the following procedure: start with $Q$ empty, and while there is a
vertex of in-degree $0$ in $G - \M(Q)$, append $v$ to $Q$ and repeat. Then, $G$ is acyclic if an only if any maximal 
topological sort $Q$ of $V(G)$ satisfies $\M(Q) = V(G)$. The latter argument is correct as it directly 
mirrors the well-known algorithm by Kahn to find a topological sort of graph \cite{Kahn:62}.

Our algorithm will make use of what we call a \emph{good split refinement}.
To this end, we provide first
\begin{definition}[Split refinement]
\label{def:split-ref}
Let $S$ be an almost binary tree and let $x$ be a cherry of $S$.
We say that a refinement $S'$ of $S$ is a \emph{split refinement (of $S$ at $x$)} 
if $S'$ can be obtained from $S$ by 
partitioning the set $\child(x)$ of children of $x$
into two non-empty subsets $X_1, X_2=\child(x)\setminus X_1$,
and applying the extensions $(x, X_1)$ and then $(x, X_2)$. 
\end{definition}

In other words, we split the children set of $x$ into
non-empty subsets $X_1$ and $X_2$, and add a new parent vertex above each
subset of size 2 or more and connect $x$ with the newly created parent(s)
or directly with $x'$ whenever $X_i=\{x'\}$.

We note that the two $(x, X_1)$ and $(x, X_2)$ extensions yield a valid refinement of $S$
since the set $X_2$ is a strict subset of the children of $x$ in $S_{x, X_1}$. 
Also observe that a split refinement transforms
an almost binary tree into another almost binary tree that has one additional
binary internal vertex.

\begin{definition}[Good split refinement]
Let $((T;t,\sigma), S)$ be a GTC instance.  Let $Q$ be a maximal topological sort of $\ats{T}{S}$, and let $S'$ be a split refinement of $S$ at some cherry vertex $x$. 
Then $S'$ is a \emph{good split refinement} if the two following conditions are satisfied:
\begin{itemize}
    \item 
    $S'$ agrees with $\RT (T;t,\sigma)$;
    
    \item
    all the in-neighbors of $x$ in $\ats{T}{S'}$ belong to $\M(Q)$.
\end{itemize}
\end{definition}

\begin{figure}[tbp]
	\begin{center}
		\includegraphics[width=.9\textwidth]{./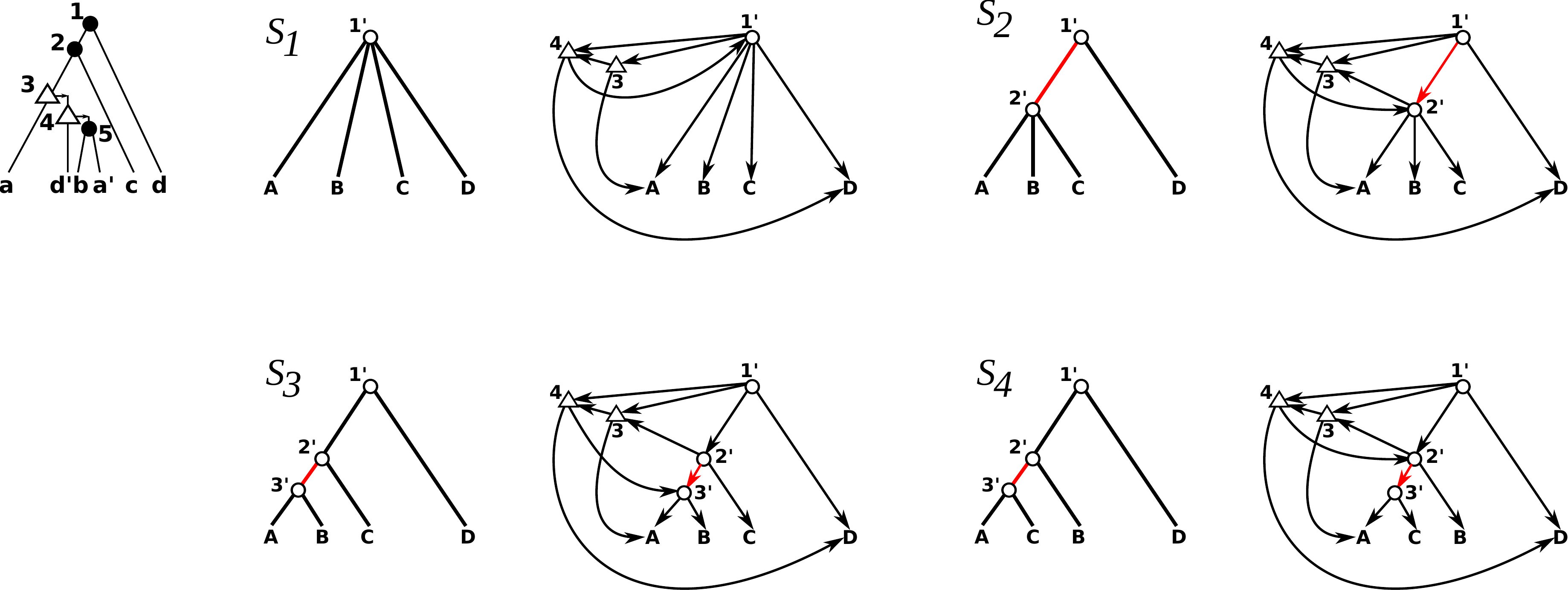}
	\end{center}
	\caption{Top left: the gene tree $(T;t,\sigma)$  from Fig.\ \ref{fig:least}	
		from which we obtain the species triples $\S(T;t,\sigma) = \{AB|D,AC|D\}$.
		Moreover, the sequence of species trees $S_1, S_2$ and $S_3$ is
		obtained by stepwise application of good split refinements.  The species tree $S_4$ is an example of a split refinement of $S_2$ that is not good.
		The corresponding graphs  $\ats{T}{S}$ are drawn right to the respective species tree $S$. 
		For clarity, we have omitted to draw all vertices of $\ats{T}{S}$ that have degree $0$.  
		See text for further discussion.
	}
\label{fig:working-exmpl}
\end{figure}

The intuition behind a good split refinement is that it refines $S$ by creating an additional binary vertex.  Moreover, this refinement maintains agreement with $\RT (T;t,\sigma)$ and, more importantly, creates a new vertex of in-degree $0$ in the auxiliary graph that can be used to extend the current maximal topological sort.  Ultimately, our goal is to repeat this procedure until $Q$ contains every vertex, at which point we will have attained an acyclic graph.
As an example consider Fig.\ \ref{fig:working-exmpl}. The species tree $S_1$ corresponds to the star tree.
	Clearly $S_1$ agrees with $\S(T;t,\sigma)$ since $R(S_1)=\emptyset$. However,  $\ats{T}{S}$ contains cycles.
	For the maximal topological sort $Q_1$ of  $\ats{T}{S_1}$ we have  
	$\M(Q_1) = L(T)\cup \{1,2,5\}$. 
	Now, $S_2$ is a good split refinement of $S_1$, since $S_2$ agrees with $\S(T;t,\sigma)$ (in fact, $S_2$ displays  $\S(T;t,\sigma)$)
	and since $x=1'$ has no in-neighbors in $\ats{T}{S_2}$ which trivially implies that 
	all in neighbors of $x=1'$ in $\ats{T}{S_2}$ are already contained $\M(Q_1)$.
	For the maximal topological sort $Q_2$ of  $\ats{T}{S_2}$ we have  
	$\M(Q_2) = \M(Q_1)\cup \{1'\}$.  
	Still,  $\ats{T}{S_2}$ is not acyclic. 
	The tree $S_3$ is a good split refinement of $S_2$, since $S_3$ agrees with $\S(T;t,\sigma)$	
	and the unique in-neighbor $1'$ of $x=2'$ in $\ats{T}{S_3}$ is already contained $\M(Q_2)$.
	Since  $\ats{T}{S_3}$ is acyclic, there is a time-consistent reconciliation map from  
	$(T;t,\sigma)$ to $S_3$, which is shown in 		Fig.\ \ref{fig:least}.
	Furthermore, $S_4$ is not a good split refinement of $S_2$. 
	Although $S_4$ is a split refinement of $S_2$ and agrees with $\S(T;t,\sigma)$,
	the in-neighbor $4$ of $x=2'$ is not contained in $\M(Q_2)$.

We will discuss later the question of finding a good split refinement efficiently, if one exists.  For now, assume that this can be done in polynomial time.
The pseudocode for a high-level algorithm for solving the GTC problem is provided in Alg.\ \ref{alg:gtcRefinement}. We note in passing that this
algorithm serves mainly as a scaffold to provide the correctness proofs that are needed for the main Alg.\ \ref{alg:GoodSplit}.

\vspace{3mm}
\begin{algorithm}[H]
\begin{algorithmic}[1]
\State Function $\textit{gtcRefinement}((T;t,\sigma), S)$

    \If{$S$ is binary \label{line:binary}} 
        \If{$\ats{T}{S}$ is acyclic and $S$ agrees with $\RT (T;t,\sigma)$ \label{line:GTC-properties}} %
           \State return $S$\;
        \Else \ return ``there is no solution''\;
        \EndIf
    \Else
        \If{$S$ admits a good split refinement $S'$ at a vertex $x$ \label{line:good-split}} 
           \State return $\textit{gtcRefinement}((T;t,\sigma), S')$\;
        \Else           \ return ``there is no solution''\; \label{line:no-good-split} 
        \EndIf
    \EndIf
 \end{algorithmic}
 \caption{GTC algorithm}\label{alg:gstr}
	\label{alg:gtcRefinement}
\end{algorithm}

We prove some general-purpose statements first.
Let $I_G(v)$ denote the set of in-neighbors of vertex $v$ in a graph $G$.

\begin{lemma}\label{lem:ancestors-dont-change}
Let $((T;t,\sigma), S)$ be a GTC instance. Moreover, 
let $S'$ be a split refinement of $S$ at a cherry $x$.  
Then, for every vertex $y$ of $S$ such that $y \not\preceq_S x$, it holds that 
$I_{\ats{T}{S'}}(y) = I_{\ats{T}{S}}(y)$.
\end{lemma}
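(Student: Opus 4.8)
The plan is to track exactly how a split refinement changes the species tree and to argue that, outside the subtree rooted at $x$, nothing relevant to the in-neighborhood of $y$ is affected. First I would record the structural effect of the split: since the two extensions $(x,X_1)$ and $(x,X_2)$ only insert vertices strictly below $x$, we have $V(S)\subseteq V(S')$ together with (at most) two new vertices $z_1,z_2$, the parents of the size-$\geq 2$ parts, where $\L(S'(z_i))=X_i\subseteq\child(x)$, while $\L(S'(w))=\L(S(w))$ for every $w\in V(S)$. From this I would establish the single fact that drives the whole proof, namely an $\lca$-preservation statement (essentially property (P1) in the proof of Lemma~\ref{lem:binarize}, applied to the two extensions): for every $w\in V(S)$ with $w\neq x$ and every $Z\subseteq\L(S)$, one has $\lca_S(Z)=w$ if and only if $\lca_{S'}(Z)=w$. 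Taking $Z=\sT(u)$, this gives that for any $w\neq x$, $\hmu_S(u)=w$ iff $\hmu_{S'}(u)=w$; equivalently, the only vertices whose image can move are those with $\hmu_S(u)=x$, and they can only move to $z_1$, $z_2$, or stay at $x$.

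Now fix $y\in V(S)$ with $y\not\preceq_S x$. Since $x$ is a cherry, this means $y\notin\{x\}\cup\child(x)$, i.e. $y$ is neither $x$ nor a child of $x$. I would then go through the four edge types feeding into $y$. The A4-edges need no argument, since every A4-edge points to a vertex of $T$, never to a species vertex. For A2-edges, the in-neighbor of $y$ is its parent in the species tree (if any); as the refinement only re-parents the children of $x$ and creates $z_1,z_2$ below $x$, and $y$ is neither $x$ nor a child of $x$, the parent of $y$ is identical in $S$ and $S'$. For A3-edges, the in-neighbors of $y$ are exactly $\{u : t(u)\in\{\dup,\transfer\},\ \hmu(u)=y\}$, and since $y\neq x$ the $\lca$-preservation fact shows this set is the same whether computed in $S$ or in $S'$.

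The crux, and the step I expect to be the main obstacle, is the A1-edges. An A1-edge into $y$ comes from a tree edge $(u,v)\in E(T)$ with target $v'=y$; as $y\in V(S)$ and $v\in V(T)$ are of different kinds, this forces $v'=\hmu(v)$, hence $t(v)\in\{\leaf,\spec\}$ and $\hmu(v)=y$. By $\lca$-preservation ($y\neq x$) the set of such edges $(u,v)$ coincides in $S$ and $S'$, so it remains to check that the tail $u'$ is unchanged. If $t(u)\in\{\dup,\transfer\}$ then $u'=u\in V(T)$ is trivially unchanged. If instead $t(u)\in\{\leaf,\spec\}$, then $(u,v)$ cannot be a transfer edge (transfer edges emanate only from $\transfer$-vertices), so $v$ and $u$ lie in the same component of $\Th$ with $v\prec_{\Th}u$; hence $\sT(v)\subseteq\sT(u)$ and therefore $\hmu_S(v)\preceq_S\hmu_S(u)$. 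Since $\hmu_S(v)=y\not\preceq_S x$, we cannot have $\hmu_S(u)=x$ (that would give $y\preceq_S x$), so $\hmu_S(u)\neq x$, and $\lca$-preservation yields $\hmu_{S'}(u)=\hmu_S(u)$; thus $u'$ is unchanged. Running the same argument with the roles of $S$ and $S'$ swapped shows conversely that every A1-edge into $y$ present in $\ats{T}{S'}$ is present in $\ats{T}{S}$. Collecting the four cases gives $I_{\ats{T}{S'}}(y)=I_{\ats{T}{S}}(y)$.

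The delicate point throughout is exactly this A1 analysis: the map $\hmu$ genuinely can change at vertices sent to $x$, so the proof must rule out precisely the scenario in which the tail $\hmu(u)$ equals $x$ and migrates to one of the new vertices $z_1,z_2$. What makes this work is the monotonicity $\hmu(v)\preceq_S\hmu(u)$ along non-transfer tree edges, which propagates the hypothesis $y\not\preceq_S x$ from the head $v$ back to the tail $u$. Isolating the transfer case beforehand (where the tail is instead a fixed $T$-vertex, and where (O3.b) would otherwise destroy the inclusion $\sT(v)\subseteq\sT(u)$) is the bookkeeping detail one must get right for the monotonicity argument to be legitimate.
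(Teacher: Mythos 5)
Your proof is correct and follows essentially the same route as the paper's: a case analysis over the four edge types, with A4 excluded because its heads lie in $V(T)$, A2 handled by noting the parent of $y$ is unchanged, and A1/A3 reduced to the $\lca$-preservation property (P1)/(P2) from the proof of Lemma~\ref{lem:binarize}. If anything, your treatment of the A1 case is more careful than the paper's, which asserts invariance of the tail $\hmu(u)$ rather tersely, whereas you make explicit the monotonicity $\hmu(v)\preceq_S\hmu(u)$ along non-transfer edges that rules out the problematic scenario $\hmu_S(u)=x$.
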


\begin{proof}
	Let $y\in V(S)$ be a vertex of $S$ satisfying $y \not\preceq_S x$.
		Since $y\notin V(T)$, for every $z\in I_{\ats{T}{S'}}(y)$ or $z\in I_{\ats{T}{S}}(y)$ the edge $(z,y)$
		cannot be an A4-edge. 	

		If $(z,y)$ is an A2-edge in $\ats{T}{S}$ then, $(z,y)\in E(S)$, which is if and only if 
		$(z,y)\in E(S')$, since  $y\not\preceq_S x$. In this case, 
		$z\in I_{\ats{T}{S}}(y)$ if and only if $z\in I_{\ats{T}{S'}}(y)$, 

		It remains to consider A1- and A3-edges. We translate here Property (P2.a) as in the proof of Lemma \ref{lem:binarize}. 
		It states that $\hmu_{S}(y)=\hmu_{S'}(y)$ since  $y\not\preceq_{S'} x$ and thus, it cannot be the newly created vertex in $S'$. 
		Since this holds for every $y \not\preceq_S x$, this immediately implies that 		
		$z\in I_{\ats{T}{S}}(y)$ and the edge $(z,y)$ is an A1-edge, resp., A3-edge in $\ats{T}{S}$
		if and only if $z\in I_{\ats{T}{S'}}(y)$ and $(z,y)$ is an A1-edge, resp., A3-edge in $\ats{T}{S'}$.
\end{proof}

\begin{remark}
Lemma \ref{lem:self-loops} implies that if $S$ has a leaf that is in a self-loop in $\ats{T}{S}$, then we may immediately discard $S$ as it cannot have a solution (since any refinement will have this self-loop). For the rest of the section, we will therefore assume that no leaf of $S$ belong to a self-loop.
\label{rem:self-loop}
\end{remark}

We now show that if we reach a situation where there is no good split refinement, then there is no point in continuing, i.e. that it is correct to deduce that no solution refining the current $S$ exists.

\begin{proposition}
Let $((T;t,\sigma), S)$ be a GTC instance such that $S$ is not binary   and
does not admit a good split refinement. 
Then,  $((T;t,\sigma), S)$ does not admit a solution. 
\label{prop:no-solution}
\end{proposition}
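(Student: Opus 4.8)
The plan is to prove the contrapositive: assuming $((T;t,\sigma),S)$ admits a solution, I will exhibit a good split refinement of $S$. By Proposition~\ref{prop:IFFbinRef} I may take the solution to be a \emph{binary} refinement $S^*$ of $S$ that displays $\RT(T;t,\sigma)$ and for which $\ats{T}{S^*}$ is acyclic. Since $S$ is almost binary but not binary, it has at least one non-binary cherry. Fix a maximal topological sort $Q$ of $\ats{T}{S}$ and write $W = V(\ats{T}{S})\setminus\M(Q)$; by Lemma~\ref{lem:Qproperty}, $W$ is exactly the set of vertices that lie on, or are reachable from, a directed cycle of $\ats{T}{S}$, and moreover, by maximality of $Q$, every vertex of $W$ has an in-neighbour inside $W$.

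For each non-binary cherry $x$ of $S$, let $S'_x$ be the split refinement of $S$ at $x$ obtained from the \emph{top} bipartition that $S^*$ induces on $\child(x)$ (the two leaf-sets hanging below the two children of $x$ in $S^*$). Each $S'_x$ meets the first condition of a good split refinement for free: since $S^*$ refines $S'_x$ we have $rt(S'_x)\subseteq rt(S^*)$, and as $S^*$ is binary and displays $\RT(T;t,\sigma)$ it agrees with it, whence so does $S'_x$. Thus the whole task is to find one cherry $x$ with $I_{\ats{T}{S'_x}}(x)\subseteq\M(Q)$. One preparatory fact makes this meaningful: a newly created child $y_i$ of $x$ can never be an in-neighbour of $x$ in $\ats{T}{S'_x}$. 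An A2-edge would force $y_i$ to be the parent of $x$; an A1-edge $(y_i,x)$ arising from a tree edge $(u,v)$ would force $x=\hmu_{S'_x}(v)\preceq_{S'_x}\hmu_{S'_x}(u)=y_i$ (as $\sT(v)\subseteq\sT(u)$), contradicting $y_i\prec_{S'_x}x$; and A3/A4-edges cannot have the required species endpoint. Hence $I_{\ats{T}{S'_x}}(x)\subseteq V(\ats{T}{S})$, so failure of the second condition means there is $z_x\in I_{\ats{T}{S'_x}}(x)\cap W$.

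I would then argue by contradiction: suppose \emph{no} $S'_x$ is good, so every non-binary cherry $x$ has such a $z_x\in W$. Let $\phi\colon V(\ats{T}{S^*})\to V(\ats{T}{S})$ be the contraction that collapses, for every cherry, the internal vertices of its resolution onto the cherry itself, and is the identity on $V(T)$ and on the unresolved vertices of $S$; this is the map implicit in the proof of Lemma~\ref{lem:binarize}, under which every edge of $\ats{T}{S^*}$ projects to an edge of $\ats{T}{S}$ or is contracted. Put $W^* = \phi^{-1}(W)$. The heart of the argument is to show that in the \emph{acyclic} graph $\ats{T}{S^*}$ every vertex $a\in W^*$ has an in-neighbour inside $W^*$. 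Since $W^*$ is finite and non-empty — indeed $W\neq\emptyset$, as otherwise $\M(Q)=V(\ats{T}{S})$ would make every $S'_x$ good — this yields a directed cycle in $\ats{T}{S^*}[W^*]$, contradicting acyclicity and finishing the proof.

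To establish this minimum-in-degree property I split into three cases according to $a\in W^*$. If $a$ is an internal vertex of the resolution of a cherry $x$, then $\phi(a)=x\in W$ and the parent of $a$ in $S^*$ lies in $\phi^{-1}(x)\subseteq W^*$, supplying an A2 in-neighbour. If $a=x$ is itself a (necessarily bad) non-binary cherry, I invoke the failure assumption: one checks that $\phi$ maps $I_{\ats{T}{S^*}}(x)$ \emph{onto} $I_{\ats{T}{S'_x}}(x)$, because $\hmu_{S^*}(v)=x$ and $\hmu_{S'_x}(v)=x$ hold under the same ``straddling'' condition on $\sT(v)$ relative to the top split; hence $z_x$ lifts to some $\zeta_x\in I_{\ats{T}{S^*}}(x)$ with $\phi(\zeta_x)=z_x\in W$, i.e. $\zeta_x\in W^*$. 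Finally, if $a=w$ is any other bad vertex, then $w$ is not a resolved cherry, so $\phi^{-1}(w)=\{w\}$; taking a coarse in-neighbour $w'\in W$ of $w$ and lifting the corresponding edge of $\ats{T}{S}$ into $\ats{T}{S^*}$ gives an edge ending at $w$ whose source projects to $w'$, hence lies in $W^*$. The main obstacle is precisely this last bundle of verifications: checking edge-type by edge-type (A1--A4) that a coarse edge into a non-resolved target lifts to an $\ats{T}{S^*}$-edge into the \emph{same} vertex with a $\phi$-compatible source, together with the fact that $\lca$ commutes with $\phi$ (the global analogue of properties (P1)/(P2) in Lemma~\ref{lem:binarize}). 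Everything else is bookkeeping.
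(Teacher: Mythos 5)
Your proposal is correct and is essentially the paper's own argument in contrapositive form: you take the split refinements of $S$ induced by the binary solution $S^*$, observe they agree with $\RT(T;t,\sigma)$, and show that if none is good then every vertex outside the maximal topological sort $Q$ of $\ats{T}{S}$ retains an in-neighbour outside $Q$ after lifting to $\ats{T}{S^*}$, forcing a cycle in the acyclic graph $\ats{T}{S^*}$ --- exactly the content of the paper's Claims 1--3. Your contraction map $\phi$ and the set $W^*=\phi^{-1}(W)$ are a clean repackaging of the paper's repeated use of Lemma~\ref{lem:ancestors-dont-change} and properties (P1)/(P2) (and let you bypass the paper's Claim 1 that cherries lie outside $\M(Q)$), but the decomposition, the key lemmas, and the edge-type-by-edge-type lifting verifications are the same.
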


\begin{proof}
We show that if $S$  does not admit a good split refinement, then 
none of the binary refinements $S^*$ of $S$ is a solution to the 
GTC instance $((T;t,\sigma), S)$. 
Contraposition of Lemma \ref{lem:binarize} together with Prop.\ \ref{prop:IFFbinRef} then implies 
that there is no solution at all for $((T;t,\sigma), S)$.

Thus, assume that $S$ is not binary (but almost binary, due to the definition of GTC instances)  
such that $S$ does not admit a good split refinement. 
Let $S^*$ be any binary refinement of $S$.  We may assume that $S^*$ agrees with and thus,  displays  $\RT (T,t,\sigma)$, as otherwise it is not a solution. 
We show that $\ats{T}{S^*}$ contains a cycle.

Let $Q$ be a maximal topological sort of $\ats{T}{S}$. By Lemma \ref{lem:Qproperty}, $\M(Q)$ is independent of the choice of 
the particular sequence $Q$. 
Note that $V(S) \subseteq V(S^*)$ and therefore that $V(\ats{T}{S}) \subseteq V(\ats{T}{S^*})$. 
In particular, $\M(Q) \subseteq V(\ats{T}{S^*})$.
Also notice that because of the A2 edges in $\ats{T}{S}$, if a vertex $x \in V(S)$ is not in $\M(Q)$, then no descendant of $x$ in $S$ is in $\M(Q)$.
We separate the proof into three claims.

\begin{owndesc}
\item[Claim 1:]
\emph{Let $x$ be a non-binary cherry of $S$.  Then $x \notin \M(Q)$.}

Note that since
$x$ is non-binary and  $S^*$ is
a binary refinement of $S$, there is a split refinement 
$S'$ of $S$ at $x$ such
that $S^*$ refines $S'$.  
Since $S^*$ agrees with $\RT (T;t,\sigma)$, also $S'$ agrees with $\RT (T;t,\sigma)$.  
If all in-neighbors of $x$ in $\ats{T}{S'}$ are in $Q$, then $S'$ is a  good split refinement; a contradiction. 
So we may assume that $x$ has an in-neighbor $y$ in $\ats{T}{S'}$ such that $y \notin \M(Q)$.  
Since $x\in V(S)$, the edge $(y, x)$ cannot be an A4-edge in $\ats{T}{S'}$.
If $(y, x)$ is an A1-edge in $\ats{T}{S'}$, then $x = \hmu_{S'}(v)$ for some $v \in V(T)$.  
By construction, $L(S(x)) = L(S'(x))$ and thus, $\hmu_{S}(v) = \hmu_{S'}(v) = x$. Therefore,
 $(y, \hmu_S(v)) = (y, x) \in E(\ats{T}{S})$.
Similarly, if $(y, x)$ is an A3-edge in $\ats{T}{S'}$, then $x = \hmu_{S'}(y)$ and again, 
$(y, \hmu_S(y)) = (y, x) \in E(\ats{T}{S})$.  
If $(y, x)$ is an A2-edge in $\ats{T}{S'}$, then $(y, x) \in E(\ats{T}{S})$ since the parent
of $x$ is the same in $S$ and $S'$. In all cases, $y$ is an in-neighbor of
$x$ in $\ats{T}{S}$. However, since $y \notin \M(Q)$, vertex $y$ remains an in-neighbor of $x$ in 
the graph $\ats{T}{S} - \M(Q)$.  It follows that $x \notin \M(Q)$, which proves Claim 1.
\end{owndesc}

\begin{owndesc}
\item[Claim 2:]
\emph{Let $v \in V(T) \setminus \M(Q)$.  Then $v$ has in-degree at least $1$ in $\ats{T}{S^*} -  \M(Q)$.}

Let $v \in V(T) \setminus \M(Q)$.  Since $v \notin \M(Q)$, $v$ has in-degree at least $1$ in $\ats{T}{S} - \M(Q)$, or else it could be added to the maximal topological sort.
Let $(x, v)$ be an incoming edge of $v$ in $\ats{T}{S} - \M(Q)$, which is either an A1- or an A4- edge.  

If $(x, v)$ is an A1- edge, we either have $x \in V(T)$ or $x \in V(S)$.
Suppose first that $x \in V(T)$. In this
case, the $(x, v)$ edge exists because $x$ is the parent of $v$ in $T$
with $t(x), t(v)$ both in $\{\dup, \transfer\}$. This is independent of
$S$, and so $(x, v)$ is also an A1-edge of $\ats{T}{S^*} - \M(Q)$.
Suppose now that $x \in V(S)$. In this case, observe that
$x\notin \M(Q)$, since $(x, v)$ is an edge in $\ats{T}{S} - \M(Q)$.
Therefore,  $x \in V(S) \setminus \M(Q)$. This, in particular, implies
that the parent $v_p$ of $v$ in $T$ satisfies $t(v_p) = \spec$ and
$\hmu_{S}(v_p) = x$. Since $S^*$ refines $S$, we must have $\hmu_{S^*}(v_p)
\preceq_{S^*} x$.  
There are two cases, either
$\hmu_{S^*}(v_p)\notin V(S)$, in which case trivially $\hmu_{S^*}(v_p) \notin
\M(Q)$, or $\hmu_{S^*}(v_p)\in V(S)$. In the latter case, there is a
directed (possibly edge-less) path from $x$ to $\hmu_{S^*}(v_p)$ in $\ats{T}{S}$ due to the A2-edges. 
Thus, we can apply  Lemma
\ref{lem:Qproperty} to conclude that $\hmu_{S^*}(v_p) \notin \M(Q)$. 

In either
case, $(\hmu_{S^*}(v_p), v)$ is an A1-edge of $\ats{T}{S^*} - \M(Q)$.
Therefore, $v$ has an in-neighbor in $\ats{T}{S^*}$ that does not belong to
$Q$.

Assume now that $(x, v)$ is an A4-edge.
Thus, there is an edge $(v,v') \in \tredge_T$ such that 
$x= lca_{S}(\hmu_S(v), \hmu_S(v'))$. 
Again since $S^*$ refines $S$, it is not hard to see that 
$\lca_{S^*}(\hmu_S^*(v), \hmu_S^*(v')) \preceq_{S^*} x$. 
By similar arguments as before,  $\lca_{S^*}(\hmu_S^*(v), \hmu_S^*(v')) \notin \M(Q)$.  
Thus, $(\lca_{S^*}(\hmu_S^*(v), \hmu_S^*(v')),v)$ is an A4-edge of
 $\ats{T}{S^*}$. Hence, 
$v$ also has in-degree at least $1$ in $\ats{T}{S^*} - \M(Q)$, which proves Claim 2.
\end{owndesc}

We prove the analogous statement for the species tree vertices.

\begin{owndesc}
\item[Claim 3:]
\emph{Let $x \in V(S) \setminus \M(Q)$.  Then $x$ has in-degree at least $1$ in $\ats{T}{S^*} - \M(Q)$.}

Let $x \in V(S) \setminus \M(Q)$.
We may assume that $x$ has in-degree at least $1$ in $\ats{T}{S} - \M(Q)$, by the maximality of $Q$. 
Notice that since $S^*$ is a binary refinement of $S$, there exists a sequence of split refinements that transforms $S$ into $S^*$.  That is, 
there is a sequence of trees $S = S_1, S_2, \ldots, S_k = S^*$ such that for $2 \leq i \leq k$, $S_i$ is a split refinement of $S_{i-1}$.

Let $(w, x)$ be an incoming edge  of $x$ in $\ats{T}{S} - \M(Q)$. We consider the following three exclusive cases: 
either $x$ is a binary or a non-binary interior vertex, or a leaf.

Suppose first that $x$ is a binary vertex of $S$.  Because $S$ is almost binary, $x$ is not a descendant of any non-binary vertex of $S$.  
By applying Lemma~\ref{lem:ancestors-dont-change} successively on each split refinement of the sequence transforming $S$ into $S^*$, 
we obtain $I_{\ats{T}{S_1}}(x) = I_{\ats{T}{S_2}}(x) = \ldots = I_{\ats{T}{S_k}}(x) = I_{\ats{T}{S^*}}(x)$.  In particular, $w \in I_{\ats{T}{S^*}}(x)$, which proves the claim for this case since $w \notin \M(Q)$.

Suppose now that $x$ is a leaf of $S$. If the parent $x_p$ of $x$ is binary,
then again, successive application of Lemma~\ref{lem:ancestors-dont-change} on
$S_1, \ldots, S_k$ implies that $I_{\ats{T}{S}}(x) = I_{\ats{T}{S^*}}(x)$, and
therefore that $w \in I_{\ats{T}{S^*}}(x)$. If $x_p$ is a non-binary cherry,
then $x_p \notin \M(Q)$ by Claim 1. There are two cases, either the parent $p(x)$
of $x$ in $S^*$ is identical to $x_p$ or not.
In the first case, $p(x)=x_p$ is not part of $Q$. 
In the latter case, $p(x)$ refers to some newly added vertex during 
the construction of $S^*$. In this case, $p(x)$ is not contained in $S$ and so neither in $Q$. 
In summary,
the parent of $x$ in $S^*$
is not in $Q$. Due to the A2-edges, $x$ has in-degree at least $1$ in
$\ats{T}{S^*} - \M(Q)$.

Finally, suppose that $x$ is a non-binary interior vertex of $S$, i.e. $x$ is a cherry.
Let $S'$ be a split refinement of $S$ at $x$ such that $S^*$ refines $S'$.  Recall that as in Claim 1, $S'$ agrees with $\RT (T;t,\sigma)$.
This and the fact that $S$ does not admit a good split refinement implies 
 that $x$ has  in-degree at least $1$ in $\ats{T}{S'} - \M(Q)$.
Now, $x$ is binary in $S'$.  As before, there is a sequence of binary refinements transforming $S'$ into $S^*$.  Since $x$ is not a descendant of any non-binary vertex in $S'$, by applying Lemma~\ref{lem:ancestors-dont-change} on each successive refinement, 
$I_{\ats{T}{S'}}(x) = I_{\ats{T}{S^*}}(x)$.
It follows that $x$ has in-degree at least $1$ in $\ats{T}{S^*} - \M(Q)$ as well. 

This proves Claim 3.
\end{owndesc}

Now, let $y \in V(S^*) \setminus V(S)$.  Thus, $y$ must have been created by one of the extensions that transforms $S$ into $S^*$, and so in $S^*$, $y$ must be a descendant of a vertex $x$ such that $x$ is a cherry in $S$.  Since $x \notin \M(Q)$ by Claim 1, and because of the A2-edges, $y$ must have in-degree at least $1$ in $\ats{T}{S^*}  - \M(Q)$.

To finish the argument, note that $V(\ats{T}{S^*} - \M(Q)) = (V(T) \setminus \M(Q)) \cup (V(S) \setminus \M(Q)) \cup (V(S^*) \setminus V(S))$.
We just argued that each vertex in $V(S^*) \setminus V(S)$ has in-degree at least $1$ in $\ats{T}{S^*} - \M(Q)$, and by Claim 2 and Claim 3, it follows that every vertex of $\ats{T}{S^*} - \M(Q)$ has in-degree at least $1$.  This implies that $\ats{T}{S^*} - \M(Q)$ contains a cycle, and hence that $\ats{T}{S^*}$ also contains a cycle.  We have reached a contradiction, proving the lemma.
\end{proof}

We next show that if we are able to find a good split refinement $S'$ of $S$, the $((T; t, \sigma), S')$ instance is equivalent in the sense that 
$((T; t, \sigma), S)$ admits a solution if and only if $((T; t, \sigma), S')$ also admits a solution.
First, we provide the following lemma for later reference.

\begin{lemma}\label{lem:q-stays-topo}
Let $((T;t,\sigma), S)$ be a GTC instance and let $Q$ be a maximal topological sort of $\ats{T}{S}$. Moreover, 
let $S'$ be a split refinement of $S$ at a cherry $x$.
Then, for any maximal topological sort $Q'$ of $\ats{T}{S'}$, it holds that 
$\M(Q) \subseteq \M(Q')$.
\end{lemma}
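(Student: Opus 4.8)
The goal is to show that $\M(Q) \subseteq \M(Q')$, where $Q$ is a maximal topological sort of $\ats{T}{S}$ and $Q'$ is any maximal topological sort of the split-refined graph $\ats{T}{S'}$. The plan is to argue that every vertex in $\M(Q)$ can be appended to $Q'$ if it were missing, which by maximality forces it to already lie in $\M(Q')$. Concretely, I would proceed by contradiction: suppose $\M(Q) \not\subseteq \M(Q')$, and consider the vertices of $\M(Q)$ in the order they appear in $Q = (v_1, \dots, v_k)$, picking the first index $i$ with $v_i \in \M(Q) \setminus \M(Q')$. By the choice of $i$, all of $v_1, \dots, v_{i-1}$ lie in $\M(Q')$.

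The key step is to show that $v_i$ has in-degree $0$ in $\ats{T}{S'} - \M(Q')$, which would contradict the maximality of $Q'$ (since $v_i$ could then be appended). To do this I need to control the in-neighbors of $v_i$ in the refined graph $\ats{T}{S'}$ and relate them to its in-neighbors in $\ats{T}{S}$. Since $v_i \in \M(Q)$ and $v_i$ appears at position $i$ in the sort $Q$, every in-neighbor of $v_i$ in $\ats{T}{S}$ lies among $\{v_1, \dots, v_{i-1}\} \subseteq \M(Q)$, hence in $\M(Q)$. The natural tool here is Lemma~\ref{lem:ancestors-dont-change}, which tells us that for any vertex $y$ with $y \not\preceq_S x$, the in-neighbor sets coincide: $I_{\ats{T}{S'}}(y) = I_{\ats{T}{S}}(y)$. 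So the clean case is when $v_i \not\preceq_S x$: then $v_i$'s in-neighbors are unchanged, all lie in $\{v_1,\dots,v_{i-1}\} \subseteq \M(Q')$, and $v_i$ has in-degree $0$ in $\ats{T}{S'} - \M(Q')$, giving the contradiction.

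The main obstacle will be the remaining case, where $v_i \preceq_S x$, since Lemma~\ref{lem:ancestors-dont-change} does not directly apply and the split refinement can genuinely change the in-neighbor structure at or below the cherry $x$. Here I would examine the possible identities of $v_i$. Because $x$ is a cherry of the almost binary tree $S$, the vertices $z$ with $z \preceq_S x$ are just $x$ itself and its children (which are leaves of $S$). For a child leaf $\ell$ of $x$, the split refinement at $x$ may change its parent, but since $\ell$ is a leaf its A1/A3 in-neighbors are governed by $\hmu$-values that are unaffected (an LCA landing on a leaf is stable under refinement), and the only new possible in-edge is an A2-edge from the newly created split-vertex $y$; I would argue this new parent cannot obstruct us because $v_i \in \M(Q)$ forces the original parent $x \in \M(Q)$ via the descendant-closure argument used in Claim 1 and Claim 3 of Proposition~\ref{prop:no-solution}, and one tracks that the corresponding refined parent inherits membership in $\M(Q')$ from the earlier-processed vertices. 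For $v_i = x$ itself, I would use that $x$'s new children in $S'$ have strictly smaller leaf sets, so any A1/A3 edge into $x$ in $\ats{T}{S'}$ comes from an edge into $x$ in $\ats{T}{S}$ (as argued in Claim 1), keeping all in-neighbors within $\{v_1, \dots, v_{i-1}\} \subseteq \M(Q')$.

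Thus in every case $v_i$ has in-degree $0$ in $\ats{T}{S'} - \M(Q')$, contradicting the maximality of $Q'$; hence no such $i$ exists and $\M(Q) \subseteq \M(Q')$. I expect the bookkeeping for the $v_i \preceq_S x$ case — carefully ruling out that the split creates a fresh in-neighbor outside $\M(Q')$ — to be the delicate part, and I would lean on the same edge-by-edge case analysis (A1 via $\hmu$-stability, A2 via the parent relation, A3 via $\hmu$-stability, A4 being irrelevant for species-tree targets) that underlies Lemmas~\ref{lem:binarize} and \ref{lem:ancestors-dont-change}.
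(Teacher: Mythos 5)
Your overall strategy is essentially the paper's: the paper replays $Q$ as a partial topological sort of $\ats{T}{S'}$ (inserting $x_1,x_2$ right after $x$) via an edge-by-edge comparison of in-neighbors in $\ats{T}{S'}$ versus $\ats{T}{S}$, and your minimal-counterexample framing is just the contrapositive of that induction. The clean case $v_i \in V(S)$, $v_i \not\preceq_S x$ via Lemma~\ref{lem:ancestors-dont-change}, and the observation that $v_i \in \M(Q)$ forces $x \in \M(Q)$ when $v_i \preceq_S x$, both match the paper. However, there are two concrete gaps.

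First, you never handle $v_i \in V(T)$. Your case split is governed by $\preceq_S$, and Lemma~\ref{lem:ancestors-dont-change} is stated only for vertices \emph{of $S$}; it cannot be invoked for gene-tree vertices, and its conclusion is in fact false for them: if $u$ is the speciation parent of $v_i$ in $T$ with $\hmu_S(u)=x$, the A1-edge $(\hmu_S(u),v_i)=(x,v_i)$ can become $(\hmu_{S'}(u),v_i)$ with $\hmu_{S'}(u)\in\{x_1,x_2\}$, and the source of an A4-edge $(\lca(\hmu(u),\hmu(v)),u)$ can likewise drop from $x$ to a new vertex. Your parenthetical that A4 is ``irrelevant for species-tree targets'' suggests you overlooked that the offending $v_i$ may lie in $V(T)$; the paper devotes an entire case to exactly this. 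Second, the statement that $x_1,x_2$ ``inherit membership in $\M(Q')$'' is asserted but not proved, and it is the crux: $x_1,x_2$ do not exist in $\ats{T}{S}$, so they are outside $\M(Q)$ and your minimal-counterexample machinery says nothing about them, yet they can be in-neighbors in $\ats{T}{S'}$ of leaf children of $x$ (not only via the new A2-edge but also via A1-edges whose source $\hmu$-value moves from $x$ to $x_j$) and of gene-tree vertices. One must show separately that every in-neighbor of $x_j$ in $\ats{T}{S'}$ other than $x$ corresponds to an in-neighbor of $x$ in $\ats{T}{S}$ (hence precedes $x$ in $Q$) and that $x_j$ carries no self-loop; this is the content of the paper's ``$w_i = x$'' case and cannot be skipped. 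Until these two points are filled in, the proof is incomplete.
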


\begin{proof}
Assume without loss of generality that the cherry $x$ is non-binary in $S$, as otherwise $S=S'$ and we are done. 
Let $x_1, x_2$ be the children of $x$ in $S'$, and assume furthermore w.l.o.g. 
that $|\L(S'(x_1))| \geq |\L(S'(x_2))|$.
Note that $x_2$ could be a leaf, but that $x_1$ must be an internal vertex since $x$ is a non-binary cherry.
Now, if $\M(Q)=\emptyset$, then the lemma statement is trivially satisfied. Hence, assume that 
$Q = (w_1, \ldots, w_l)$, $l\geq 1$.
We construct partial topological sorts $Q_0, Q_1, \ldots, Q_l$ of $\ats{T}{S'}$ as follows.
Define $Q_0 = ()$ as an empty sequence and, for each $1 \leq i \leq l$, 
$Q_i$ is obtained from $Q_{i-1}$ by appending $w_i$ to $Q_{i-1}$ if $w_i \neq x$, 
and if $w_i = x$, by appending $x$ and $x_1$ (in this order) to $Q_{i-1}$, 
and then appending $x_2$ to $Q_{i-1}$ if it is not a leaf in $S'$. 
We show, by induction, that each $Q_i$ is a partial topological sort of $\ats{T}{S'}$.  The base case $i = 0$ is clearly satisfied.   
So let us assume that for $i > 0$ the sequence $Q_{i - 1}$ 
is a partial topological of of $\ats{T}{S'}$.  Consider now the vertex $w_i$.

\begin{owndesc}
\item[\textnormal{\em Case: $w_i \in V(S)$ and $w_i \not\preceq_S x$} ] \ \\
By Lemma~\ref{lem:ancestors-dont-change}, $I_{\ats{T}{S}}(w_i) = I_{\ats{T}{S'}}(w_i)$.  Since each member of $I_{\ats{T}{S}}(w_i)$ precedes $w_i$ in $Q$, 
$\M(Q_{i-1})$ contains $I_{\ats{T}{S}}(w_i)$.  It follows that appending $w_i$ 
to $Q_{i-1}$ yields a partial topological sort $Q_i$ of $\ats{T}{S'}$.
\end{owndesc}

In the remaining cases, we will make frequent use of the fact that if 
$Q'$ is a partial topological sort of $\ats{T}{S'}$ and $v$ is a vertex with 
 $I_{\ats{T}{S'}}(v)\setminus M = \emptyset$ for some (possibly empty) subset $M\subseteq \M(Q')$, then appending $v$ to $Q'$
yields a partial topological sort of $\ats{T}{S'}$. 
In other words, we can w.l.o.g.\ assume $I_{\ats{T}{S'}}(v)\setminus M\neq \emptyset$
for all such considered sets. 

\begin{owndesc}
\item[\textnormal{\em Case: $w_i \in V(S)$ and $w_i = x$} ] \ \\
We start by showing that the sequence $Q^x_{i-1}$ obtained by appending $x$ to $Q_{i-1}$
is a partial topological sort of $\ats{T}{S'}$.
Let $z \in I_{\ats{T}{S'}}(x)$. 
Suppose first that $z \in V(S')$. 
Then $(z, x)$ is either an A1- or A2-edge of $\ats{T}{S'}$.  If $(z, x)$ is an A2-edge, then $z$ is the parent of $x$ in both $S$ and $S'$.  Thus $(z, x) \in E(\ats{T}{S})$ and since $x \in \M(Q)$, we must have $z \in \M(Q)$.  Moreover, $z$ must precede $x$ in $Q$, and it follows that $z \in \M(Q_{i-1})$.  If $(z, x)$ is an A1-edge, then $x \preceq_{S'} z$.  
If $x \prec_{S'} z$, then $x \prec_{S} z$ as well.  Thus in $\ats{T}{S}$, there is a path of A2-edges from $z$ to $x$, implying that $z$ precedes $x$ in $Q$.
Finally if $x = z$, then $\ats{T}{S'}$ contains the self-loop $(x, x)$.  
In this case, there is an edge $(u, v) \in E(T)$ such that $(x, x) = (\hmu_{S'}(u), \hmu_{S'}(v))$.  By construction, $\L(S'(x)) = \L(S(x))$ and therefore, $(\hmu_S(u), \hmu_S(v)) = (x, x)$ is an edge of $\ats{T}{S}$.  This case cannot occur, since it is impossible that $x \in \M(Q)$ if $x$ is part of a self-loop.
Therefore, $z$ precedes $x$ in $Q$ whenever $z \in V(S')$.

If instead $z \in V(T)$, then $(z, x)$ is either an A1- or A3-edge of $\ats{T}{S'}$, 
in which case there is $z'\in V(T)$ such that $(z, x) = (z, \hmu_{S'}(z'))$. 
By construction  $\L(S'(x)) =\L(S(x))$ and therefore, $(z, \hmu_{S}(z')) = (z, x)$ is an edge of $\ats{T}{S}$. 
Again, $z$ must precede $x$ in $Q$.  We have thus shown that $z$ precedes $x$ in $Q$ for every $z \in I_{\ats{T}{S'}}(x) \subseteq \M(Q_{i-1})$.
Hence, appending $x$ to  $Q_{i-1}$ yields a partial topological sort $Q^x_{i-1}$ of $\ats{T}{S'}$.

We continue with showing that $Q_{i}$ is a partial topological sort of $\ats{T}{S'}$.
Note,  $Q_{i}$ is obtained by appending $x_1$ and, in case $x_2$ is not a leaf in $S'$, also $x_2$ to the partial topological sort  $Q^x_{i-1}$ of $\ats{T}{S'}$.
Let $z \in I_{\ats{T}{S'}}(x_j)\setminus \{x\}$, where $x_j \in \{x_1, x_2\}$ is is chosen to be an interior vertex of $S'$. 
Note, $x_j=x_1$ is always possible as argued at the beginning of this proof.
Suppose that $z \in V(S')$. In this case,  $(z, x_j)$ cannot be an A2-edge since it would imply $x=z$; a contradiction.
Hence, $(z, x_j)$ is an A1-edge of $\ats{T}{S'}$ and $x_j \preceq_{S'} z$.
Similarly as before, if $x_j \prec_{S'} z$, then $x \prec_{S} z$ since $z
\neq x$. Thus, $z$ precedes $x$ in $Q$, since $\ats{T}{S}$ contains a path of A2-edges from $z$ to
$x$. If $x_j = z$, then there is an edge $(u, v) \in E(T)$
such that $(\hmu_{S'}(u), \hmu_{S'}(v)) = (x_j, x_j)$. Since $x_j$ is supposed not to be a
leaf in $S'$ and by construction of $S'$ from $S$,  we must have in $S$ that $(\hmu_S(u), \hmu_S(v)) = (x, x)$, contradicting $x
\in \M(Q)$. Now, assume that $z \in V(T)$ in which case $(z, x_j)$ is
either an A1- or A3-edge in $\ats{T}{S'}$. Again, there must be a vertex
$z'\in V(T)$ such that $(z, x_j) = (z, \hmu_{S'}(z'))$. By construction
$\L(S'(x_i)) \subset \L(S(x))$. This and $x_j = \hmu_{S'}(z')$ immediately
implies that $x = \hmu_{S}(z')$. 
Thus, $(z, \hmu_{S}(z')) = (z, x)$ is an edge of $\ats{T}{S}$ and  $z$ must precede $x$ in $Q$. 
Again, 
this holds for every $z$ which implies implies that $I_{\ats{T}{S'}}(x_j) \setminus \{x\} \subseteq \M(Q^x_{i-1})$.
Thus, appending $x_1$ and $x_2$ to $Q^x_{i-1}$ after $x$ yields the partial topological sort $Q_i$ of $\ats{T}{S'}$.

\item[\textnormal{\em Case: $w_i \in V(S)$ and $w_i \prec_S x$} ] \ \\
Since $x$ is a cherry, $w_i$ must be a leaf in $S$.
Thus $x$ precedes $w_i$ in $Q$ and therefore, we may assume that $x, x_1$
and, in case $x_2$ is not a leaf in $S'$, also $x_2$ are contained in the partial topological sort $Q_{i-1}$
of $\ats{T}{S'}$ Note, $x_2$ could be absent from $Q_{i-1}$ if it is a leaf and $w_i = x_2$. 
That is, we may assume that $w_i$ is a child of either $x, x_1$ or $x_2$ in $S'$, 
and that the parent of $w_i$ in $S'$ is in $Q_{i-1}$.
Consider $z \in I_{\ats{T}{S'}}(w_i) \setminus \{x,
x_1, x_2\}$. If $z \in V(S')$, then $w_i \preceq_{S'} z$.  As before, if $w_i \prec_{S'} z$, then $w_i \prec_{S} z$ and because of the A2-edges of $\ats{T}{S}$, 
$z$ precedes $w_i$ in $Q$.   If $z = w_i$, then $(w_i, w_i)$ is a self-loop in $\ats{T}{S'}$.  
By Lemma~\ref{lem:self-loops}, 
$(w_i, w_i)$ is also a self-loop of $\ats{T}{S}$; a contradiction since, by Remark \ref{rem:self-loop}, 
$\ats{T}{S}$ has no self-loops on its leaves.
So assume that $z \in V(T)$. Then $(z, w_i)$ is an
A1- or A3-edge. Since $w_i$ is a leaf, we have that for any $v \in V(T)$,
$\hmu_{S'}(v) = w_i$ if and only if $\hmu_S(v) = w_i$. It follows that $(z,
w_i) \in E(\ats{T}{S})$. Therefore, $z$ precedes $w_i$ in $Q$ and $z$
belongs to $Q_{i-1}$. Thus we may append $w_i$ to $Q_{i-1}$ to obtain a
partial topological sort $Q_i$ of $\ats{T}{S'}$.

\item[\textnormal{\em Case: $w_i \in V(T)$} ] \ \\
Let $z \in I_{\ats{T}{S'}}(w_i)$. Thus, $(z,w_i)$ is either an A1- or A4-edge in $\ats{T}{S'}$. 
If $z \in V(T)$, then $(z,w_i)$ is an A1-edge in  $\ats{T}{S'}$. Since the event-labels in $T$ are fixed, 
$(z,w_i)$ is an A1-edge in  $\ats{T}{S}$ and thus, 
 $z \in I_{\ats{T}{S}}(w_i)$. Therefore, $z$ precedes $w_i$ in $Q$. 

Now, suppose $z \in V(S')$. 
If $(z, w_i)$ is an A1-edge, then the parent $u$ of $w_i$ in $T$ satisfies
$\hmu_{S'}(u) = z$. If $z \notin \{x, x_1, x_2\}$, then we immediately obtain $\hmu_{S}(u) = z$.
Hence, $(z, w_i) \in E(\ats{T}{S})$ and thus, $z$ precedes
$w_i$ in $Q$. If $z \in \{x, x_1, x_2\}$, then it is easy to verify that $\hmu_S(u) =
x$. Thus $(x, w_i) \in E(\ats{T}{S})$, $x$ precedes $w_i$ in $Q$. 
By construction, we have added $x,x_1,x_2$ in one of the previous steps
to obtain $Q_j$, $1\leq j\leq i-1$. Hence,  $z \in \{x, x_1, x_2\} $ precedes $w_i$ in $Q$.

If instead $(z, w_i)$ is an A4-edge, then $w_i$ has a child $v$ such that 
$\lca_{S'}(\hmu_{S'}(w_i), \hmu_{S'}(v)) = z$.
Clearly, it holds that   $z = \lca_{S'}(Z)$ for 
$Z = \sT(w_i) \cup \sT(v)$.
If $z \in \{x, x_1, x_2\}$, then $(\lca_S(Z), w_i) = (x, v) \in E(\ats{T}{S})$, 
and if $z \notin \{x, x_1, x_2\}$, then $(\lca_S(Z), w_i) = (\lca_{S'}(Z), w_i) = (z, w_i) \in E(\ats{T}{S})$. 
In both cases, $z$ precedes $w_i$ in $Q$.

In every case, each $z$ is already contained in $Q_{i-1}$, and we may append $w_i$ to $Q_{i-1}$
to obtain a partial topological sort $Q_i$ of $\ats{T}{S'}$.

\end{owndesc}

We have shown that $Q_l$ is a partial topological sort of $\ats{T}{S'}$
satisfying $\M(Q) \subseteq \M(Q_l)$. If we add in-degree $0$ vertices in
$Q_l$ until we obtain a maximal topological sort $Q'$ of $\ats{T}{S'}$, then 
have $\M(Q) \subseteq \M(Q_l) \subseteq \M(Q')$, as desired.
\end{proof}

Our last step is to show that any good split refinement leads to a solution, if any.

\begin{theorem}\label{thm:equiv-refinement}
Let $((T; t, \sigma), S)$ be a GTC instance, and suppose that $S$ admits a
good split refinement $S'$. Then $((T;t, \sigma), S)$ admits a solution if and
only if $((T; t, \sigma), S')$ admits a solution.
Moreover, any solution for $((T;t, \sigma), S')$, if any, is also a solution for $((T;t,\sigma), S)$.
\end{theorem}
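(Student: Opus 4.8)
The plan is to prove the theorem by establishing the two directions of the ``if and only if,'' and then noting that the ``moreover'' claim is essentially the reverse direction. The key observation is that since $S'$ is a \emph{refinement} of $S$ (obtained by a single split), any refinement of $S'$ is also a refinement of $S$; this handles containment of the refinement relation automatically. So the forward direction---if $((T;t,\sigma),S')$ admits a solution $S^*$, then $S^*$ is also a solution for $((T;t,\sigma),S)$---is immediate: $S^*$ refines $S'$ which refines $S$, so $S^*$ refines $S$, and $S^*$ displays $\RT(T;t,\sigma)$ with $\ats{T}{S^*}$ acyclic by hypothesis. This simultaneously proves the ``moreover'' statement.

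The harder direction is: if $((T;t,\sigma),S)$ admits a solution, then so does $((T;t,\sigma),S')$. First I would invoke Proposition~\ref{prop:IFFbinRef} to reduce to binary refinements: $((T;t,\sigma),S)$ admits a solution if and only if there is a binary refinement $S^*$ of $S$ that agrees with $\RT(T;t,\sigma)$ with $\ats{T}{S^*}$ acyclic. The goal is then to show that such an $S^*$ can be chosen to also be a refinement of $S'$. Since $S'$ is a split refinement of $S$ at cherry $x$ splitting $\child(x)$ into $X_1,X_2$, I would argue that we may assume $S^*$ is compatible with this particular split. The crux is to show that among all binary solutions refining $S$, there is one that respects the bipartition $\{X_1,X_2\}$ of $\child(x)$ imposed by $S'$.

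The main obstacle---and where the real work lies---is proving that the \emph{good}-ness of the split refinement (not merely that it is a split refinement) guarantees the existence of a binary solution refining $S'$. Here I would want to exploit both defining conditions of a good split refinement: that $S'$ agrees with $\RT(T;t,\sigma)$, and that all in-neighbors of $x$ in $\ats{T}{S'}$ lie in $\M(Q)$ for a maximal topological sort $Q$ of $\ats{T}{S}$. The natural strategy is to take an arbitrary binary solution $S^*$ refining $S$ and surgically modify it so that it refines $S'$ while preserving acyclicity and agreement. I would use Lemma~\ref{lem:q-stays-topo} to track how $\M(Q)$ grows, and Lemma~\ref{lem:binarize} together with Proposition~\ref{prop:no-solution} as the structural backbone: since $S$ admits a good split refinement $S'$, and any solution refining $S$ must be consistent with \emph{some} split at every non-binary cherry, the good-ness condition should force agreement with the specific bipartition $\{X_1,X_2\}$—otherwise the in-neighbor condition would be violated, contradicting that $Q$ is maximal for $\ats{T}{S}$.

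Concretely, I expect the argument to proceed by contraposition combined with the earlier propositions: if $((T;t,\sigma),S')$ had \emph{no} solution, then by Proposition~\ref{prop:no-solution} applied inductively (or by a direct cycle-construction mirroring its proof), every binary refinement of $S'$ would contain a cycle in its auxiliary graph. Since the good split refinement creates a new in-degree-$0$ vertex (the newly created internal vertex $y$, whose in-neighbors all lie in $\M(Q)$), one can extend $\M(Q)$ by $y$ via Lemma~\ref{lem:q-stays-topo}, strictly enlarging the portion of the graph that is provably cycle-free. The technical heart is verifying that passing from $S$ to the good split refinement $S'$ cannot destroy an existing binary solution, which amounts to showing that any binary solution refining $S$ can be ``rerouted'' through $S'$. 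I anticipate the cleanest route is to show directly that a binary solution of $((T;t,\sigma),S)$ must refine \emph{some} good split refinement, and then appeal to the uniqueness-up-to-$\M(Q)$ structure to conclude it may be taken to refine $S'$ specifically.
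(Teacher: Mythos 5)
Your easy direction (and the ``moreover'' claim) is correct and matches the paper: a solution of $((T;t,\sigma),S')$ refines $S'$, hence refines $S$, hence is a solution of $((T;t,\sigma),S)$. The hard direction, however, has a genuine gap, and one of the two strategies you sketch for closing it is actually false. You claim that the good-ness of $S'$ ``should force agreement with the specific bipartition $\{X_1,X_2\}$,'' i.e.\ that a binary solution $S^*$ refining $S$ can be taken to refine $S'$ essentially because it must be consistent with that split. This is not so: several distinct good split refinements can exist at the same cherry $x$, inducing mutually incompatible splits of $\child(x)$ (e.g.\ $\{\{a,b\},\{c,d\}\}$ versus $\{\{a\},\{b,c,d\}\}$, which force the incompatible triplets $ab|c$ and $bc|a$ respectively), and a given solution $S^*$ can refine at most one of them. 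The theorem must nevertheless hold for \emph{every} good split refinement, so the solution for $((T;t,\sigma),S')$ will in general be a \emph{different} tree from $S^*$. Your alternative route via contraposition of Proposition~\ref{prop:no-solution} is also not available: that proposition speaks about whether $S'$ itself admits a good split refinement, and invoking it tells you nothing about how a solution of $((T;t,\sigma),S)$ relates to the particular bipartition defining $S'$.

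What is missing is precisely the surgery you allude to but do not carry out. The paper constructs, from $S^*$ and the bipartition $\{X,Y\}$ of $\child(x)$ given by $S'$, a new tree $\hs$ obtained by replacing the part of $S^*$ below $x$ with the two restrictions $S^*|_X$ and $S^*|_Y$; this $\hs$ is a binary refinement of $S'$ by construction. Agreement with $\RT(T;t,\sigma)$ then follows from the decomposition $rt(\hs)\subseteq rt(S^*)\cup R_x$ with $R_x\subseteq rt(S')$, using that both $S^*$ and $S'$ agree with $\RT(T;t,\sigma)$. The substantial work is showing $\ats{T}{\hs}$ is acyclic: one assumes a cycle $C$, decomposes it into maximal subpaths inside $V(\hs(x))$ and outside, and reroutes each inside-subpath through $\ats{T}{S^*}$ to manufacture a cycle there, contradicting acyclicity of $\ats{T}{S^*}$. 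The good-split hypothesis enters exactly here --- via Lemma~\ref{lem:q-stays-topo} it forces $x$ into every maximal topological sort of $\ats{T}{\hs}$, hence off every cycle, which is what makes the rerouting of the inside-subpaths (as descending chains of A1/A2-edges strictly below $x$, hence confined to one of $S^*|_X$ or $S^*|_Y$) possible. None of this is derivable from the lemmas you cite alone; it is the technical heart of the proof and your proposal leaves it open.
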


\begin{proof}
It is easy to see that any solution $S^*$
of $((T; t, \sigma), S')$ would be a solution for  $((T; t, \sigma), S)$.
Hence, if $((T; t, \sigma), S)$ does not admit a solution,
then $((T; t, \sigma), S')$ cannot admit a solution.

\begin{figure}[h]
	\begin{center}
	\includegraphics[width=0.9\textwidth]{./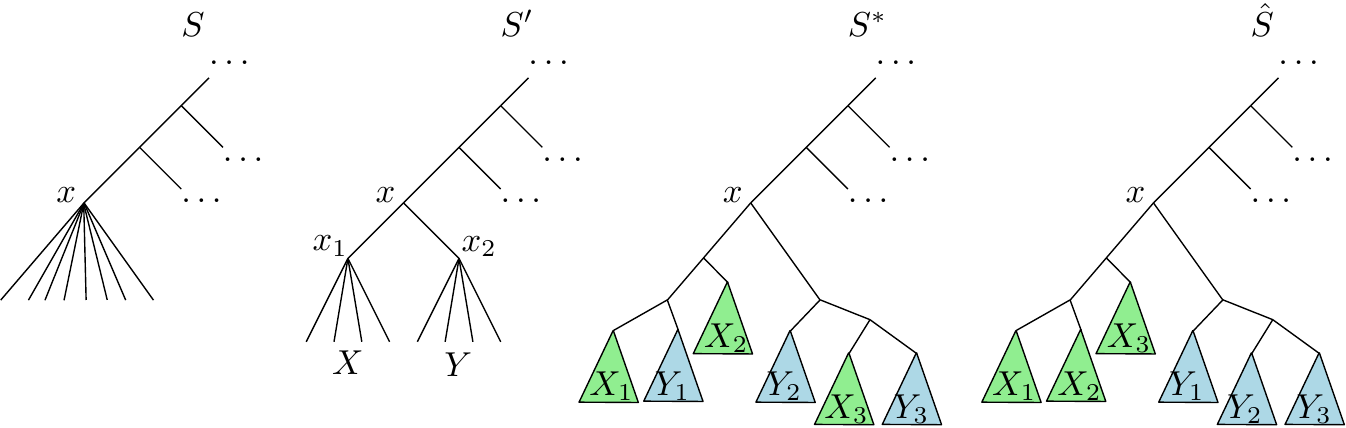}
	\end{center}
	\caption{A representation of the trees $S, S', S^*$ and $\hs$.  The $X_1, X_2$ and $X_3$ triangles represent subtrees containing only leaves from $X$ (same with $Y_1, Y_2, Y_3$ and $Y$).  }
	\label{fig:all-the-s-guys}
\end{figure}

For the converse, suppose now that $((T; t, \sigma), S)$ admits a solution. Thus, 
there is a binary refinement $S^*$ of $S$
that displays $\RT (T;t,\sigma)$ and such that $\ats{T}{S^*}$ is acyclic.
Let $S'$ be any good split refinement of $S$ at some cherry $x$ of $S$. 
Furthermore,
let $x_1, x_2$ be the children of $x$ in $S'$, and let $X = \L(S'(x_1))$
and $Y = \L(S'(x_2))$. Note that $\{X,Y\}$ is a partition of the children
of $x$ in $S$.
Consider the trees $S^*|_X$ and $S^*|_Y$. We define another tree $\hs$ obtained by replacing the children of $x$ in $S^*$ by $S^*|_X$ and $S^*|_Y$.  More precisely, first observe that, by construction 
$\L(S^*|_X)= X$ and $\L(S^*|_Y)= Y$. Moreover, for any binary refinement $S^*$
of $S$ it must hold that $\L(S^*(x))$ is the set of children $\child(x)$ in $S$. In particular, $x$ is an 
ancestor in $S^*$ of every vertex in $S^*|_X$ as well as in $S^*|_Y$. 
Hence, we can safely replace the two subtrees $S^*(v_1)$ and $S^*(v_2)$ rooted  at the two children $v_1,v_2$ of $x$ in $S^*$ by 
$S^*|_X$ and $S^*|_Y$ (by defining the root of $S^*|_X$ and the root of $S^*|_Y$ as the two 
new children of $x$) to obtain another tree $\hs$ with  $\L(\hs) = \L(S^*)$. 
By construction, $\hs$ is identical to $S^*$, except that the two subtrees below $x$ are replaced by $S^*|_X$ and $S^*|_Y$.
An example of the trees $S, S', S^*$ and $\hs$ is shown in Figure~\ref{fig:all-the-s-guys}.

Clearly, $\hs(x_1)= S^*|_X$, resp., $\hs(x_2) =S^*|_Y$ is a binary refinement of 
$S'(x_1) $, resp., $S'(x_2)$. Moreover, $S^*|_{\L(S^*)\setminus (X\cup Y)}$ is a binary refinement
of $S'|_{\L(S')\setminus (X\cup Y)}$. Taking the latter two arguments together, 
$\hs$ is a binary refinement of $S'$.

We proceed with showing that  $\hs$ is a solution to $((T; t, \sigma), S')$. 
To this end, we apply Prop.\ \ref{prop:IFFbinRef} and show that 
$\hs$ agrees with $\RT (T; t,\sigma)$ and that $\ats{T}{\hs}$ is acyclic.

Let us first argue that $\hs$ agrees with $\RT (T; t,\sigma)$.  
Observe first that since $\hs$ contains $S^*|_X$ and $S^*|_Y$ as subtrees, $\hs$ displays all triples in $ab|c \in rt(S^*)$ 
with $a,b,c\in X$, or with $a,b,c\in Y$. Moreover, $\hs$ displays all triples $ab|c \in rt(S^*)$
for which at least one of $a,b$ and $c$ is not contained in $X\cup Y$.
The latter two arguments and $\lca_{\hs}(X\cup Y)=x$ imply that  
$\hs$ displays all triples  $ab|c \in rt(S^*)$ except possibly those 
for which $lca_{\hs}(a,b,c) = x$.
Let $R_x = \{ab|c \in rt(\hs) : lca_{\hs}(a,b,c) = x\}$.
By the latter arguments, the only triplets in $rt(\hs)$ that are not in $rt(S^*)$
are in $R_x$, i.e. $rt(\hs) \subseteq rt(S^*) \cup R_x$.
By the definition of a good split refinement, $S'$ agrees with $\RT (T; t,\sigma)$. 
Note that $R_x$ contains precisely those triples $ab|c$ for which either
$a,b\in X$ and $c\in Y$ or $c\in X$ and $a,b\in Y$. This observation immediately implies that 
$R_x \subseteq rt(S')$.
We thus have $rt(\hs) \subseteq rt(S^*) \cup rt(S')$ and since both $S^*$ and $S'$ agree with $\RT (T; t,\sigma)$, it follows that $\hat{S}$ agrees with $\RT (T; t,\sigma)$.

We must now argue that $\ats{T}{\hs}$ is acyclic.
Assume for contradiction that $\ats{T}{\hs}$ contains a cycle $C = (w_1, w_2, \ldots, w_k, w_1)$.  
Since $\hs$ is binary and agrees with $\RT (T; t, \sigma)$, $\hs$ displays $\RT (T; t, \sigma)$ and Theorem \ref{thm:SpeciesTriplets}
implies that there is a reconciliation map from 
from $(T;t,\sigma)$ to $\hs$. 
By Lemma \ref{lem:self-loops}, $\ats{T}{\hs}$ does not
contain self-loops and thus $k>1$ for $C = (w_1,\ldots, w_k, w_1)$.
We will derive a contradiction by showing that $\ats{T}{S^*}$ contains a cycle.
The proof is divided in a series of claims.

\begin{owndesc}
\item[Claim 1:]
\emph{If $(u, v) \in E(\ats{T}{\hs})$ and $u, v \notin V(\hs(x))$, then $(u, v) \in E(\ats{T}{S^*})$.}

Note that $\hs$ and $S^*$ are identical except for the subtree rooted at $x$.  
Thus, $(u, v)$ is an A2-edge in $E(\ats{T}{\hs}$ if and only if it is an A2-edge in $\ats{T}{S^*}$. 
Moreover, for all other edge types, we have $\hmu_{\hs}(u) = \hmu_{S^*}(u)$, $\hmu_{\hs}(v) = \hmu_{S^*}(v)$,
as well as $\lca_{\hs}(\hmu(u), \hmu(v)) = \lca_{S^*}(\hmu(u), \hmu(v))$.
This directly implies that every edge $(u, v) \in E(\ats{T}{\hs})$ that does not involve a vertex of $\hat{S}(x)$ is also in $S^*$.
This proves Claim 1. 
\end{owndesc}

To stress once again,
since $\hs$ is binary and agrees with $\RT (T; t,\sigma)$, it must display
    $\RT (T; t,\sigma)$. Thus, we can apply Theorem \ref{thm:SpeciesTriplets} to
    conclude that there is a reconciliation map from $\hs$ to $(T; t,\sigma)$.

Now, let $Z = (V(T) \cup V(\hs)) \setminus V(\hs(x)$.  
Observe that $Z = (V(T) \cup V(S^*)) \setminus V(S^*(x)$.
If $C$ does not contain a vertex of $V(\hs(x))$, then by Claim 1, 
every edge of $C$ is also in $\ats{T}{S^*}$.  Thus $C$ is also a cycle in $\ats{T}{S^*}$, contradicting that it is acyclic.  
Therefore, we may assume that $C$ contains at least one vertex from $V(\hs(x))$.  
On the other hand, assume that $C$ does not contain a vertex of $Z$.  Then all the vertices of $C$ belong to $V(\hs(x))$.  
Since, as we argued before, $\ats{T}{\hs}$ does not contain self-loops, we conclude that every edge $(u,v)$ of $C$ is either an A1- or an A2-edge of $\ats{T}{\hs}$
	that satisfies $v \prec_{\hs} u$. However, this implies that the edges of $C$ cannot form a cycle; a contradiction.
Therefore, $C$ must contain vertices from both $V(\hs(x))$ and $Z$.
Assume, without loss of generality, that $w_1 \in V(\hs(x))$ and $w_k \in Z$.

Now, $C$ can be decomposed into a set of subpaths that alternate between
vertices of $V(\hs(x))$ and of $Z$. More precisely, we say that a subpath $P =
(w_i, w_{i+1}, \ldots, w_l)$ of $C$, where $1 \leq i \leq l \leq k$, is a
$V(\hs(x))$-subpath if $w_i, \ldots, w_l \in V(\hs(x))$. Similarly, we say that
$P$ is a $Z$-subpath if $w_i, \ldots, w_l \in Z$. Now, $C = (w_1, \ldots, w_k)$ is a
concatenation of subpaths $P_1, P'_1, P_2, P'_2, \ldots, P_h, P'_h$ such that
for $1 \leq i \leq h$, $P_i$ is a non-empty $V(\hs(x))$-subpath and $P'_i$ is a
non-empty $Z$-subpath.

We want to show that  $\ats{T}{S^*}$ contains a cycle. To this end, we will construct
a cycle $C^*$ in $\ats{T}{S^*}$ such that $C^*$ is the concatenation of subpaths $P_1^*, P_1', \ldots, P_h^*, P_h'$, where each $P^*_i$ 
is a subpath of $\ats{T}{S^*}$ that replaces $P_i$.
First notice that for each $1 \leq i \leq h$, all the edges of $P'_i$ are in $\ats{T}{S^*}$ by Claim 1.  Therefore, every $P'_i$ is a path in $\ats{T}{S^*}$.

In what follows, we consider the $V(\hs(x))$-subpath $P_i = (w_p, w_{p+1}, \ldots, w_q)$, where $1 \leq i \leq h$ 
($w_p = w_q$ may be possible if $P_i$ consists of a single vertex only).
Notice that $w_{p - 1}$ and $w_{q+1}$ are in $Z$ (where we define $w_{p-1} = w_k$ if $p = 1$ and $w_{q+1} = w_1$ if $p = k$).
We construct a path $P^*_i = (w^*_1, \ldots, w^*_r)$ of $\ats{T}{S^*}$ such that 
$(w_{p-1}, w^*_1) \in E(\ats{T}{S^*})$ and $(w^*_r, w_{q+1}) \in E(\ats{T}{S^*})$.  

To this end, we provide the following 

\begin{owndesc}
\item[Claim 2:]
\emph{The vertex $x$ does not belong to $C$.}

Let $Q$ be a maximal topological sort of $\ats{T}{S}$
and let $Q'$ be a maximal topological sort of $\ats{T}{S'}$.
By Lemma~\ref{lem:q-stays-topo}, $\M(Q)  \subseteq \M(Q')$. 
Moreover, since $S'$ is a good split refinement, all the in-neighbors 
of $x$ in $\ats{T}{S'}$ belong to $Q$.  Since $\M(Q)  \subseteq \M(Q')$, all the in-neighbors of 
$x$ in $\ats{T}{S'}$ are also in $Q'$.
This and maximality of $Q'$ implies that
$x$ is itself also in $Q'$.
Let $\hat{Q}$ be a maximal topological sort of $\ats{T}{\hs}$.  
Since $\hs$ can be obtained from a sequence of split refinements starting from $S'$, 
Lemma~\ref{lem:q-stays-topo} implies that $\M(Q') \subseteq \M(\hat{Q})$. 
In particular, $x \in \M(\hat{Q})$.
Lemma \ref{lem:Qproperty} implies that $x$ cannot be contained in any cycle of $\ats{T}{\hs}$, 
which proves Claim 2.
\end{owndesc}

Recalling that $\ats{T}{\hs}$ does not contain self-loops,
every edge $(u, v)$ of $P_i$ is an A1- or A2-edge of $\ats{T}{\hs}$
and satisfies $v \prec_{\hs} u$. 
This implies that either $w_q \prec_{\hs} w_{q-1} \prec_{\hs} \ldots \prec_{\hs} w_p$, 
	 or that $w_p=w_q$. 
	In either case, we have  $w_q \preceq_{\hs} w_p$.
By Claim 2, $w_p \neq x$. This and $w_p\in V(\hs(x))$  implies that $w_p \prec_{\hs} x$.  
By construction of $\hs$ we therefore have
 $\L(\hs(w_p)) \subseteq X$ or $\L(\hs(w_p)) \subseteq Y$.  
We will assume, without loss of generality, that $\L(\hs(w_p)) \subseteq X$.
Since $w_q \preceq_{\hs} w_p$, we  have $\L(\hs(w_q)) \subseteq \L(\hs(w_p)) \subseteq X$.
We now construct two important sets $X_p \subseteq X$ and $X_q \subseteq X$ that are quite
helpful for out construction of a cycle $C^*$ in $\ats{T}{S^*}$.

\begin{owndesc}
\item[Claim 3:]
\emph{There exists a subset $X_p \subseteq X$ such that $w_p =\lca_{\hs}(X_p)$ and 
$(w_{p-1}, \lca_{S^*}(X_p)) \in E(\ats{T}{S^*})$.} 

Since $w_p \in V(\hs)$, the edge $(w_{p-1}, w_p)$ is either an A1-, A2- or
A3-edge in $\ats{T}{\hs}$. Suppose first that $(w_{p-1}, w_p)$ is an A2-edge. Then $w_{p-1}$ is
the parent of $w_p$ in $\hs$. Since $w_p \prec_{\hs} x$, this implies that
$w_{p-1} \in V(\hs(x))$, contradicting $w_{p-1} \in Z$. Therefore, this case is
not possible.

Suppose that $(w_{p-1}, w_p)$ is an A1-edge defined by some $(u, v) \in E(T)$.
Then $w_p\in V(\hs)$ implies  $w_p = \hmu_{\hs}(v) = \lca_{\hs}(\sT(v))$
 and we define $X_p = \sT(v)$.  We must prove that $(w_{p-1}, \lca_{S^*}(X_p)) \in E(\ats{T}{S^*})$.
Since $(u, v) \in E(T)$ yields the A1-edge $(w_{p-1}, w_p)$ in $\ats{T}{\hs}$, 
	  we have $t(v)\in \{\leaf, \spec\}$. 
		Hence, $(u, v)$ yields some A1-edge $(z, \lca_{S^*}(X_p))$ in $\ats{T}{S^*}$
	 for some vertex $z$. In what follows, we show that $z=w_{p-1}$.

If $w_{p-1} \in V(T)$, then $w_{p-1} = u$
and $(u, v)$ defines the A1-edge $(u, \hmu_{S^*}(v)) = (w_{p-1},
\lca_{S^*}(X_p))$ in $\ats{T}{S^*}$. If $w_{p-1} \in V(\hs)$, then $w_{p-1} =
\hmu_{S^*}(u)$. Since $w_{p-1} \in Z$, vertex $w_{p-1}$ must be a strict ancestor of $x$ in $\hs$. 
This and the fact that $S^*$ and $\hs$ coincide except possibly in 
		$S^*(x)$ and $\hs(x)$ implies that 
$\hmu_{\hs}(u) = \hmu_{S^*}(u) = w_{p-1}$.
Hence, $(w_{p-1}, \lca_{S^*}(X_p)) \in E(\ats{T}{S^*})$.

Finally, suppose that $(w_{p-1}, w_p)$ is an A3-edge defined by some $u \in
V(T)$. Then $w_{p-1} = u$ and $w_p = \hmu_{\hs}(u) = \lca_{\hs}(\sT(u))$, where
$\sT(u) \subseteq X$. Define $X_p = \sT(u)$. Then $(w_{p-1}, w_p) = (u,
\lca_{\hs}(X_p))$ and $(u, \hmu_{S^*}(u)) = (w_{p-1}, \lca_{S^*}(X_p)) \in
E(\ats{T}{S^*})$. This proves Claim 3. 
\end{owndesc}

\begin{owndesc}
\item[Claim 4:]
\emph{There exists a subset $X_q \subseteq X$ such that $w_{q} = \lca_{\hs}(X_q)$ and  
$(\lca_{S^*}(X_q), w_{q+1}) \in E(\ats{T}{S^*})$.}

We show first that $w_{q+1} \in V(T)$. Assume, for contradiction, that $w_{q+1} \in V(\hs)$. 
		Since $(w_q,w_{q+1})$ is an edge  of $\ats{T}{\hs}$ and 
		since  $w_{q} \in V(\hs)$, the edge $(w_q,w_{q+1})$ is an A2-edge in $\ats{T}{\hs}$.
	  However, this implies that  $w_{q+1} \prec_{\hs} w_q$ and thus, $w_{q+1}\in V(\hs(x))$;
		a contradiction to $w_{q+1} \in Z$. Hence, $w_{q+1} \in V(T)$. 
Therefore, $(w_q, w_{q+1})$ is either an A1- or A4-edge in $\ats{T}{\hs}$. 

Suppose first that $(w_q, w_{q+1})$ is an A1-edge of $\ats{T}{\hs}$ defined by some $(u, v) \in E(T)$.
Then $(w_q, w_{q+1}) = (\hmu_{\hs}(u), v)$, where $\hmu_{\hs}(u) = \lca_{\hs}(\sT(u))$ and where $\sT(u) \subseteq X$.  
Define $X_q = \sT(u)$. Then $(w_q, w_{q+1}) = (\lca_{\hs}(X_q), w_{q+1})$, and 
$(\hmu_{S^*}(u), v) = (\lca_{S^*}(X_q), w_{q+1})$ is an A1-edge of $\ats{T}{S^*}$.

Suppose instead that $(w_q, w_{q+1})$ is an A4-edge of $\ats{T}{\hs}$ defined by some $(u, v) \in \tredge_T$ with $u=w_{q+1}$.
Then $(w_q, w_{q+1}) = (\lca_{\hs}(\hmu_{\hs}(u), \hmu_{\hs}(v)), u) = (\lca_{\hs}(\sT(u) \cup \sT(v)), u)$. 
Define $X_q = \sT(u) \cup \sT(v)$. 
Hence,  $w_q = \lca_{\hs}(X_q)$,
and since $w_q = \in V(\hs(x))$, we must have $\sT(u) \cup \sT(v) \subseteq X$.
Moreover, $(\lca_{S^*}(\hmu_{S^*}(u), \hmu_{S^*}(v)), u) = (\lca_{S^*}(X_q), w_{q+1})$ is an A4-edge of $\ats{T}{S^*}$.
This completes the proof of Claim 4.
\end{owndesc}

\begin{owndesc}
\item[Claim 5:]
\emph{Let $X_p$ and $X_q$ be subsets of $X$ as defined in Claim 3 and 4.
Then in $\ats{T}{S^*}$, there exists a path from $\lca_{S^*}(X_p)$ to $\lca_{S^*}(X_q)$.}

By Claim 3 and 4 we have $w_p = \lca_{\hs}(X_p)$ and $\lca_{\hs}(X_q) = w_q$, respectively.
As argued after the proof of Claim 2,
we have $\lca_{\hs}(X_q) = w_q \preceq_{\hs} w_p = \lca_{\hs}(X_p)$.
Because $\hs$ contains $S^*|_X$ as a rooted subtree, it follows that 
$\lca_{S^*}(X_q) \preceq_{S^*} \lca_{S^*}(X_p)$.
Because of the A2-edges, there must be a path from $\lca_{S^*}(X_p)$ to 
$\lca_{S^*}(X_q)$ in $\ats{T}{S^*}$. This completes the proof of Claim 5.
\end{owndesc}

We may now finish the argument.
For each $1 \leq i \leq h$, we let $P^*_i$ be the path obtained from Claim 5. 
We claim that by concatenating the paths $P^*_1, P'_1, P^*_2, P'_2, \ldots, P^*_h, P'_h$ in $\ats{T}{S^*}$, 
we obtain a cycle.  
We have already argued that each $P^*_i$ and each $P'_i$ is a path in $\ats{T}{S^*}$.  
The rest follows from Claim 4, since it implies that for each $1 \leq i \leq h$, the last vertex of $P^*_i$ has the first vertex of $P'_i$ 
as an out-neighbor, and the last vertex of $P'_i$ has the first vertex of $P^*_{i+1}$ as an out-neighbor
(where $P^*_{h+1}$ is defined to be $P^*_1$).  
We have thus found a cycle in $\ats{T}{S^*}$, a contradiction to the acyclicity of $\ats{T}{S^*}$.

Hence, $\ats{T}{\hs}$ is acyclic. This and the fact that $\hs$ displays $\RT (T;t,\sigma)$
	implies that $\hs$ is a solution to $((T; t, \sigma), S')$. Therefore, $((T; t, \sigma), S')$ admits a solution.
\end{proof}

\begin{theorem}
	Algorithm \ref{alg:gtcRefinement} determines whether a given GTC instance 
	$((T; t, \sigma), S)$ admits a solution or not and, in the affirmative
	case, constructs a solution $S^*$ of $((T; t, \sigma), S)$. 
	\label{thm:algo1}
\end{theorem}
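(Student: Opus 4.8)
The plan is to establish correctness of Algorithm~\ref{alg:gtcRefinement} by induction on the number of internal vertices that are missing to make $S$ binary, i.e. on the quantity measuring how far $S$ is from being fully resolved. Since each recursive call replaces $S$ by a good split refinement $S'$, and since a split refinement adds exactly one binary internal vertex while keeping the tree almost binary, this quantity strictly decreases at every recursion, so termination is immediate and the recursion depth is $O(n)$. The base case is when $S$ is binary: here the algorithm directly checks the defining conditions of a solution (that $\ats{T}{S}$ is acyclic and that $S$ agrees with, hence by Proposition~\ref{prop:IFFbinRef} displays, $\RT (T;t,\sigma)$), so correctness in the base case is exactly the definition of a solution together with Proposition~\ref{prop:IFFbinRef}.

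For the inductive step I would split into the two branches of the algorithm when $S$ is not binary. If $S$ admits a good split refinement $S'$ at some vertex $x$, the algorithm recurses on $((T;t,\sigma), S')$ and returns whatever that call returns. Here the key fact is Theorem~\ref{thm:equiv-refinement}: $((T;t,\sigma), S)$ admits a solution if and only if $((T;t,\sigma), S')$ does, and moreover \emph{any} solution for the $S'$ instance is also a solution for the $S$ instance. Thus, by the induction hypothesis the recursive call correctly decides solvability of the $S'$ instance and, in the affirmative case, returns a genuine solution $S^*$; by the ``moreover'' part of Theorem~\ref{thm:equiv-refinement} this same $S^*$ is a solution for the original instance, and by the equivalence the decision is correct for $S$ as well. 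If instead $S$ does not admit a good split refinement, the algorithm reports that there is no solution, and this is justified directly by Proposition~\ref{prop:no-solution}, which guarantees that a non-binary $S$ with no good split refinement admits no solution at all.

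The main subtlety I anticipate is not any single hard calculation but rather ensuring the induction is cleanly framed so that the two heavy lemmas plug in without gap: one must be careful that the object returned by the recursive call is a binary tree refining $S'$ (so that the ``moreover'' transfer in Theorem~\ref{thm:equiv-refinement} applies verbatim) and that the good split refinement chosen by the algorithm is exactly of the form to which Theorem~\ref{thm:equiv-refinement} applies, namely a good split refinement at a cherry. Since a solution produced in the base case is binary by construction, and Theorem~\ref{thm:equiv-refinement} preserves the solution as we unwind the recursion, the final returned tree is a binary refinement of the \emph{original} input $S$ that displays $\RT (T;t,\sigma)$ and makes $\ats{T}{S}$ acyclic, as required. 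I would therefore present the argument as a short induction, invoking Proposition~\ref{prop:no-solution} for the negative branch and Theorem~\ref{thm:equiv-refinement} (both its equivalence and its solution-transfer assertion) for the recursive branch, with the base case handled by Proposition~\ref{prop:IFFbinRef}. The correctness of the maximal-topological-sort computation underpinning ``good split refinement'' is already encapsulated in Lemma~\ref{lem:Qproperty} and need not be reproven here.
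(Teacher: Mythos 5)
Your proposal is correct and follows essentially the same route as the paper's own proof: the base case via Proposition~\ref{prop:IFFbinRef}, the negative branch via Proposition~\ref{prop:no-solution}, and the recursive branch via both parts of Theorem~\ref{thm:equiv-refinement}. Your explicit induction framing and termination argument (each good split refinement adds one binary internal vertex) is a minor but welcome tightening of what the paper leaves implicit.
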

\begin{proof}
	Let $((T; t, \sigma), S)$  be GTC instance. 
	First it is tested in Line \ref{line:binary} whether $S$ is binary or not. 
	If $S$ is binary, then $S$ is already its binary refinement and 
	Prop.\ \ref{prop:IFFbinRef} implies that $S$ is a solution to 
  $((T; t, \sigma), S)$ if and only if $S$
	agrees with $\RT (T;t,\sigma)$ and $\ats{T}{S}$ is acyclic.  
	The latter is tested in Line \ref{line:GTC-properties}. 
	In accordance with 	Prop.\ \ref{prop:IFFbinRef}, the tree $S$ is returned whenever the latter conditions are satisfied and, 
	otherwise, 	``there is no solution'' is returned.
	
	Assume that $S$ is not binary. If $S$ admits no good split refinement, then 
	Alg.\ \ref{alg:gtcRefinement}	(Line \ref{line:no-good-split}) returns
	``there is no solution'', which is in accordance with Prop.\ \ref{prop:no-solution}. 
	Contrary, if $S$ admits a good split refinement $S'$, then we can apply 
	Theorem \ref{thm:equiv-refinement} to conclude that
	$((T; t, \sigma), S)$ admits a solution 
	if and only if $((T; t, \sigma), S')$  admits a solution at all.

	Now, we recurse on  $((T; t, \sigma), S')$ as new input of 
	Alg.\ \ref{alg:gtcRefinement}	in Line \ref{line:good-split}. 
	The correctness of Alg.\ \ref{alg:gtcRefinement} is finally ensured by 
	Theorem \ref{thm:equiv-refinement} which states that if 
	$((T; t, \sigma), S')$ admits a solution and thus, by Prop.\ \ref{prop:IFFbinRef}, 
	a binary refinement $S^*$ which is obtained by a series of good split refinements
	starting with $S$, is a solution for $((T; t, \sigma), S)$. 
\end{proof}

\subsection{Finding a Good Split Refinement}

To find a good split refinement, if any, we can loop through each cherry $x$ and ask 
``is there a good split refinement at $x$''?  Clearly, every partition $X_1,X_2$ of $\child(x)$
may provide a good  split refinement and thus there might be $O(2^{|\child(x)|})$ cases to be tested for each cherry $x$. 
To circumvent this issue, we define a second auxiliary graph that is an extension of the well-known Aho-graph
to determine whether a set of triplets is compatible or not \cite{semple2003phylogenetics,Steel:book,Aho:81}.
For a given set $R$ of triplets, the Aho-graph has vertex set $V$ and (undirected) edges $\{a,b\}$
for all triplets $ab|c\in R$ with $a,b,c\in V$. 
Essentially we will use this Aho-graph and add additionally edges to it.
The connected components of this extended graph eventually guides us to
the process of finding good split refinements. Before we make this definition 
more precise we give the following.

\begin{lemma}\label{lem:good-ancestors}
Let $Q$ be a maximal topological sort of $\ats{T}{S}$. If there exists a
good split refinement $S'$ of $S$ at a cherry $x$, then every strict ancestor of
$x$ in $S$ and $S'$ is in $\M(Q)$.
\end{lemma}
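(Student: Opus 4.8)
The plan is to reduce the claim to a single statement about the parent of $x$ and then propagate upward along the ancestor chain, using only the good-split hypothesis together with the reachability property of maximal topological sorts from Lemma~\ref{lem:Qproperty}. First I would record the elementary observation that a split refinement of $S$ at $x$ only alters the structure strictly below $x$: it partitions $\child(x)$ and inserts new vertices as descendants of $x$, while $x$ itself keeps its parent and all of its proper ancestors. Consequently the set of strict ancestors of $x$ in $S'$ coincides with the set of strict ancestors of $x$ in $S$, and it suffices to prove that every strict ancestor of $x$ in $S$ lies in $\M(Q)$. If $x$ is the root there is nothing to show, so I assume $x$ has a parent $y$.

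Next I would show $y \in \M(Q)$. Since the refinement affects only vertices strictly below $x$, the edge $(y,x)$ is still present in $S'$, so $(y,x)$ is an A2-edge of $\ats{T}{S'}$ and hence $y$ is an in-neighbor of $x$ in $\ats{T}{S'}$. By the definition of a \emph{good} split refinement, every in-neighbor of $x$ in $\ats{T}{S'}$ belongs to $\M(Q)$, and therefore $y \in \M(Q)$.

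Then I would lift this to an arbitrary strict ancestor $w$ of $x$. If $w = y$ we are done by the previous step. Otherwise $w$ is a strict ancestor of $y$, i.e.\ $y \prec_S w$, so the path from $w$ down to $y$ in $S$ is a sequence of parent-to-child edges, each of which is an A2-edge of $\ats{T}{S}$; concatenating them yields a directed path from $w$ to $y$ in $\ats{T}{S}$. Since $y \in \M(Q)$ and $Q$ is a maximal topological sort of $\ats{T}{S}$, the contrapositive of Lemma~\ref{lem:Qproperty} (a vertex from which one can reach some vertex of $\M(Q)$ must itself lie in $\M(Q)$) gives $w \in \M(Q)$. As $w$ was an arbitrary strict ancestor of $x$, this establishes the lemma.

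The point that requires care is the interface between the two graphs $\ats{T}{S}$ and $\ats{T}{S'}$: the good-split condition constrains only the in-neighbors of $x$ in $\ats{T}{S'}$, whereas the topological sort $Q$, and hence the reachability argument via Lemma~\ref{lem:Qproperty}, lives in $\ats{T}{S}$. The bridge between them is precisely the A2-edge $(y,x)$, which survives the refinement (so that the good-split hypothesis can be invoked in $\ats{T}{S'}$) while the entire ancestor-to-$y$ path is read off in $\ats{T}{S}$ (where $Q$ applies). I expect no genuine difficulty beyond verifying that $(y,x)$ is unaffected by the refinement and that each directed path is taken in the correct auxiliary graph.
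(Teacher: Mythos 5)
Your proof is correct and uses the same two ingredients as the paper's: the A2-edge from the parent of $x$ surviving in $\ats{T}{S'}$ (so the good-split condition forces the parent into $\M(Q)$) and the A2-path in $\ats{T}{S}$ combined with the reachability property of Lemma~\ref{lem:Qproperty}. The paper phrases this as a contradiction argument while you argue directly via the contrapositive, but the substance is identical.
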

\begin{proof}
Let $S'$ be a good split refinement of $S$ at $x$.  
By construction, the sets of ancestors of $x$ in $S$ and $S'$ are equal.

Assume that there is a strict ancestor $y$ of $x$ that is not in $Q$.
Due to the A2-edges in $\ats{T}{S}$ there is a directed path $P$ from 
		 $y$ to $x$ in  $\ats{T}{S}$. 
Lemma \ref{lem:Qproperty} implies that none of the vertices along this path
$P$ are contained in $\M(Q)$. Since $y$ is a strict ancestor of $x$ in $S$, we 
can conclude that the parent $p(x)$ of $x$ in $S$ is not contained in $\M(Q)$. 	 
Again, due to the A2-edges of $\ats{T}{S'}$, the pair $(p(x), x)$ is an edge 
in  $\ats{T}{S'}$ and hence, $p(x)$ is an in-neighbor of $x$ in  $\ats{T}{S'}$. 
However, since $S'$ is a good split refinement of $S$,  
all the in-neighbors of $x$ in $\ats{T}{S'}$ must, by definition, 
belong to $\M(Q)$; a contradiction. Thus, every strict ancestor $y$ of $x$ in $S$ and $S'$
is in $\M(Q)$.
\end{proof}

In what follows, when we ask whether a fixed $x$ admits a good split
refinement, we can first check whether all of its ancestors are in $Q$, where $Q$ is
maximal topological sort of $\ats{T}{S}$. If this is not the case, then, by contraposition of
Lemma~\ref{lem:good-ancestors}, we may immediately conclude that there is no good split refinement at $x$.

Otherwise, we investigate $x$ further.  We define now the new auxiliary graph to determine whether the cherry $x$ of $S$ admits a good split refinement or not.  
\begin{definition}[Good-Split-Graph]
Let $(T;t,\sigma)$ be a gene tree and $S$ be a species tree. 
	Moreover, let $Q$ be a maximal topological sort of $\ats{T}{S}$.

We define  $G((T; t, \sigma), S, x) = (V, E)$ as the the undirected graph with vertex set
$V = \L(S(x))$. Moreover, an (undirected) edge $ab$ is contained in $E$ 
if and only if  $a,b \in \L(S(x))$ and $a,b$ are distinct and satisfy at least one of the following conditions:

\begin{description}
    \item[(C1)] 
    there exists $c \in \L(S(x))$ such that $ab|c \in \RT (T; t,\sigma)$;

    \item[(C2)]
    there exists an edge $(u, v) \in E(T)$ such that $t(u) \in \{\dup, \transfer\}$, $u \notin \M(Q)$, $t(v) = \spec$, and $\{a, b\} \subseteq \sT(v)$;

    \item[(C3)]
    there exists an edge $(u, v) \in E(T)$ such that $t(u) = t(v) = \spec$, $\hmu_{S}(u) = x$ and $\{a, b\} \subseteq \sT(v)$;

    \item[(C4)]
    there exists a vertex $u \in V(T) \setminus \M(Q)$ such that $t(u) \in \{\dup, \transfer\}$ and $\{a, b\} \subseteq \sT(u)$.
\end{description}
\label{def:good-split-graph}
\end{definition}

\begin{figure}[tbp]
	\begin{center}
		\includegraphics[width=.6\textwidth]{./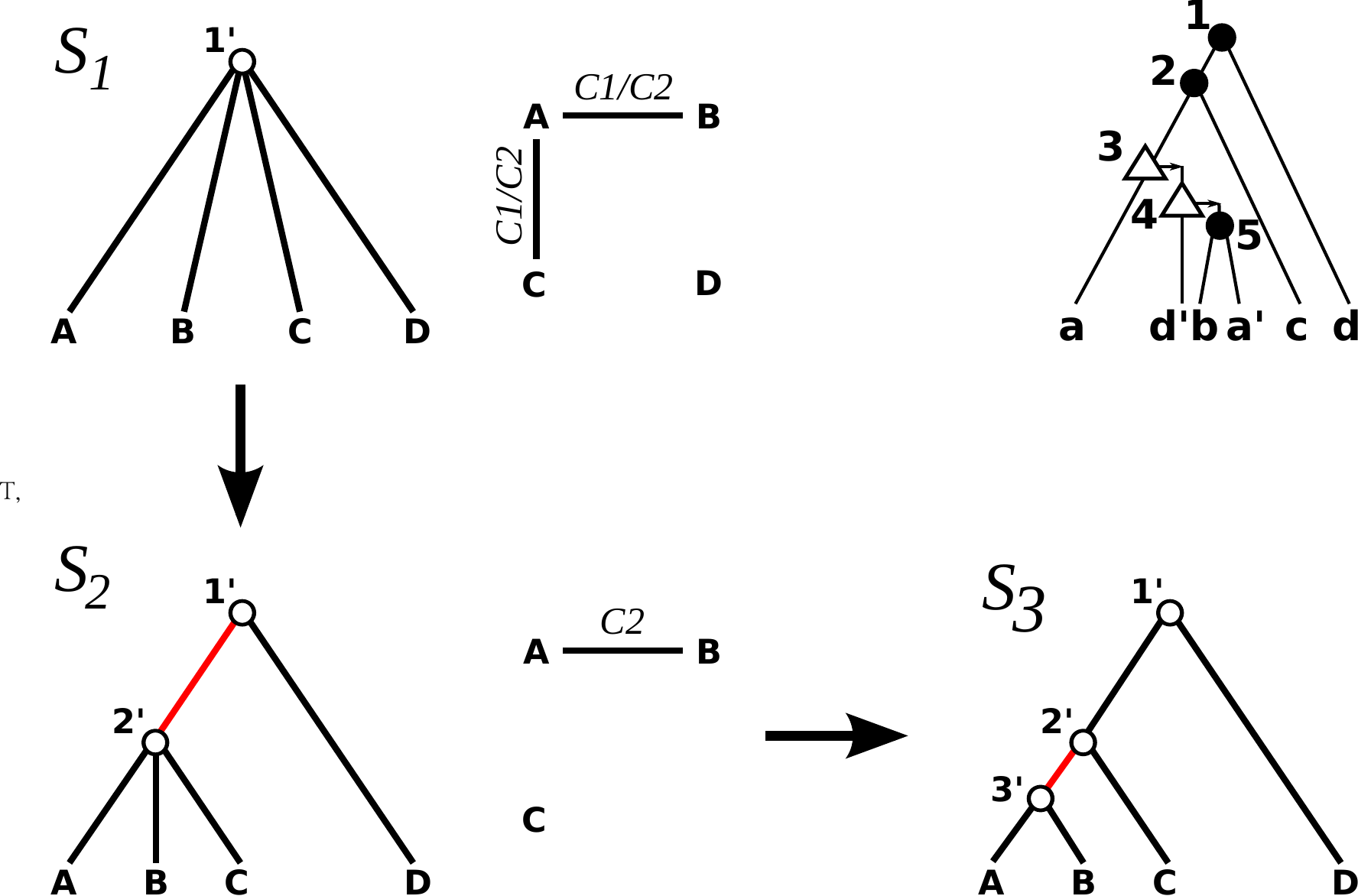}
	\end{center}
	\caption{Top right: the gene tree $(T;t,\sigma)$  from Fig.\ \ref{fig:least}	
		from which we obtain the species triples $\S(T;t,\sigma) = \{AB|D,AC|D\}$.
		We start with the star tree $S_1$ (top left) and obtain $G((T; t, \sigma), S_1, 1')$,
	 	which is shown right to $S_1$. $G((T; t, \sigma), S_1, 1')$ has four vertices $A,B,C,D$ and two edges. 
		The edge labels indicate which of the conditions in Def.\ \ref{def:good-split-graph}
		yield the respective edge. In $G((T; t, \sigma), S_1, 1')$, there is 
		only one non-trivial connected component which implies the good split that 
		results in the tree $S_2$ (lower left). There is only one cherry $2'$ in $S_2$ and 
 		the corresponding graph $G((T; t, \sigma), S_2, 2')$ is drawn right to $S_2$. 
		Again, the connected components give a good split that results in the binary 
		tree $S_3$. The tree $S_3$ is precisely the species tree as shown in the 
		middle of  Fig.\ \ref{fig:least}.
	}
\label{fig:working-exmpl-2}
\end{figure}

Intuitively, edges represent pairs of species that must belong to the same
part of a split refinement at $x$. That is, (C1) links species that would
contradict a triplet of $\RT (T; t, \sigma)$ if they were separated (as in
the classical BUILD algorithm \cite{semple2003phylogenetics,Steel:book,Aho:81}); 
(C2) links species that would yield an A1-edge from a
vertex not in $Q$ into $x$ if they were separated; (C3) links species that
would create a self-loop on $x$ if they were separated; and (C4) links
species that would create an A3-edge from a vertex not in $Q$ into $x$ if separated. We want the
graph to be disconnected which would allow us to split the children of $x$ while
avoiding all the situations in which we create a separation of two children where we cannot ensure that this separation yields
a good split refinement at $x$.  
Considering only such pairs of children turns out to be necessary and sufficient, and Theorem~\ref{thm:gsp-disco} below
formalizes this idea.

\begin{definition}
Given a graph $H$, we say that $(A, B)$ is a \emph{disconnected bipartition}
of $H$ if $A \cup B = V(H)$, $A \cap B = \emptyset$ and for each $a \in A, b \in B$, $ab \notin E(H)$.  
\end{definition}

We are now in the position to 
state how good split refinements can be identified.  
Note, we may assume w.l.o.g. that $S$ agrees with $\RT$, as otherwise there can be no good split refinement at all.

\begin{theorem}\label{thm:gsp-disco}
Let $((T; t, \sigma), S)$ be a GTC instance, and assume that $S$ agrees
with $\RT (T; t, \sigma)$. Let $Q$ be a maximal topological sort of
$\ats{T}{S}$. Then there exists a good split refinement of $S$ if and only
if there exists a cherry $x$ of $S$ such that every strict ancestor of $x$
in $S$ is in $Q$, and such that $G((T; t, \sigma), S, x)$ is disconnected. 

In particular, for any disconnected bipartition $(A, B)$ of $G$, the split refinement that partitions the children of $x$ into $A$ and $B$ is a good split refinement.
\end{theorem}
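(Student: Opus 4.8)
The plan is to prove the biconditional by treating the two directions separately, noting that the ``in particular'' claim is exactly the content of the backward direction. Throughout I would exploit the observation that a split refinement $S'$ of $S$ at the cherry $x$ alters the tree only strictly below $x$: the strict ancestors of $x$ are untouched, and the only triplets in $rt(S')\setminus rt(S)$ have the form $ab|c$ with $a,b,c\in\child(x)$. Dually, in $\ats{T}{S'}$ the in-neighbours of $x$ can only be its (unchanged) parent via an A2-edge, or a vertex $z$ arising from an A1- or A3-edge whose target satisfies $\hmu_{S'}(\cdot)=x$ (A4-edges always point to gene-tree vertices, so never into $x\in V(S)$); and $\hmu_{S'}(w)=x$ holds precisely when $\sT(w)\subseteq\child(x)$ and $\sT(w)$ meets both blocks of the split. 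These two facts are what tie the edges of $G((T;t,\sigma),S,x)$ to the two defining properties of a good split refinement.

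For the backward direction, suppose $x$ is a cherry all of whose strict ancestors lie in $\M(Q)$ and with $G\coloneqq G((T;t,\sigma),S,x)$ disconnected; fix a disconnected bipartition $(A,B)$ and let $S'$ be the split refinement separating $\child(x)$ into $A$ and $B$. I would first check that $S'$ agrees with $\RT (T;t,\sigma)$: since $S$ already agrees and the only new triplets are among $\child(x)$, a disagreement would force a triplet $pr|q\in\RT (T;t,\sigma)$ (or $qr|p$) with $p,q$ on one block and $r$ on the other, contradicting that the corresponding (C1)-edge cannot cross $(A,B)$. I would then verify that every in-neighbour of $x$ in $\ats{T}{S'}$ lies in $\M(Q)$ by ruling out each way a ``bad'' in-neighbour could appear: an A1-edge from a $\{\dup,\transfer\}$-vertex $u\notin\M(Q)$ needs $\sT(v)$ to straddle $(A,B)$, which is blocked by (C2); an A1-self-loop on $x$ needs a speciation edge with $\hmu_S(u)=x$ and $\sT(v)$ straddling, blocked by (C3); an A3-edge from $u\notin\M(Q)$ needs $\sT(u)$ to straddle, blocked by (C4); and any remaining A1-edge has source $\hmu_{S'}(u)\succ_S x$, a strict ancestor of $x$ in $S$, which is in $\M(Q)$ by hypothesis, while the A2-edge comes from the parent of $x$. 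This simultaneously establishes the ``in particular'' statement.

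For the forward direction, let $S'$ be a good split refinement at a cherry $x$, inducing a partition $\{X_1,X_2\}$ of $\child(x)$. Lemma~\ref{lem:good-ancestors} immediately gives that every strict ancestor of $x$ lies in $\M(Q)$, so it remains to show that $(X_1,X_2)$ is a disconnected bipartition of $G$. Assuming a crossing edge $ab$ with $a\in X_1$, $b\in X_2$, I would case on which of (C1)--(C4) produced it. The (C1)-case reproduces a triplet of $rt(S')$ conflicting with $\RT (T;t,\sigma)$. For (C2) and (C4) the witness set $\sT(v)$, resp.\ $\sT(u)$, straddles $(X_1,X_2)$; combining $u\notin\M(Q)$ with the A1-, resp.\ A3-edge $u\to\hmu_S(\cdot)$ and Lemma~\ref{lem:Qproperty} yields $\hmu_S(\cdot)\notin\M(Q)$, and since all strict ancestors of $x$ are in $\M(Q)$ this forces $\hmu_S(\cdot)=x$; hence that same vertex is an in-neighbour of $x$ in $\ats{T}{S'}$, contradicting goodness. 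For (C3), $\hmu_S(u)=x$ pins $\sT(v)\subseteq\child(x)$, and the straddling set of size $\ge 2$ gives $\hmu_S(v)=x$, so $\ats{T}{S}$ already carries the self-loop $(x,x)$; Lemma~\ref{lem:Qproperty} then gives $x\notin\M(Q)$, the self-loop persists in $\ats{T}{S'}$, and goodness fails again.

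The main obstacle I anticipate is ruling out ``spurious'' edges in the forward direction: conditions (C2) and (C4) create an edge whenever $a,b\in\sT(v)$, resp.\ $\sT(u)$, yet separating $a,b$ actually manufactures a forbidden in-edge of $x$ only when that species set lies entirely inside $\child(x)$, so that $\hmu_{S'}$ collapses to $x$. The delicate point is that this collapse is guaranteed precisely by the ancestor hypothesis: if $\sT(\cdot)$ reached strictly above $x$, then $\hmu_S(\cdot)$ would be a strict ancestor of $x$ outside $\M(Q)$ (via Lemma~\ref{lem:Qproperty}), which the hypothesis forbids. Getting this bookkeeping right for both the A1- and A3-cases, and correctly isolating the self-loop phenomenon of (C3) --- where the obstruction already lives in $\ats{T}{S}$ rather than being created by the split --- is where the care lies; the triplet argument for (C1) is comparatively routine given that $S$ agrees with $\RT (T;t,\sigma)$.
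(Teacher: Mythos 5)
Your proposal is correct and follows essentially the same route as the paper: the forward direction uses Lemma~\ref{lem:good-ancestors} plus a case analysis matching each crossing edge type (C1)--(C4) to a violation of agreement or to a forbidden in-neighbour of $x$, and the backward direction rules out each possible in-edge of $x$ in $\ats{T}{S'}$ by producing the corresponding edge of $G$. The only cosmetic difference is that in the forward direction you first locate the obstruction in $\ats{T}{S}$ via Lemma~\ref{lem:Qproperty} and then push it to $S'$, whereas the paper argues in $\ats{T}{S'}$ and transfers back via Lemma~\ref{lem:ancestors-dont-change}; both are sound.
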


\begin{proof}
In  what follows, put  $G \coloneqq G((T; t, \sigma), S, x)$. 
Suppose that there exists a good split refinement $S'$ of $S$. 
Let $x$ be the cherry of $S$ that was refined from $S$ to $S'$, and let $x_1, x_2$ be the children of $x$ in $S'$. 
Let $Q$ be a maximal topological sort of $\ats{T}{S}$.  By Lemma~\ref{lem:good-ancestors}, every strict ancestor of $x$ in $S$ is in $Q$.
Let $A = \L(S(x_1))$ and $B = \L(S(x_2))$.
We claim that for any pair $a \in A, b \in B$, $ab \notin E(G)$.

Assume for contradiction that there is an edge $ab$ with $a \in A, b \in B$.  We treat each possible edge type separately.

\smallskip
\noindent
{\bf (C1)}: Suppose that $ab \in E(G)$ because there exists $c \in \L(S(x))$ such
that $ab|c \in \RT (T; t,\sigma)$. Because $a \in A$ and $b \in B$ and
by construction of $S'$, we either have $ac|b \in rt(S')$ if $c \in A$, or
$bc|a \in rt(S')$ if $c \in B$. In either case, $S'$ does not agree with
$\RT (T; t,\sigma)$, contradicting that $S'$ is a good split refinement.

\smallskip
\noindent
{\bf (C2)}: Suppose instead that $ab \in E(G)$ because there exists an edge $(u,
v) \in E(T)$ with $t(u) \in \{\dup, \transfer\}$, $u \notin \M(Q)$, $t(v) =
\spec$ and $a,b \in \sT(v)$. By construction of $S'$ and due to the choice
of $A = \L(S(x_1))$ and $B = \L(S(x_2))$, we have 
$\hmu_{S'}(v) = lca_{S'}(\sT(v))
\succeq_{S'} lca_{S'}(a,b) = x$. If $\hmu_{S'}(v) = x$, then $(u,
\hmu_{S'}(v)) = (u, x)$ is an A1-edge of $\ats{T}{S'}$.  Thus, $x$  has in-neighbor
$u$ in $\ats{T}{S'}$ such that  $u \notin
\M(Q)$, which contradicts that $S'$ is a good split refinement. So assume
that $\hmu_{S'}(v) \succ_{S'} x$. In this case, $(u, \hmu_{S'}(v))$ is an
A1-edge of $S'$, and by Lemma~\ref{lem:ancestors-dont-change}, $(u,
\hmu_{S'}(v)) \in E(\ats{T}{S})$. Since $u \notin \M(Q)$, we must have
$\hmu_{S'}(v) \notin \M(Q)$. Since $\hmu_{S'}(v) \succ_S x$, we obtain a
contradiction to Lemma~\ref{lem:good-ancestors}.

\smallskip
\noindent
{\bf (C3)}: Suppose that $ab \in E(G)$ because there exists an edge $(u, v) \in
E(T)$ with $t(u) = t(v) = \spec$, $\hmu_{S}(u) = x$ and $a,b \in \sT(v)$.
Note, since $\lca_S(a,b)=x$  and $a,b \in \sT(v)$, 
it must hold that $x\preceq_S\hmu_S(v)$. 
Moreover, $t(u) = t(v) = \spec$ implies that $u$ and $v$
are contained in the same connected component of $\Th$. This and  
$v\prec_{\Th} u$ implies $\sT(v)\subseteq \sT(u)$. Hence,
 $\hmu_S(v) \preceq_S \hmu_S(u)$.
Now,  $x\preceq_S\hmu_S(v)  \preceq_S \hmu_S(u) = x$ implies
 $\hmu_S(v) =\hmu_S(u) = x$. 
Therefore, $(\hmu_S(u), \hmu_S(v)) = (x, x)$ is an A1-edge
of $\ats{T}{S}$, and it follows that $x \notin \M(Q)$ (a vertex with a
self-loop cannot never be added to a maximal topological sort). Moreover,
because $a, b \in \sT(v)$ and $a \in A = \L(S(x_1))$ and $b\in B = \L(S(x_2))$, it holds that
$\hmu_{S'}(u) = \hmu_{S'}(v) = x$. Hence $(x, x)$ is an A1-edge of
$\ats{T}{S'}$ as well, and $x$ has an in-neighbor not in $Q$ (namely $x$
itself). This contradicts the assumption that $S'$ is a good split
refinement.

\smallskip
\noindent
{\bf (C4)}: 
Suppose that $ab \in E(G)$ because there is a vertex $u \in V(T) \setminus
\M(Q)$ such that $t(u) \in \{\dup, \transfer\}$ and $a,b \in \sT(u)$. The
reasoning is similar to Case (C2). That is, we must have $p :=
\hmu_{S'}(u) = lca_{S'}(\sT(u)) \succeq_{S'} lca_{S'}(a, b) =
x$. Now, $\ats{T}{S'}$ contains the A3-edge $(u, p)$. We cannot have $p =
x$ because $u \notin \M(Q)$ and $S'$ is a good split refinement of $S$. 
Thus $p \succ_{S'} x$. In this case, $(u, p)
\in E(\ats{T}{S})$ by Lemma~\ref{lem:ancestors-dont-change}. Thus $p$
cannot be in $\M(Q)$, which contradicts Lemma~\ref{lem:good-ancestors}.

\smallskip
We have thus shown that $ab$ cannot exist for any pair $a\in A$ and $b\in B$. 
Since $A$ and $B$ form a partition of $V(G)$, the graph $G$ must be disconnected.

Conversely, suppose that there exists a cherry $x$ of $S$ such that $G$ is
disconnected and such that every strict ancestor of $x$ in $S$ is in $Q$. 
Let $(A, B)$ be any disconnected bipartition of $G$.
Furthermore, let $S'$ be the split refinement of $S$ obtained by splitting
the children of $x$ into $A$ and $B$ and let $x_1, x_2$ be the two
children of $x$ in $S'$. W.l.o.g.\ assume that $x_1$ and $x_2$ is the ancestor of the 
leaves in $A$ and $B$, respectively. We claim that $S'$ is a good split
refinement. 

Let us first argue that $S'$ agrees with $\RT (T; t, \sigma)$.
Assume for contradiction that $S'$ displays a triplet $ac|b$, but that $ab|c \in \RT (T; t, \sigma)$.
By assumption, $S$ agrees with $\RT (T; t, \sigma)$, so $ac|b \in rt(S') \setminus rt(S)$.  
This implies that $\lca_{S'}(a, b) = \lca_{S'}(c, b) = x$.  
W.l.o.g\ we may assume that $a, c \in A$ and $b \in B$.  
However, Condition (C1) implies that we have the edge 
 $ab \in E(G)$, contradicting that $(A, B)$ forms a disconnected bipartition.  
Therefore, $S'$ agrees with $\RT (T; t, \sigma)$.

It remains to show that all in-neighbors of $x$ in $\ats{T}{S'}$ are contained in $\M(Q)$.  
Assume, for contradiction, that there is an edge $(p, x) \in E(\ats{T}{S'})$ 
such that $p \notin \M(Q)$.  
Since $x\in V(S')$, the edge $(p, x)$ it either an A1-, A2- or A3-edge in $\ats{T}{S'}$.  
As it is now our routine, we check several cases separately.

\begin{owndesc}
\item[\textnormal{\em Case: $(p, x)$ is an A1-edge and $p \neq x$.}] \ \\ 
In this case $(p, x)$ is defined by some edge $(u, v) \in E(T)$. Suppose
that $(p, x) = (\hmu_{S'}(u), \hmu_{S'}(v))$. Since $p \neq x$, $p$ is a
strict ancestor of $x$ in $S'$, and hence also in $S$. This is not
possible, since we assume that every strict ancestor of $x$ in $S$ belongs
to $Q$  (whereas here we suppose $p \notin \M(Q)$). We deduce that $(p, x) =
(u, \hmu_{S'}(v))$. Therefore, $u \notin \M(Q)$, $t(u) \in \{\dup,
\transfer\}$ and $t(v) = \spec$. Moreover, since $\hmu_{S'}(v) = x$
and $x$ has only the two children $x_1$ and $x_2$ in $S'$,
we can conclude there are $a,b \in \sT(v)$ such that $a \preceq_{S'}
x_1$ and $b \preceq_{S'} x_2$, i.e. $a \in A, b \in B$. 
The latter two arguments imply that Condition (C2) is satisfied for
$a$ and $b$ and, therefore, $ab\in E(G)$; a contradiction to $(A, B)$ are forming a
disconnected bipartition.

\item[\textnormal{\em Case: $(p, x)$ is an A1-edge and $p = x$.} ] \ \\
In this case, $(p, x) = (x, x) = (\hmu_{S'}(u), \hmu_{S'}(v))$ is defined by some edge $(u, v)$ of $T$.
Since $x$ is an internal vertex of $S'$, we must have $t(u) = t(v) = \spec$.
Since $L(S(x)) = L(S'(x))$ and $x$ is a cherry in $S$, we also have $(\hmu_S(u), \hmu_S(v)) = (x, x)$.
Moreover because $\hmu_{S'}(v) = x = \lca_{S'}(x_1,x_2)$, there must exist distinct $a,b$ with 
$a\prec_{S'}x_1$ and $b\prec_{S'}x_2$ such that $a,b \in \sT(v)$. 
Thus, $a \in A, b \in B$. Moreover $ab$ satisfies the Condition (C3). 
Thus, $ab\in E(G)$; a contradiction to our assumption that $(A, B)$ forms a
disconnected bipartition.

\item[\textnormal{\em Case: $(p, x)$ is an A2-edge.} ] \ \\
This case is not possible, since the parent of $x$ is the same in $S$ and $S'$, and we assume that all strict ancestors of $x$ in $S$ are in $Q$.

\item[\textnormal{\em Case: $(p, x)$ is an A3-edge.} ] \ \\
In this case, $(p, x) = (u, \hmu_{S'}(u))$ is defined by a vertex $u
\in V(T)$ such that $t(u) \in \{\dup, \transfer\}$ and $\hmu_{S'}(u) = x$. 
Since $u = p$ and, by assumption $p \notin \M(Q)$, we have $u \in V(T) \setminus \M(Q)$.
 
As in the A1-case, there must be $a, b \in \sT(u)$ such that $a \in A, b \in B$.  Then $ab$ should be an edge of $G$ because of Condition (C4), a contradiction.
\end{owndesc}

We have shown that the $(p, x)$ edge cannot exist. Therefore in
$\ats{T}{S'}$, all the in-neighbors of $x$ are in $Q$. Since $S'$ also
agrees with $R$, it follows that splitting the children of $x$ into $(A, B)$
forms a good split refinement at $x$.
\end{proof}

\begin{algorithm}[tbp]
\caption{\texttt{TimeConsistent Species Tree}}
\begin{algorithmic}[1]
  \Require Event-labeled gene tree $(T;t,\sigma)$ 
  \Ensure Time-consistent species tree $S$ for $(T;t,\sigma)$, if one exists
	\State Compute $\RT(T;t,\sigma)$ 
  	\State $S\gets $ star tree on $\sigma(L(T))$
  	\State Compute $\hmu_{T,S}(u)$ for all $u\in V(T)$  	
		\State Compute $\ats{T}{S}$
  	\State $Q \gets $maximal topological sort of $\ats{T}{S}$
  	\State Compute $G((T; t, \sigma), S, r)$, where $r$ is the root of $S$
	\State \texttt{Has\_GoodSplit} $\gets$ TRUE 
	\While{$S$ contains a non-binary cherry and \texttt{Has\_GoodSplit} = TRUE}		 
			\State \texttt{Has\_GoodSplit} $\gets$ FALSE 
			\ForAll{non-binary cherries $x$ of $S$ such that $y \in \M(Q)$ for all $y \succ_{S} x$} \label{for-loop}
					\If{\texttt{Has\_GoodSplit} = FALSE and $G((T; t, \sigma), S, x)$ is disconnected} \label{first-if}
								\State Compute disconnected bipartition $(A, B)$ of $G((T; t, \sigma), S, x)$. 
								\State $S \gets$ split refinement of $S$ at cherry $x$ based on $(A, B)$ 
							  	\State Compute  $\hmu_{T,S}(u)$ for all $u\in V(T)$
							  	\State Compute $\ats{T}{S}$
							  	\State $Q \gets $maximal topological sort of $\ats{T}{S}$
								\State Let $x_1, x_2$ be the children of $x$
								\State Compute $G((T; t, \sigma), S, x_1)$ and $G((T; t, \sigma), S, x_2)$
								\State \texttt{Has\_GoodSplit} $\gets$ TRUE
					\EndIf
			\EndFor
	\EndWhile
	\If{$S$ is binary}
	\ \Return $S$;
	\Else	 \ \Return ``No time-consistent species tree exists'';
	\EndIf
\end{algorithmic}
\label{alg:GoodSplit}
\end{algorithm}

A pseudocode to compute a time-consistent species for a given event-labeled gene tree $(T;t,\sigma)$, if one exists,
		is provided in Alg.\ \ref{alg:GoodSplit}.
		The general idea of Alg.\ \ref{alg:GoodSplit} is as follows. 
		With $(T;t,\sigma)$ as input, we start with a star tree $S$ and stepwisely refine $S$
		by searching for good split refinements. If in each step a good split refinement exists
		and $S$ is binary (in which case we cannot further refine $S$), then we found a time-consistent species tree $S$ for $(T;t,\sigma)$. 
		In every other case, the algorithm returns ``No time-consistent species tree exists''. 
		The correctness proof as well as further explanations are provided in the proof of Theorem \ref{thm:algo}. 
		To show that this algorithm runs in $O(n^3)$ time, we need first the following.

\begin{lemma}
$\RT (T; t, \sigma)$ can be computed in $O(n^3)$ time, where $n = |\L(T)|$.
This boundary is tight. 
\label{lem:comp-RT}
\end{lemma}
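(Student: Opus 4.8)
The plan is to establish two matching bounds: an $O(n^3)$ upper bound via a direct construction, and an $\Omega(n^3)$ lower bound by exhibiting instances in which $\RT(T;t,\sigma)$ itself has size $\Theta(n^3)$. First I would normalize the instance: by (O1) every internal vertex of $T$ has outdegree at least $2$, so $|V(T)| = O(n)$, and since $\RT(T;t,\sigma)$ only mentions species in $\sigma(\L(T))$, I may assume the relevant species set has size at most $n$ and is identified with $\{1,\dots,n\}$. Hence there are at most $3\binom{n}{3}=O(n^3)$ conceivable species triplets, so $|\RT(T;t,\sigma)|=O(n^3)$; the algorithm maintains a three-dimensional boolean table indexed by species triples (initialized in $O(n^3)$ time) to record membership and suppress duplicates. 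After building $\Th$ by deleting the transfer edges ($O(n)$) and computing $\sT(v)$ for all $v$ bottom-up (in $O(n^2)$ time, since $\sum_v|\sT(v)|\le\sum_v|L_{\Th}(v)|=O(n^2)$), I would generate the two families of Definition~\ref{def:informativeTriplets} separately.

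For the speciation triplets (case~1) I would loop over every speciation vertex $v$ and every ordered pair of distinct children $(v_i,v_j)$, emitting $\sigma(a)\sigma(b)|\sigma(c)$ for all $A,B\in\sT(v_i)$ with $A\neq B$ and all $C\in\sT(v_j)$; by (O3.a) the child species sets are pairwise disjoint, so $C$ is automatically distinct from $A,B$ and each emitted triplet has $\lca_{\Th}=v$ of type $\spec$. The work is $\sum_v\sum_{i\neq j}|\sT(v_i)|^2|\sT(v_j)|\le\sum_w \ell(w)^2\bigl(\ell(p(w))-\ell(w)\bigr)$, where $\ell(w):=|L_{\Th}(w)|$, the sum ranges over non-root $w$, and $p(w)$ is the parent of $w$ in $\Th$. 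The crucial point, which I expect to be the only delicate step, is that this last sum is $O(n^3)$ rather than the naive $O(n^4)$: it equals the number of $4$-tuples $(a,b,c,w)$ with leaves $a,b\in L_{\Th}(w)$ and $c\in L_{\Th}(p(w))\setminus L_{\Th}(w)$, and for each fixed ordered leaf triple $(a,b,c)$ there is \emph{at most one} such $w$, namely the child of $\lca_{\Th}(a,b,c)$ on the path towards $\lca_{\Th}(a,b)$. Hence the sum is at most the number $n^3$ of ordered leaf triples.

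For the transfer triplets (case~2) I would loop over each transfer edge $(x,y)\in\tredge_T$ and emit $\sigma(a)\sigma(b)|\sigma(c)$ both for all $A,B\in\sT(x),C\in\sT(y)$ and for all $A,B\in\sT(y),C\in\sT(x)$; by (O3.b) we have $\sT(x)\cap\sT(y)=\emptyset$, so distinctness is again automatic. The work is $\sum_{(x,y)\in\tredge_T}\bigl(\ell(x)^2\ell(y)+\ell(y)^2\ell(x)\bigr)$. Here the key observation is that every transfer target $y$ is the root of its own connected component of $\Th$ (its parent edge was removed), and distinct targets root distinct components; since these components partition $\L(T)$ (Lemma~1 of \cite{nojgaard2018time}), we get $\sum_{(x,y)\in\tredge_T}\ell(y)\le n$. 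Combined with $\ell(\cdot)\le n$ this yields $\sum\ell(x)^2\ell(y)\le n^2\sum\ell(y)\le n^3$, and writing $\ell(y)^2\ell(x)\le n^2\ell(y)$ likewise gives $\sum\ell(y)^2\ell(x)\le n^3$. Reading the marked entries of the table into $\RT(T;t,\sigma)$ costs a final $O(n^3)$, so the whole computation runs in $O(n^3)$.

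Finally, for tightness I would exhibit an instance with $|\RT(T;t,\sigma)|=\Theta(n^3)$, which forces any algorithm that outputs $\RT(T;t,\sigma)$ to spend $\Omega(n^3)$ time. Take $T$ with a speciation root $v$ whose two children $v_1,v_2$ are stars carrying $n/2$ leaves of pairwise distinct species each, say $\sT(v_1)=\{A_1,\dots,A_{n/2}\}$ and $\sT(v_2)=\{B_1,\dots,B_{n/2}\}$. Every triplet $A_iA_j|B_k$ and $B_iB_j|A_k$ satisfies case~1 with $\lca_{\Th}=v$ of type $\spec$, and these are $\Theta(n^3)$ pairwise distinct species triplets, so $|\RT(T;t,\sigma)|=\Theta(n^3)$ and the bound is tight. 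The main obstacle throughout is the amortized analysis of case~1 — the leaf-triple charging that turns an apparent $O(n^4)$ into $O(n^3)$; the case~2 bound and the lower bound are comparatively routine once the disjointness of the transfer-target components is noted.
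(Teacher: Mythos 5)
Your proof is correct, but the way you handle the speciation triplets (case~1 of Definition~\ref{def:informativeTriplets}) is genuinely different from the paper's. The paper enumerates all $\binom{n}{3}$ leaf triples of $T$ and classifies each in $O(1)$ time after $O(n^2)$ preprocessing (constant-time $\lca_{\Th}$ queries plus root distances in each component), so the $O(n^3)$ bound is immediate, at the price of also visiting triples that contribute nothing to $\RT(T;t,\sigma)$. You instead generate triplets grouped by their $\lca_{\Th}$, pairing species from distinct children of each speciation vertex, and the one delicate step is your amortized bound $\sum_{w}|L_{\Th}(w)|^2\bigl(|L_{\Th}(p(w))|-|L_{\Th}(w)|\bigr)\le n^3$ via the charge of each ordered leaf triple $(a,b,c)$ to at most one $w$. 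That charging argument is sound: the ancestors of $\lca_{\Th}(a,b)$ form a chain, so the constraints $c\not\preceq_{\Th} w$ and $c\preceq_{\Th} p(w)$ force $p(w)=\lca_{\Th}(a,b,c)$, making $w$ unique. What your route buys is that you only ever emit valid case-1 triplets (with (O3.a) giving distinctness for free) and you avoid importing the constant-time LCA machinery; what the paper's route buys is a shorter correctness argument, since the per-triple classification needs no amortization. For the transfer triplets and for tightness the two proofs essentially coincide: both bound the case-2 cost by $O(n^3)$ using that the components of $\Th$ have pairwise disjoint leaf sets and each component is the target of at most one transfer edge, and both exhibit an instance where a speciation vertex separates $\Theta(n)$ species from $\Theta(n)$ species, forcing $|\RT(T;t,\sigma)|=\Theta(n^3)$.
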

\begin{proof}
To compute $\RT (T; t, \sigma)$ as in Def.\ \ref{def:informativeTriplets} we can proceed as follows: 
We first compute the $\lca_{\Th}$'s for every pair of vertices within the connected components of $\Th$.
This task can be done in constant time for each pair of vertices after
linear preprocessing of the trees in $\Th$ \cite{HT:84,BF:00}.  
Thus, we end in an overall time complexity of $O(n^2)$ to compute all $\lca_{\Th}$'s between the leaves of $T$. 
We now compute the distance from the root $\rho_{\tilde T}$ to all other vertices in $V(\tilde T)$
for every connected component $\tilde T$ of $\Th$. The latter can be done for each individual connected 
component  $\tilde T$ via Dijkstra's algorithm in $O(|V(\tilde T)|^2)$ time. 
As this must be done for all connected components of $\Th$  and since 
$\sum_{\tilde T}  |V(\tilde T)|^2 \leq (\sum_{\tilde T}  |V(\tilde T)|)^2 = |V(T)|^2$
we end in time $O(|V(T)|^2) = O(n^2)$ to compute the 
individual distances. 
Now, for all three distinct leaves $a,b,c$ within the connected components of $\Th$, 
we compare the relative order of $x=\lca_{\Th}(a,b)$, $y=\lca_{\Th}(a,c)$, and $z=\lca_{\Th}(b,c)$ 
which can be done directly by comparing the distances $d_{\tilde T}(\rho_{\tilde T},x)$, $d_{\tilde T}(\rho_{\tilde T},y)$ and $d_{\tilde T}(\rho_{\tilde T},z)$. 
It is easy to see that at least two of the latter three distances must be equal. 
Hence, as soon as we have found that two distances are equal but distinct from the third, 
say $d_T(\rho_T,x)\neq d_T(\rho_T,y)=d_T(\rho_T,z)$, we found the triple $ab|c$ that is displayed by $\tilde T$.
If, in addition, $t(z)=\spec$ and $\sigma(a), \sigma(b), \sigma(c)$ are pairwise distinct, 
then we add $\sigma(a)\sigma(b)|\sigma(c)$ to  $\RT (T; t, \sigma)$. 
The latter tasks can be done in constant for every triple $a,b,c$.
Since there are at most $\binom{n}{3}=O(n^3)$ triplets in $T$, 
we end in an overall time-complexity $O(n^3)$ to compute all triplets displayed by $T$ 
that satisfy Def.\ \ref{def:informativeTriplets}(1).

Now we proceed to construct for all transfer edges $(u,v)\in \tredge_T$ the triplets $\sigma(a)\sigma(b)|\sigma(c)$
for all $a,b\in L(\Th(u))$ and  $c\in L(\Th(v))$ as well as for all $c\in L(\Th(u))$ and  $a,b\in L(\Th(v))$
with $\sigma(a), \sigma(b), \sigma(c)$ being pairwise distinct.
To this end, we need to compute $L(\Th(w))$ for all $w\in V(T)$. We may traverse
every connected component $\tilde T$ of $\Th$ from the root $\rho_{\Th}$ to each individual leaf and 
and for each vertex $w$ along the path from  $\rho_{\Th}$ to a leaf $l$, we add the leaf $l$ to  $L(\Th(w))$. 
As there are precisely $|L(\tilde T)|$ such paths, each having at most $|V(\tilde T)|\in O(|L(\tilde T)|)$ vertices, 
we end in $O(|L(\tilde T)|^2)$ time to compute  $L(\Th(w))$ for all $w\in V(\tilde T)$. 
As this step must be repeated for all connected components $\tilde T$ of $\Th$
we end, by the analogous arguments as in the latter paragraph, in $\sum_{\tilde T} O(|L(\tilde T)|^2) =  O(n^2)$ time
to compute $L(\Th(w))$ for all $w\in V(T)$. 
Now, for every transfer edge  $(u,v)\in \tredge_T$ the triplets $\sigma(a)\sigma(b)|\sigma(c)$ 
(with $\sigma(a), \sigma(b), \sigma(c)$ being pairwise distinct) are added to $\RT (T; t, \sigma)$
for all $a,b\in L(\Th(u))$ and  $c\in L(\Th(v))$ as well as for all $c\in L(\Th(u))$ and  $a,b\in L(\Th(v))$. 
Note, none of the trees $\Th$ contains transfer edges. Moreover, 
for each transfer edge $(u,v)$ we have, by Axiom (O3), $\sT(v)\cap \sT(u) = \emptyset$. 
The latter two arguments
imply that, for each transfer edge $(u,v)$, precisely $\binom{\sigma(L(\Th(v)))}{2} |\sigma(L(\Th(u)))|+ \binom{\sigma(L(\Th(u)))}{2} |\sigma(L(\Th(v)))|$ 
triplets are added.

Now, let $\mathcal{T} = \{T_1, T_2, \ldots, T_k\}$ be the set of trees in the forest 
$\Th$.  For each $i \in \{1, \ldots, k\}$, define $n_i = |\L(T_i)|$.  Let us write $T_i \rightarrow T_j$ if there exists a transfer edge $(u, v) \in \tredge_T$ satisfying $u \in V(T_i), v \in V(T_j)$.  It is easy to verify that there is exactly one transfer edge connecting
two distinct connected components of $\Th$, as otherwise, 
some vertex of some $T_j$ would have in-degree $2$ or more in $T$.
For each transfer edge $(u, v) \in \tredge_T$, where $u \in V(T_i)$ and $v
\in V(T_j)$, we can bound the number of added triplets by
$\binom{|\sT(u)|}{2} |\sT(v)|+ \binom{|\sT(v)|}{2}|\sT(u)| \leq n_i^2 n_j +
n_i n_j^2$. The total number of triplets considered is then
at most

\begin{align}
 \sum_{T_i \in \mathcal{T}} \sum_{T_j : T_i \rightarrow T_j} \left(n_i^2n_j + n_j^2n_i   \right)  
= & \sum_{T_i \in \mathcal{T}} n_i^2 \sum_{T_j : T_i \rightarrow T_j} n_j + 
\sum_{T_i \in \mathcal{T}} n_i \sum_{T_j : T_i \rightarrow T_j} n_j^2   \label{eq:App}\\
\leq & \sum_{T_i \in \mathcal{T}} n_i^2 \cdot n + \sum_{T_i \in \mathcal{T}} n_i \cdot n^2 \nonumber\\
= & n \sum_{T_i \in \mathcal{T}} n_i^2 + n^2 \sum_{T_i \in \mathcal{T}} n_i \nonumber\\
\leq & 2n^3 \in O(n^3) \nonumber.
\end{align}

In the latter approximation, we have used the fact that
distinct trees  $T_i$ and $T_j$ have  disjoint sets of leaf sets  (cf.\ \cite[Lemma 1]{nojgaard2018time}). Thus, 
$\sum_{T_i \in \mathcal{T}} n_i \leq n$ and $\sum_{T_j : T_i \rightarrow T_j} n_i \leq n$.
In summary, $\RT (T; t, \sigma)$ can be computed in $O(n^3)$ time.

Finally, note that if $(T; t, \sigma)$ is binary such that all inner vertices are labeled
as speciation $\spec$ and for all two distinct leaves $x,y\in L(T)$ we have $\sigma(x)\neq \sigma(y)$, 
then $|\RT (T; t, \sigma)|=\binom{n}{3}\in O(n^3)$. Hence, the boundary
$O(n^3)$ can indeed be achieved. 
\end{proof}

We note that it is not too difficult to show that Algorithm~\ref{alg:GoodSplit} can be implemented to take time $O(n^4)$.
Indeed, each line of the algorithm can be verified to take time $O(n^3)$, including the construction of $\ats{T}{S}$ 
(which takes time $O(n \log n)$, as shown in \cite[Thm.\ 6]{nojgaard2018time})
and the construction of the $G((T;t,\sigma), S, x)$ graphs (by checking every triplet of $\RT (T; t, \sigma)$ for (C1) edges, and 
for every pair $a,b$ of vertices, checking every member of $V(T) \cup E(T)$ for (C2), (C3) or (C4) edges).
Since the main \textit{while} loop is executed $O(n)$ times, this yields complexity $O(n^4)$.
However, with a little more work, this can be improved to cubic time algorithm.

As stated in Lemma~\ref{lem:comp-RT}, we may have $\RT (T;t,\sigma)
\in \Theta(n^3)$. Thus, any hope of achieving a better running time would
require a strategy to reconstruct a species tree $S$ without reconstructing
the full triplet set $\RT (T;t,\sigma)$ that $S$ needs to display. 
It may possible that 
such an algorithm exists, however, this would be a quite  
surprising result and may require a completely different approach.

\begin{theorem}\label{thm:algo}
Algorithm~\ref{alg:GoodSplit} correctly computes a time-consistent binary species tree for $(T;t,\sigma)$, if one exists, and
can be implemented to run in time $O(n^3)$, where $n = |\L(T)|$.  In particular, 
for every $O(n^3)$-time algorithm that needs to compute  $\S(T;t,\sigma)$ this boundary is tight. 
\end{theorem}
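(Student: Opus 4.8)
The plan is to establish three things in turn: that Algorithm~\ref{alg:GoodSplit} correctly decides the existence of, and constructs, a time-consistent species tree; that it can be made to run in $O(n^3)$ time; and that this bound is tight for any method that materialises $\RT (T;t,\sigma)$.

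For \emph{correctness}, I would argue that Algorithm~\ref{alg:GoodSplit} is a concrete, efficient realisation of the scaffold Algorithm~\ref{alg:gtcRefinement}, whose correctness is already guaranteed by Theorem~\ref{thm:algo1}. Initialising $S$ with the star tree on $\sigma(\L(T))$ turns the question ``does a time-consistent species tree exist?'' into the GTC instance $((T;t,\sigma), S)$, since every species tree refines the star tree. Throughout the loop I would maintain the invariant that $S$ is almost binary and agrees with $\RT (T;t,\sigma)$: the star tree agrees trivially (it displays no triplet), and every good split refinement preserves agreement by definition, so the hypothesis of Theorem~\ref{thm:gsp-disco} is always met. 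The key reduction is that the search for a good split refinement follows the structural results exactly: by Lemma~\ref{lem:good-ancestors} any cherry admitting a good split refinement has all its strict ancestors in $\M(Q)$, so the \textbf{for}-loop loses nothing by restricting to such cherries; and by Theorem~\ref{thm:gsp-disco} such a cherry $x$ admits a good split refinement if and only if $G((T;t,\sigma),S,x)$ is disconnected, any disconnected bipartition furnishing one. Hence the loop finds a good split precisely when one exists. Each good split adds exactly one binary internal vertex, so the loop halts after at most $|\sigma(\L(T))|-2 = O(n)$ iterations. When it halts with $S$ non-binary, no good split refinement exists and Proposition~\ref{prop:no-solution} certifies that no solution exists, so returning ``no solution'' is correct.

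The \emph{main obstacle} is the binary-output case: Algorithm~\ref{alg:gtcRefinement} checks acyclicity of $\ats{T}{S}$ at a binary leaf of the recursion, whereas Algorithm~\ref{alg:GoodSplit} returns $S$ the moment it becomes binary without such a test. I therefore have to show that \emph{whenever a sequence of good split refinements turns the star tree into a binary tree $S^*$, the graph $\ats{T}{S^*}$ is already acyclic}. My plan combines two ingredients. On one side, Theorem~\ref{thm:equiv-refinement} shows that each good split refinement preserves solution-existence, so by the contrapositive of Proposition~\ref{prop:no-solution} a solvable instance always admits a further good split until binarity, whence the resulting binary $S^*$ is itself a solution and hence acyclic. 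On the other side, to rule out reaching a binary but cyclic $S^*$, I would track the maximal topological sort: by Lemma~\ref{lem:q-stays-topo} the set $\M(Q)$ never shrinks along the refinement sequence, and by Claim~2 in the proof of Theorem~\ref{thm:equiv-refinement} each split cherry enters $\M(Q)$ once it is resolved. Running the ``progress'' direction of the case analysis used in Claims~1--3 of Proposition~\ref{prop:no-solution}, but now exploiting that no non-binary cherry remains, I would conclude that every vertex of $V(S^*)$ and then, via the A1/A3/A4 in-neighbour analysis, every vertex of $V(T)$ lies in $\M(Q)$, so that $\M(Q)=V(\ats{T}{S^*})$ and, by Lemma~\ref{lem:Qproperty}, $\ats{T}{S^*}$ is acyclic. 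This is the delicate step, since it is exactly where the efficient algorithm departs from the verified scaffold and where the interplay between the growth of $\M(Q)$ and the four edge types must be controlled.

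For the \emph{running time}, the dominant cost is the single computation of $\RT (T;t,\sigma)$, which is $O(n^3)$ by Lemma~\ref{lem:comp-RT}. The main \textbf{while} loop executes $O(n)$ times, and within each pass $\ats{T}{S}$ has only $O(n)$ vertices and edges, so it can be rebuilt in $O(n\log n)$ time (Theorem~\ref{thm:timeCons-S}, cf.\ \cite[Thm.\ 6]{nojgaard2018time}) and a maximal topological sort found in $O(n)$ time. Building the graphs $G((T;t,\sigma),S,x)$ is the only step that is not obviously linear, and a naive recomputation yields the $O(n^4)$ bound noted in the text. To reach $O(n^3)$ I would show that the total cost of maintaining all good-split-graphs over the whole run is $O(n^3)$, by charging the (C1)-edges to the $O(n^3)$ triplets of $\RT (T;t,\sigma)$ (each triplet being examined against the unique current cherry that is an ancestor of its three species) and the (C2)--(C4)-edges to the $O(n)$ members of $V(T)\cup E(T)$ against $O(n^2)$ species pairs, reusing the $\sT$-sets and $\hmu$-values computed once per pass. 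Finally, for \emph{tightness}, Lemma~\ref{lem:comp-RT} exhibits instances (binary, all-speciation, injective $\sigma$) with $|\RT (T;t,\sigma)| = \binom{n}{3} = \Theta(n^3)$; since Algorithm~\ref{alg:GoodSplit} computes $\RT (T;t,\sigma)$ explicitly and any species tree must display all of it, any algorithm that materialises $\S(T;t,\sigma)$ needs $\Omega(n^3)$ time, matching the upper bound.
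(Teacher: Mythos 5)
Your proposal follows the paper's own proof quite closely in all three parts: the correctness argument assembled from Lemma~\ref{lem:good-ancestors}, Theorem~\ref{thm:gsp-disco}, Proposition~\ref{prop:no-solution} and Theorem~\ref{thm:equiv-refinement}; the $O(n)$ bound on the number of \emph{while}-iterations with $O(n\log n)$ work for $\hmu$, $\ats{T}{S}$ and $Q$ per pass; the amortized $O(n^3)$ maintenance of the good-split graphs (your charging scheme is exactly the paper's per-pair bookkeeping with the sets $l_1,\dots,l_4$); and the tightness claim via the all-speciation instance of Lemma~\ref{lem:comp-RT}. The one place where you genuinely depart is the binary-termination case. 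The paper disposes of it in a single sentence (``repeated application of Theorem~\ref{thm:equiv-refinement}''), which — exactly as your first ingredient does — settles the case where the initial star-tree instance is solvable: solvability propagates along the chain of good splits, good splits keep existing by the contrapositive of Proposition~\ref{prop:no-solution}, and a binary tree heading a solvable instance is itself the solution. Your second ingredient, proving directly that $\M(Q)=V(\ats{T}{S^*})$ for \emph{any} binary tree reached by good splits, goes beyond what the paper argues, and as sketched it does not go through: the good-split condition only controls the in-neighbours of the cherry $x$ being split, never those of the newly created children $x_1,x_2$. A binary cherry created in mid-run, or a duplication/transfer vertex $u$ with $|\sT(u)|\le 2$ whose species are never separated by any later split (because they end up as siblings), is not forced into $\M(Q)$ by conditions (C2)--(C4), so ``every vertex of $V(S^*)$ and then every vertex of $V(T)$ lies in $\M(Q)$'' cannot be obtained by simply rerunning Claims~1--3 of Proposition~\ref{prop:no-solution}. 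This extra ingredient is only needed to certify the output on unsolvable instances that nevertheless admit good splits all the way down to a binary tree — a case the paper's one-line appeal to Theorem~\ref{thm:equiv-refinement} also leaves implicit — so you should either carry that argument out in full or, more economically, observe that appending the $O(n)$-time acyclicity and agreement test from Line~\ref{line:GTC-properties} of Algorithm~\ref{alg:gtcRefinement} to the binary branch of Algorithm~\ref{alg:GoodSplit} restores correctness verbatim without affecting the $O(n^3)$ bound.
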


\begin{proof}
We first prove the correctness of the algorithm. Algorithm~\ref{alg:GoodSplit} takes as input a labeled gene tree $(T;t,\sigma)$. First $\S(T;t,\sigma)$ is computed
	and the star tree $S$ (which clearly agrees with $\S(T;t,\sigma)$) will be furthermore refined.
	Moreover, $S$ contains at this point of computation only one cherry, namely the root $r$ of $S$, and $G((T;t,\sigma),S,r)$
	is computed. Now, in each step of the \emph{while}-loop it is first checked if $S$ is non-binary and if in one of the 
	previous steps a good split refinement has been found. In this case, it is first checked (\emph{for}-loop)
	if there are non-binary cherries of the current tree $S$ for which all strict ancestors are contained in $\M(Q)$
	with $Q$ being the maximal topological sort of $\ats{T}{S}$. If this is not the case for all non-binary cherries, 
	the \emph{while}-loop terminates according to Lemma \ref{lem:good-ancestors} and the algorithm correctly outputs 
	\emph{``No time-consistent species tree exists''}. Contrary, if there is a non-binary cherry $x$ for which all strict ancestors are contained in $\M(Q)$, 
	then it is checked whether we have not found already a good split for $S$ and if $G((T;t,\sigma),S,x)$ is disconnected. 
	In this case, we can apply Theorem \ref{thm:gsp-disco} to conclude that there is a good split refinement for $S$ at $x$
	which is computed in the subsequent step. If, however, for all non-binary cherries $G((T;t,\sigma),S,x)$ is connected, 
	the algorithm correctly outputs 	\emph{``No time-consistent species tree exists''} according to Theorem \ref{thm:gsp-disco}.
	Finally, if in each step of the  \emph{while}-loop we have found a good split refinement and $S$ does not contain
	a non-binary cherry, then $S$ must, by construction, be binary. In this case, repeated application of Theorem \ref{thm:equiv-refinement}
	shows that the final binary tree $S$ is a solution to the underlying GTC instance and 
	the algorithm correctly returns $S$. Thus, Algorithm~\ref{alg:GoodSplit} correctly computes a time-consistent binary species tree for $(T;t,\sigma)$, if one exists.

We next analyze the running time of the algorithm.
Let $\Sigma = \sigma(\L(T))$ be the set of species.  We will frequently use the fact that $|\Sigma| \leq n$.
The main challenge in optimizing this algorithm is to be able to efficiently construct and update the $G((T;t,\sigma), S, x)$ graphs.  We will save this analysis for the end of the proof, and will ignore the time spent on graph updates for now.

We will assume that $\sT(u)$ is computed and stored for each $u \in V(T)$.  As argued in the proof of Lemma~\ref{lem:comp-RT}, 
this can be done in time $O(n^2)$.
Also by Lemma \ref{lem:comp-RT}, the triplet set $\RT (T; t, \sigma)$ can be computed in  $O(n^3)$ time.
Since every iteration of the main \textit{while} loop adds a new binary vertex
in $S$, the loop will be executed $O(n)$ times (since a binary tree on $|\Sigma| \leq n$
leaves has $O(n)$ internal vertices).  
By \cite[Lemma 3]{nojgaard2018time}, computing $\hmu_{T,S}$ can be done in time O($n\log(|\Sigma|))=O(n\log(n))$.  By \cite[Thm.\ 6]{nojgaard2018time}, the auxiliary graph $\ats{T}{S}$ can be computed in $O(|V(T)|\log(|V(S)|))$ time. 
Since $O(|V(T)|) = O(n)$ and $O(|V(S)|) = O(n)$, the latter task can be done in $O(n\log(n))$ time. 
Construction of $Q$ can be done in time
$O(|V(\ats{T}{S})| + |E(\ats{T}{S})|) = O(n)$ using the techniques of
Kahn~\cite{Kahn:62} and by observing that the edges in $E(\ats{T}{S})$ 
cannot exceed $|E(T)|+|E(S)|=O(n)$.

In each pass of the main \textit{while} loop, we iterate through $O(n)$
non-binary cherries. Let $c_1, \ldots, c_k$ be the non-binary cherries of
$S$, assuming that each auxiliary $G((T;t,\sigma), S, c_i)$ graph is
already pre-computed.  Since $c_1, \ldots,
c_k$ are cherries of $S$, the sets in $\{\L(S(c_1)), \L(S(c_2)), \ldots
\L(S(c_k))\}$ must be pairwise disjoint. Denoting $n_i = |\L(S(c_i))|$, $1
\leq i \leq k$, we thus observe that $\sum_{i = 1}^{k}n_i \leq n$. In the worst
case, we go through every cherry and check connectedness in time $O(n_i^2)$
on each graph $G((T;t,\sigma), S, c_i)$ via ``classical'' breadth-first search.  Thus in one iteration of the
main \textit{while} loop, the total time spent on connectedness
verification is $O(\sum_{i = 1}^{k}n_i^2) = O(n^2)$. When we apply a split
refinement, we compute $\hmu_{T,S}(u)$ for all $u\in V(T)$, $\ats{T}{S}$ and $Q$ at most once per
\textit{while} iteration, each operation being feasible in time $O(n \log n)$.
To be more precise, as soon as we have found a good split refinement we put 
\texttt{Has\_GoodSplit}=TRUE. Hence, the \textit{if}-condition (Line \ref{first-if}) will then not be satisfied, 
and we will not recompute the values $\hmu_{T,S}(u)$, $\ats{T}{S}$ and $Q$ 
again for the remaining non-binary cherries $x$ of $S$ within the \textit{for}-loop (Line \ref{for-loop}).
As there are $O(n)$ iterations, the time spent on operations other than
graph construction and updates is $O(n^3)$.

Let us now argue that the total time spent on the auxiliary $G((T;t,\sigma), S, x)$ graph updates can be implemented to take time $O(n^3)$.
To this end, we maintain a special data structure that, 
for each 2-element subset $\{a,b\} \subseteq \Sigma$, remembers the members of $\Sigma \cup V(T)$ that may cause $ab$ to be an edge in the auxiliary graphs.  
We describe this in more detail.  For a certain species tree $S$, we say that $a,b \in \L(S)$ are \emph{siblings} if $a$ and $b$ have the same parent in $S$.
For any two siblings $a,b$ of $S$, define
\begin{align*}
l_1(a,b) &= \{c \in \Sigma : c \mbox{ is a sibling of $a$ and $b$, and } ab|c \in \RT (T;t,\sigma)\} \\
l_2(a,b) &= \{u \in V(T) : \exists v \in V(T) \mbox{ such that $(u,v)$ satisfies (C2) }\} \\
l_3(a,b) &= \{u \in V(T) : \exists v \in V(T) \mbox{ such that $(u,v)$ satisfies (C3), with $x$ the parent of $a,b$ }\} \\
l_4(a,b) &= \{u \in V(T) : \mbox{$u$ satisfies (C4) }\}
\end{align*}

The $l_i$ sets are first initialized for $S$ being a star tree and then, subsequently updated 
	 based on the current refined tree $S$. 
		We will show below that the initial step to construct
		the $l_i$ sets for the star tree $S$ takes $O(n^3)$ time. 
		As we shall see below, when refining $S$ to $S'$ at some cherry $x$, only those elements remain in the $l_i$ sets 
				such that  (C1), (C2), (C3), resp., (C4) is satisfied for $a,b\in L(S(x))$  if and only if 
				$l_1(a,b), l_2(a,b), l_3(a,b)$, resp., $l_4(a,b)$ is non-empty.
Therefore, to decide whether there is an edge $ab$ in $G((T;t,\sigma), S, x)$ 
		it suffices to check whether one (of the updated) $l_i(a,b)$, $1\leq i\leq 4$
		 is non-empty, a task that can done in constant time for each two vertices $a,b\in \Sigma$.
		Hence, in each step we can construct $G((T;t,\sigma), S, x)$ in $O(n^2)$ time. 
	 	Thus, to show that the entire procedure runs in $O(n^3)$ time, we have to prove that we can update the $l_i$ sets in $O(n^3)$ time.

We show how to maintain these four sets for each pair of siblings as $S$ undergoes split refinements, starting with the initial star tree $S$.
The set $l_1(a,b)$ can be constructed in time $O(n^3)$ for all $a,b$ by iterating through $\RT (T;t,\sigma)$ once, and each $l_i(a,b)$, $i \in \{2,3,4\}$ can be constructed in time $O(n^3)$ by first constructing $\ats{T}{S}$ with its maximal topological sort $Q$ and, for each $a,b$ pair, checking every vertex and edge of $T$ for conditions (C2), (C3) and (C4).  It is easy to see that each condition can be checked in constant time per edge or vertex. 

Now assume inductively that $l_1,l_2,l_3$ and $l_4$ are known for each pair
of siblings of $S$. Instead of reasoning on the time to update these sets
during a particular iteration of the \textit{while} loop, we will argue on
the total number of times we must update each $l_i$ set during the whole
execution of the algorithm. Our aim is to show that each $l_i$ set requires
$O(n^3)$ updates in total, i.e. summing over all iterations of the loop.
When we apply a split refinement to $S$ at some cherry $x$, we create the
tree $S'$ on which we add the two children $x_1$ and $x_2$ under $x$. We
must update the corresponding sets. Let $X_1 = \L(S'(x_1))$ and $X_2 =
\L(S'(x_2))$. For $a,b \in X_1$, we may need to remove $c$ from $l_1(a,b)$
if $c \in X_2$ since it is not a sibling of $a$ and $b$ anymore. Thus after
a split refinement, for each $a,b \in X_1$ and each $c \in X_2$, we remove
$c$ from $l_1(a,b)$ if present (and we do the same for each $a'b' \in X_2,
c' \in X_1$). Therefore, each time that a pair $a,b \in \Sigma$ gets
separated from some $c \in \Sigma$ during the species tree construction, we
need $O(1)$ time to remove $c$ from $l_1(a, b)$. Importantly, this occurs
at most once during the whole algorithm execution. Therefore, in total we
spend time $O(1)$ one to update $l_1$ for each distinct $a,b,c \in \Sigma$,
and so the total time spent on updating $l_1$ is $O(n^3)$.

To update $l_2$ and $l_4$, we note that these two sets only depend on $t, \sT$
and $Q$, and only $Q$ is not fixed by the input. After a split refinement and
computing the new maximal topological sort $Q'$ of $\ats{T}{S'}$, by
Lemma~\ref{lem:q-stays-topo}, we only add new elements to $Q$ (i.e. if $\M(Q)
\subseteq \M(Q')$). Thus for each $q \in \M(Q') \setminus \M(Q)$, we must simply 
remove $q$ from $l_2(a, b)$ and $l_4(a,b)$, if present, for each $a,b \in \Sigma$.
This takes time $O(n^2)$ each time that a new vertex is added to a maximal topological sort after a split refinement.
Since, during the execution of the algorithm,
each vertex of $V(T) \cup V(S)$ is added to $Q$ at most once, this occurs
$O(n)$ times. It follows that the total time spend on updating $l_2$ and $l_4$ is $O(n^3)$. 

Finally, $l_3$ depends on $\hmu_{T,S}$ but not on $Q$.  After we apply a split refinement at $x$, transforming $S$ to $S'$,
$\hmu_S(u) \neq \hmu_{S'}(u)$ is only possible if $\hmu_{S}(u) = x$, in which case $\hmu_{S'}(u) \in \{x, x_1, x_2\}$.  For each $u$ such that $\hmu_{S}(u) = x$, we remove $u$ from $l_3(a,b)$ where necessary.  More precisely, if $\hmu_{S'}(u) = x$, we remove $u$ from $l_3(a,b)$ for each $a,b \in X_1 \cup X_2$ if present, since $x$ is not the parent of any two leaves $a,b$ now.  
If $\hmu_{S'}(u) = x_1$, we remove $u$ from $l_3(a,b)$ for each $a,b \in X_2$, and if $\hmu_{S'}(u) = x_2$, we remove $u$ from $l_3(a,b)$ for each $a,b \in X_1$.  One can see that for each $u \in V(T)$, we remove $u$ from $l_3(a,b)$ at most once for each distinct $a,b \in \Sigma$, and thus a total of $O(n^3)$ is spent on updating $l_3$ as well.

To summarize, the $l_i$ sets can be kept up-to-date after each split refinement in total time $O(n^3)$.
Since the other operations also tale time $O(n^3)$, the complete algorithm also takes $O(n^3)$ time.

Finally, among all algorithms that compute $\RT (T; t, \sigma)$, Lemma \ref{lem:comp-RT} implies that the boundary $O(n^3)$ is tight.
\end{proof}

\section{Summary and Outlook}

Here, we considered event-labeled gene trees $(T;t,\sigma)$ that contain speciation, duplication and HGT. 
We solved the Gene Tree Consistency (GTC) problem, that is,  
we have shown how to decide whether a time-consistent species tree $S$ for a given gene tree $(T;t,\sigma)$ exists 
and, in the affirmative case, how to construct such a binary species tree in cubic-time. 
Since our algorithm is based on the informative
species triples $\S(T;t,\sigma)$, for which  $\RT (T;t,\sigma)\in \Theta(n^3)$ may possible, there is no non-trivial way to
improve the runtime of our algorithm. Our algorithm heavily relies on the structure of an auxiliary graph $\ats{T}{S}$
to ensure time-consistency and good split refinements to additionally ensure that the final tree $S$ displays  $\S(T;t,\sigma)$.

This approach may have further consequence in phylogenomics. Since event-labeled gene trees $(T;t,\sigma)$
can to some extent directly be inferred from genomic sequence data, our method allows to test whether $(T;t,\sigma)$
is ``biologically feasible'', that is, there exists a time-consistent species tree for $(T;t,\sigma)$. 
Moreover, our method also shows that all information about the 
putative history of the species is entirely contained within the gene trees $(T;t,\sigma)$ and
thus, in the underlying sequence data to obtain  $(T;t,\sigma)$.

We note that the constructed binary species tree is one of possibly many other time-consistent species trees for $(T;t,\sigma)$. 
In particular, there are many different ways to choose a good split refinement, each choice may lead to a different species tree. 
Moreover, the reconstructed species trees here are binary. This condition might be relaxed and one may obtain further species tree by 
``contracting'' edges so that the resulting non-binary tree is still a time-consistent species tree for  $(T;t,\sigma)$. This eventually
may yield so-called ``least-resolved'' time-consistent species trees and thus,  trees that make no further assumption on 
the evolutionary history than actually supported by the data.

As part of further work, it may be of interest to understand in more detail, if our approach can be used to 
efficiently list all possible solutions, that is all time-consistent species trees for $(T;t,\sigma)$.

\bibliographystyle{plain}
\bibliography{main}

\end{document}